\let\footnote=\endnote
\newtheorem{observation}{Observation}
\newtheorem{examplee}{Example}
\newcounter{subexample}[examplee]
\renewcommand{\thesubexample}{\theexample\alph{subexample}}
\newenvironment{subexample}{
        \refstepcounter{subexample}
        \par\noindent
        \textbf{\upshape Case \thesubexample.}%
}{}
\newcommand{\Rplus}{\ensuremath{\mathbb{R}_+}}
\definecolor{myblue}{RGB}{25,25,129}
\definecolor{mygreen}{cmyk}{0.82,0.11,1,0.25}
\begin{document}


\RUNAUTHOR{Conitzer et al.} 

\RUNTITLE{Multiplicative Pacing Equilibria in Auction Markets}

\TITLE{Multiplicative Pacing Equilibria in Auction Markets}

\ARTICLEAUTHORS{%
\AUTHOR{Vincent Conitzer\footnote{This work was done while the author was visiting Facebook Core Data Science.}}
\AFF{Econorithms, LLC and Duke University, \EMAIL{vincent.conitzer@duke.edu}}
\AUTHOR{Christian Kroer\footnote{Most of this work was done while the author was at Facebook Core Data Science.}}
\AFF{IEOR Department, Columbia University, \EMAIL{christian.kroer@columbia.edu}}
\AUTHOR{Eric Sodomka}
\AFF{Core Data Science, Facebook Inc., \EMAIL{sodomka@fb.com}}
\AUTHOR{Nicolas E. Stier-Moses}
\AFF{Core Data Science, Facebook Inc., \EMAIL{nicostier@yahoo.com}}
} 

\ABSTRACT{%
  Budgets play a significant role in real-world sequential auction markets such as those implemented by internet companies. To maximize the value provided to auction participants, spending is smoothed across auctions so budgets are used for the best opportunities. Motivated by a mechanism used in practice by several companies, this paper considers a smoothing procedure that relies on {\em pacing multipliers}: on behalf of each buyer, the auction market applies a factor between 0 and 1 that uniformly scales the bids across all auctions. Reinterpreting this process as a game between buyers, we introduce the notion of {\em pacing equilibrium}, and prove that they are always guaranteed to exist. We demonstrate through examples that a market can have multiple pacing equilibria with large variations in several natural objectives. We show that pacing equilibria refine another popular solution concept, competitive equilibria, and show further connections between the two solution concepts. Although we show that computing either a social-welfare-maximizing or a revenue-maximizing pacing equilibrium is NP-hard, we present a mixed-integer program (MIP) that can be used to find equilibria optimizing several relevant objectives. We use the MIP to provide evidence that: (1) equilibrium multiplicity occurs very rarely across several families of random instances, (2) static MIP solutions can be used to improve the outcomes achieved by a dynamic pacing algorithm with instances based on a real-world auction market, and (3) for the instances we study, buyers do not have an incentive to misreport bids or budgets provided there are enough participants in the auction.

}%


\KEYWORDS{ad auctions, repeated auctions, game theory, Nash equilibrium, market equilibrium} 
\HISTORY{This paper was first submitted on Feb 6, 2019 and has been with the authors for
1 year for 2 revisions.}

\maketitle

%


\section{Introduction}

In the last decade, auction markets have become a pervasive mechanism used by internet companies to match buyers to their target audience at the right price.
The mechanisms put in place select users matching a targeting rule that buyers specify, allowing them to bid for selected events of interest such as an impression, a click, a conversion or a video view.
This results in a winning buyer who is given the chance to show an impression and potentially generate the event of interest.
In these auction markets, buyers typically specify a budget that can be spent over a certain sequence of auctions, as well as valuations for the events of interest.
It is a responsibility of the mechanism to guarantee that the total payments of buyers do not exceed the budgets they specified.
The simplest way to take budgets into account is to bid as if there were no budget constraint, until the buyer runs out of budget. At that time, the buyer effectively stops participating in the auctions.
Unfortunately, this simple procedure is clearly not optimal: if the buyer is able to anticipate that the budget will run out well before the time period is over, it makes sense to bid less aggressively at earlier stages to be able to participate in later auctions. These later auctions, after all, may have some of the best opportunities for the buyer since, for example, they may provide the same value at a lower price.
Figure~\ref{fig:budget} shows an example
in which a buyer has a \$5 value for winning and a \$10 budget. Here, a Vickrey (second price) auction is used at each step. We assume, for simplicity, that all bids are per impression.
As shown on the left, the buyer is able to possibly win any one of the auctions for that value, but can only win the first 6~auctions before running out of budget. The buyer receives a total value of $6\times\$5=\$30$ at a cost of $\$10$, for a utility of $\$20$. Instead, as shown on the right,
the buyer can win more auctions and get a higher utility bidding $\$2$. 
The buyer wins 7 auctions for a total value of
$7\times \$5 = \$35$ at a cost of $\$10$, for a utility of $\$25$.

The previous situation motivates that auction market mechanisms more actively take budgets into account. One possibility is to perform {\em probabilistic throttling}, which consists of tossing an appropriately weighted coin for each auction. The outcome determines whether a bid is actually placed into the auction on the buyer's behalf. Selecting each probability appropriately, the buyer's budget will run out just around the end of the bidding period. Doing this for all buyers results in the process being more stable over time---as opposed to having many buyers early on and then auctions becoming thinner as buyers run out of budget, as shown on the left side of the figure.
Still, this approach also has its drawbacks. Buyers will not be considered in some auctions purely because of a coin toss, and the missed opportunities may be the ones where the buyer could have won at lower cost. Thus, this alternative may be suboptimal for buyers as well.

Another solution is to appropriately shade bids on the buyers' behalf.
(Again, for simplicity, consider a buyer who is bidding on a per-impression basis; appropriate modifications can be made for a buyer bidding on a per-click basis.)
When it appears that simply bidding the valuation $v_i$ will result in the budget being spent before the period is over, the mechanism can simply shade down each bid to $\alpha_i v_i$, where $\alpha_i \in [0,1]$ is referred to as a {\em pacing multiplier}.
An optimal multiplier will make the budget run out exactly at the end of the period, unless the buyer would not run out of budget even with $\alpha_i=1$.

\begin{figure}
\centerline{
\begin{tabular}{p{0.4\columnwidth} p{0.4\columnwidth}}
  \vspace{0pt} \includegraphics[width=1.85in]{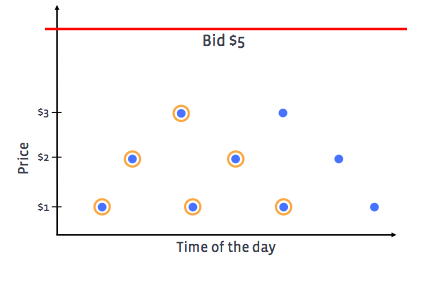} &
  \vspace{10pt} \includegraphics[width=1.85in]{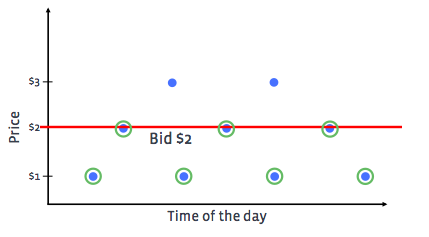}
  \\[-5mm]
\end{tabular}
}
\caption{Relation bids vs.\ total value for a budget of \$10}
\label{fig:budget}
\end{figure}

Motivated by the multiplicative mechanism which is used by several internet auction markets, we set out to study the details of the associated static game, which has not been the subject of a prior methodical study. One of the reasons that justifies its widespread use is that multiplicative pacing allows a buyer to participate in more auctions and win at lower prices, compared to probabilistic throttling.
Furthermore, \citet{balseiro2021budget} conclude that multiplicative pacing is buyer-optimal out of various options they study.

To motivate the interpretation of the mechanism as a game, note that each buyer is affected by the other buyers' multipliers. For instance, for two buyers $i$ and $j$, if buyer $i$'s multiplier $\alpha_i$ goes down, this may result in buyer $j$ winning more impressions, so $\alpha_j$ needs to go down too. Or, alternatively, it may result in buyer $j$ having to pay less for the impressions she is winning (because $j$ was setting the price for $i$, given that we use a second price auction), so that $\alpha_j$ can go up. Because the effect can work in both directions, and buyer $i$ is similarly affected by $\alpha_j$, it is not obvious that there must exist a vector of multipliers for all buyers that is mutually optimal.
The first question we address is whether there is a vector of multipliers that are {\em simultaneously} optimal for all buyers in a market with multiple {\em single-slot second price auctions}. To be optimal, all multipliers should be set so that each buyer either spends the entire budget or does not shade the bid. The choice of the vector of pacing multipliers can be viewed as the {\em equilibrium} of a {\em one-shot game} in which each $\alpha_i$ is a best response to all the other $\alpha_j$. We call such a vector of pacing multipliers, along with their corresponding allocation, a \emph{second price pacing equilibrium} (SPPE). Although in practice the multipliers are computed by the auction market on behalf of the buyers, this can still be viewed as a game since buyers can in principle change bids themselves to adjust the spending rate and even opt out of the automatic shading.

Our notion of equilibrium only stipulates that pacing multipliers should be optimal given the auction prices and winning bids, and it is thus not the case that our equilibria are equivalent to Nash equilibria in the one-shot game. In order to specify what a Nash equilibrium would be we would additionally need to specify what happens under deviations: a single buyer changing their pacing multiplier could change prices in auctions they are not winning, thus causing other buyers to exceed their budget. If the game is such that this budget exhaustion does not cause the budget-exhausted buyer to be dropped from some auctions (and thus this does not cause reduced prices) then our equilibria constitute Nash equilibria, but if they are dropped from some auctions then there may be an incentive to cause such dropping in order to reduce prices.

We prove that an equilibrium always exists (which does not follow
from existing results due to discontinuities when there are ties or when budgets are exceeded) and
that a pacing game can admit multiple equilibria that are not outcome
equivalent, which leads to equilibrium selection issues. We compute equilibria
with respect to commonly-studied objective functions such as social welfare and
revenue to provide insights on the gaps between best and worst equilibria. 
We show examples where this gap can be quite large. Then, we study the complexity of finding equilibria, and provide a {\em mixed-integer program} (MIP) to find them.
Using the MIP we study the equilibrium-selection issues empirically, and find that equilibrium multiplicity is rarely an issue across both randomly-generated and real-world instances.
We complement the MIP with best-response and regret-based dynamics as alternative computational tools for finding equilibria.

We go on to show that pacing
equilibria are a refinement of {\em competitive equilibria}. A competitive
equilibrium consists of good prices and allocations such that each buyer
obtains a bundle that she considers optimal given those prices, and all goods
with positive prices are completely allocated. We show that every pacing
equilibrium is also a competitive equilibrium. Moreover, for every competitive
equilibrium, it is possible to add some non-winning buyers so that
it becomes a pacing equilibrium. We exhibit an example in
which the unique pacing equilibrium is not revenue-minimizing among competitive
equilibria, i.e., there is another competitive equilibrium with lower revenue.
This, in combination with the previous result, implies revenue-nonmonotonicity
in the bids, i.e., additional bids can reduce the revenue of pacing equilibria.

The (near) equivalence between our pacing equilibrium and competitive equilibrium leads to another motivation of our work: since real auction markets happen dynamically over time, it is not necessarily clear what the resulting allocations and prices will look like.
However, ex-post one may hope that the allocations and prices roughly constitute a competitive equilibrium.
This would imply several important properties such as envy-freeness, Pareto efficiency, and that market-clearing prices were used.
Our paper shows that such a competitive equilibrium is achievable by having the seller conduct second price auctions, and letting the buyers (or proxy bidders) use multiplicative pacing.
This lends support to the approach often taken in real auction markets, where a proxy bidder attempts to identify the correct pacing multiplier over time via some adaptive control algorithm.

Since there are many unknowns in real-world auction markets (e.g., auction
participants, user visits, resulting prices, event realizations, etc.),
practical mechanisms learn the optimal multipliers by dynamically adjusting
them using forecasts of when the budget will run out. In
our theoretical model, we sidestep the issue of dynamically adjusting the
multipliers, and consider the limit case in which the auctioneer can perfectly
predict the impressions that will arrive.
Although the one-shot game assumes away the stochastic and dynamic elements,
the results we obtain for this limit case have clear implications for real-world auction markets.
To address that, we investigate an adaptive pacing setting, and show
that the regret-based adaptive pacing algorithm of \citet{balseiro2017dynamic}
finds an allocation that is close to the solution of our MIP.
Using realistic instances inspired by auctions on the internet, we find that
the outcome in the adaptive setting can be improved by seeding the adaptive
dynamics with the MIP solution, even though the MIP solves a static instantiation of
the time-varying auctions.

To create realistic instances for computational studies, we sample impressions from real auctions and generate a bipartite graph that encodes their structure. Subsequently, we cluster the graph to reduce its size
without losing the important competitive information that describes the auction market.
The procedure to create small instances that capture the intricacies of the market and seeding dynamic mechanisms with the resulting equilibria may pave the road to practical use of pacing equilibria in real-world markets, in addition to learning optimal multipliers using dynamics.
This observation motivated \citet{kroerApproxMarketEq} to study, in follow-up work, how to solve simpler representations of dense instances of competitive equilibrium problems, and how solutions to an approximation differ from the original ones.

We employ the MIP solution procedure to study incentive compatibility properties of the pacing mechanism studied here.
Generating ground-truth values and budgets for buyers, we compute pacing equilibria when they misrepresent their types.
Our study provides evidence that incentives to misreport bids and budgets are weak, provided that there are enough participants in the auctions. 

Finally, let us add a remark on the motivation of the paper. 
A number of real-world platforms started to use multiplicative pacing because they realized that individual pacing multipliers have attractive properties from the perspective of an individual buyer. 
The motivation was that by lowering bids in this way, utilities would generally be higher, since the benefits accrued when obtaining impressions, clicks or conversions is held constant. This, in turn, led to platforms implementing control mechanisms to learn those multipliers. 
Nevertheless, there was not any principled theory showing that this would lead to solutions that are, in aggregate, of high quality, or that this would lead to equilibrium points. (Even if, from an individual buyer perspective, pacing multipliers can be motivated via Lagrangian duality on the budget constraints. Note that the multiplicative pacing approach can be given a Lagrangian interpretation as follows. Consider the problem of optimal bidding in hindsight for an individual buyer with a budget constraint. A Lagrangian relaxation of the budget constraint leads to a simple optimization problem where the buyer wants to buy all goods that have positive valuation after accounting for the Lagrangian ``price.'' This solution is obtained by setting $\alpha_i = \frac{1}{1+\mu}$ and bidding $\alpha_i v_{ij}$ for every good $j$, where $\mu$ is the Lagrange multiplier on the budget constraint.) 

Our motivation when we started working on the present paper was to provide a framework, leaning on game-theoretic principles, to explain and analyze what the platforms had already done. This complements seminal research that already existed, but also intersects with research that was being done concurrently, as discussed in the literature review.
Our results contribute evidence that multiplicative pacing is an appropriate mechanism to manage budgets. Equilibrium multipliers are guaranteed to exist and the MIP we propose can be used to guide equilibrium selection so buyers can jointly maximize their utility by bidding consistently within their budgets. In addition, according to our computational study, the mechanism is approximately incentive compatible when auctions have enough participants. 

\paragraph{Presentation of results}

We start by framing our model with respect to the existing literature in Section~\ref{sec:related}.
Then, we introduce the pacing game and define our equilibrium concept in Section~\ref{sec:pacinggame}.
Section~\ref{sec:eqmanalysis} discusses the details of these equilibria, including existence, sensitivity and multiplicity, followed by a connection to competitive equilibria in Section~\ref{sec:competitive_equilibrium}.
Section~\ref{sec:hardness} presents results related to computability of equilibria including computational complexity, iterated best responses, and a MIP formulation.
Finally, we provide an empirical illustration through computational experiments in Section~\ref{sec:experiments}.
After describing the instances we consider, we study the scalability of our MIP formulation, we evaluate uniqueness empirically, we explore how robust pacing equilibria are to misreporting true values and budgets, and finally we put the pacing equilibrium concept in perspective by evaluating it through a dynamic algorithm that incorporates time into the model.
We present some final thoughts in Section~\ref{sec:conclusion}.
We refer the reader to the e-companion of this article which includes missing proofs, additional discussion, model tweaks, and further examples and experiments.

\section{Related work}\label{sec:related}

There is a large literature on casting the delivery of online advertising under budget constraints as a centralized matching problem that assigns advertisers to impression opportunities, rather than taking the perspective of auctions and strategic behavior as we do.
\citet{mehta2007adwords} considers an online setting and introduce an algorithm that has a competitive ratio of $1-1/e$ for revenue maximization when the volume and sequence of queries is unknown.
\citet{Abrams2007optimal} investigate a linear programming approach based on column generation, where each column is a slate of ads that can be considered for an impression opportunity, and optimizes efficiency or revenue while controlling advertiser spend within the time horizon. Considering slates allows the model to price according to GSP.
Additional papers that generalize and extend these results include \citet{feldman2010online,devanur2011near,bhalgat2012online}.
More recently, \citet{asadpour-concisebid} considers the case in which there are several budget constraints that apply to different subsets of ads that an advertiser is running. Since managing bids at the auction level is difficult, they propose a system with concise bidding strategies that splits opportunities in clusters and bids uniformly for each of them. They develop a constant-factor approximation algorithm to optimize the strategies for a fixed number of clusters.

Several articles also consider a stochastic version of the matching problem \citep{goel2008online,devanur2009adwords,feldman2009online,feldman2010online,devanur2011near,mahdian2012online,devanur2012asymptotically,mirrokni2012simultaneous}, including embedding the matching in a game
\citep{charles2013budget}.
The approaches in these articles match supply and demand directly, rather than having all candidate ads compete to determine the winner through an auction, thus they are not directly applicable to our setting.

Another line of research considers how individual buyers should optimize their
budget spending across a set of auctions. This has
been cast as a form of knapsack problem
\citep{Feldman2007budget,borgs2007dynamics,chakrabarty2008budget}, a Markov
Decision Process~\citep{amin2012budget,gummadi2013optimal}, constrained
optimization~\citep{zhang2012joint,zhang2014optimal}, and optimal control~\citep{xu2015smart}.
\citet{gummadi2013optimal} consider a Markov Decision Process formulation of the budget optimization problem.
From the perspective of an advertiser with a budget constraint competing with a set of iid ({\em independently and identically distributed\/}) bids in a second price auction or GSP setting, the optimal policy is to multiplicatively shade the value of the impression.
\citet{agarwal2014budget} describe a practical implementation with experiments on LinkedIn advertising data.
\citet{ciocan-endogenousbudgets} take a fresh perspective by considering endogenous budget decisions arising from the cost of capital for advertisers, and allowing them to strategize the selection of ad bids and campaign budget. The platform then selects a winner and a runner-up that sets the price for each fractional allocation using a linear program.

The closest paper to ours is a groundbreaking paper by \citet{balseiro2021budget}, which was written independently. They define equilibria for a variety of budget management procedures, including multiplicative pacing, and prove the existence of equilibria. This is related to the existence result we provide; the main difference is that they assume independent and continuous valuation distributions and as a result they effectively assume away ties.
In contrast, we need to pay special attention to how ties are broken; specifically, how much of each good goes to each tied buyer.
These fractions are a fundamental part of what constitutes an equilibrium in our setting. 
(See the model's description in the next section for a discussion on how to interpret fractions.)
Ties in the bids are not a measure-zero event in our setting, because pacing parameters will often result in ties even for generic valuations. 
\citet{balseiro2021budget} introduce an iterative algorithm based on the buyers repeatedly best-responding that is not always guaranteed to converge to equilibrium and evaluate it in experiments. We show that in our setting such an algorithm can cycle, give an exact MIP formulation for finding optimal equilibria (and show that these problems are NP-hard), and evaluate it in experiments.

\citet{balseiro2017dynamic} study how an individual buyer might adapt their
pacing multiplier over time. They study a stochastic setting, where each buyer
has valuations drawn at each time step independently of time and the other
buyers (though they show that they can also support imperfect correlation
between buyers under certain technical conditions). They design
regret-minimizing algorithms for their setting, and show asymptotic optimality
under adversarial and stationary settings. Their setting is different from ours
in that it is dynamic, it requires independence of valuations, and it requires
the distribution of valuations to be absolutely continuous. For these reasons
their algorithm is not guaranteed to work in an adaptive variant of our setting.
Nonetheless, we show in our experimental setting that their algorithm can
achieve strong performance when combined with good initial pacing multipliers
from solutions to our MIP model.

\citet{balseiro2015repeated} investigate budget-management in auctions through a \emph{fluid mean-field} approximation, which leads to elegant existence results and closed-form descriptions of equilibria in certain settings. Again, this differs from our setting in that they effectively assume away ties by making distributional assumptions on the payments faced by the buyers.
That paper and \citet{balseiro2021budget} 
also assume that for a given impression, the valuation of each buyer is independent from that of other buyers. We require no such assumption.

Rather than trying to adapt variants of second price auctions through budget
smoothing, one can design entirely new mechanisms that handle budgets
directly~\citep{ashlagi2010position,bhattacharya2010incentive,dobzinski2012multi,goel2015polyhedral,goel2015clinching}.
However, for practical purposes we focus on methods that implement
second price auctions, as these tend to be preferred in real-world auction markets.

Finally, the relationship between auctions and competitive equilibria has
been explored in some other contexts. \citet{klemperer2010product} uses
competitive equilibrium as the allocation mechanism in \emph{product-mix
auctions}. Conversely, auction-based algorithms have been used for arriving at
competitive equilibrium in certain
contexts~\citep{garg2004auction,garg2006auction,kapoor2007auction,nesterov2018computation}.
In a follow-up to the present work, \citet{conitzer2018pacing} show that
\emph{first price pacing equilibria} can also be interpreted as competitive
equilibria, and in particular they correspond to solutions to
the Eisenberg-Gale convex program in the quasi-linear
case~\citep{eisenberg1959consensus,cole2017convex}.

\section{Pacing Games for Auction Markets}\label{sec:pacinggame}

In this section we define the pacing games that will be the focus of our work.
We consider a market in which a set of buyers $N=\{1,\ldots,n\}$ target a set of divisible goods $M=\{1,\ldots,m\}$. 
An instance of the game is defined by a set of valuations and budgets: 
each buyer $i$ has a valuation $v_{ij}\ge 0$ for each good $j$, and a budget $B_i>0$ that constrains the spend across all goods.
Figure~\ref{fig:example_games} shows two examples of pacing games, which will be solved later.

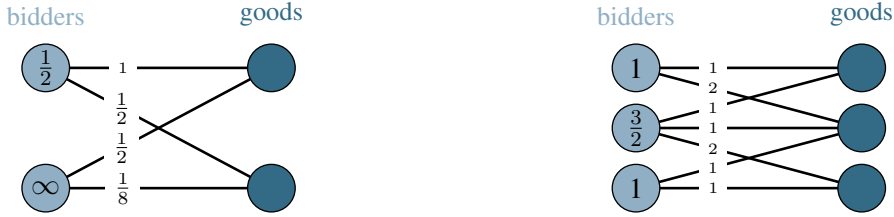
\begin{figure}
  \small
\definecolor{blush}{RGB}{208,150,131}
\definecolor{taupe}{RGB}{115,96,91}
\definecolor{mist}{RGB}{144,175,197}
\definecolor{stone}{RGB}{51,107,135}
  \begin{minipage}{.47\columnwidth}
    \centering
    \begin{tikzpicture}[]
      \GraphInit[vstyle=Normal]
      \SetVertexNoLabel
      \SetUpVertex[Math,Lpos=-180,LabelOut]
      \SetVertexNormal[FillColor=mist,OuterSep=0pt,TextColor=myblue]
      \grEmptyPath[form=2,x=0,y=0,RA=1.6,rotation=90,prefix=B]{2}
      \AssignVertexLabel{B}{$\infty$,$\frac{1}{2}$}
      \SetVertexShade[BallColor=stone,OuterSep=0pt]
      \SetUpVertex[Lpos=0]
      \SetVertexNormal[FillColor=stone,TextColor=mygreen]
      \SetVertexNoLabel
      \grEmptyPath[form=2,x=3,y=0,RA=1.6,rotation=90,prefix=G]{2}
      \SetUpEdge[lw=1pt,color=black]
      \Edge[label = $\frac{1}{8}$,style={pos=.30}](B0)(G0)
      \Edge[label = $\frac{1}{2}$,style={pos=.30}](B0)(G1)
      \Edge[label = $\frac{1}{2}$,style={pos=.30}](B1)(G0)
      \Edge[label = \tiny$1$,style={pos=.30}](B1)(G1)
      \node [mist,fit=(B0) (B1),label=above:\textcolor{mist}{\small buyers}] {};
      \node [stone,fit=(G0) (G1),label=above:\textcolor{stone}{\small goods}] {};
    \end{tikzpicture}
  \end{minipage}
  \begin{minipage}{.47\columnwidth}
    \centering
    \begin{tikzpicture}[]
      \GraphInit[vstyle=Normal]
      \SetVertexNoLabel
      \SetUpVertex[Math,Lpos=-180,LabelOut]
      \SetVertexNormal[FillColor=mist,OuterSep=0pt,TextColor=myblue]
      \grEmptyPath[form=2,x=0,y=0,RA=0.8,rotation=90,prefix=B]{3}
      \AssignVertexLabel{B}{1,$\frac{3}{2}$,1}
      \SetVertexShade[BallColor=stone,OuterSep=0pt]
      \SetUpVertex[Lpos=0]
      \SetVertexNormal[FillColor=stone,TextColor=mygreen]
      \SetVertexNoLabel
      \grEmptyPath[form=2,x=3,y=0,RA=0.8,rotation=90,prefix=G]{3}
      \SetUpEdge[lw=1pt,color=black]
      \Edge[label = \tiny$1$,style={pos=.30}](B2)(G2)
      \Edge[label = \tiny$2$,style={pos=.30}](B2)(G1)
      \Edge[label = \tiny$1$,style={pos=.30}](B1)(G2)
      \Edge[label = \tiny$1$,style={pos=.30}](B1)(G1)
      \Edge[label = \tiny$2$,style={pos=.30}](B1)(G0)
      \Edge[label = \tiny$1$,style={pos=.30}](B0)(G1)
      \Edge[label = \tiny$1$,style={pos=.30}](B0)(G0)
      \node [mist,fit=(B0) (B2),label=above:\textcolor{mist}{\small buyers}] {};
      \node [stone,fit=(G0) (G2),label=above:\textcolor{stone}{\small goods}] {};
    \end{tikzpicture} 
  \end{minipage}
\caption{Two examples of pacing games. Buyers and goods are represented by vertices in a bipartite graph on the left and right, respectively.
  The labels on buyer vertices represent budgets, while the labels on edges denote the buyers' valuation for the good (missing edges denote null valuations).
}
\label{fig:example_games}
\end{figure}

A platform enables the transactions through which the goods are sold. 
To allocate winners and set prices, it relies on a mechanism based on independent (single slot) second price auctions, one for each good.
Our goal is to study how buyers can smooth out their spending across all of the auctions to make their bids and budgets compatible with each other.
This is achieved via \emph{multiplicative pacing}: each buyer selects a \emph{pacing multiplier} $\alpha_i \in [0,1]$, which they use to compute the effective bid of $\alpha_i v_{ij}$ for each good $j$.
We refer to the resulting bids as {\em multiplicatively paced}.
Thus, the strategy space of buyer $i$ in the pacing game is the set of possible pacing multipliers $\alpha_i \in [0,1]$. 

Given a vector of pacing multipliers $\alpha \in [0,1]^n$ chosen by all buyers, the platform allocates goods and prices according to second price auctions with the bids mentioned earlier.
Thus the allocation rule is mostly straightforward: allocate to the highest paced bid. However, this still leaves open the question of how to deal with tied bids. 
This is relevant because multipliers allow buyers to manipulate the effective bids, which is likely to create ties in equilibrium. 

For the majority of the results, our solution concept will consist of a vector of pacing multipliers together with an allocation of goods to buyers that breaks ties in suitable fashion. 
However, for the purpose of explaining how our solution may be viewed as a game, we need to make the tie-breaking rule part of the rules of the game, so that the strategy space of each buyer consists only of choosing a pacing multiplier.
To that end, we assume that there is some fixed allocation rule $x_{ij}(\alpha)$ which denotes how much buyer $i$ receives of good $j$, given a vector of pacing multipliers $\alpha \in [0,1]^n$.
The resulting price for good $j$ arises from its corresponding second price auction, considering the paced bids:
$p_j =\max_{k\neq\text{max bidder on $j$}} \alpha_k v_{kj}$.
The allocation rule must satisfy the following conditions:
  \begin{itemize}
  \item $x_{ij}(\alpha) > 0 \Rightarrow \alpha_iv_{ij} = \max_{k} \alpha_{k}v_{kj}$ for all $i,j$
  \item $\sum_i x_{ij}(\alpha) = 1$ for all $j$
  \item If there exists an allocation $x'$ such that $\sum_j x'_{ij}p_j \leq B_i$ for all $i$, then $\sum_j x_{ij}(\alpha)p_j \leq B_i$ for all $i$
  \end{itemize}
The first and second conditions enforce consistency with second price auctions, whereas the last condition says that if it is possible to break ties such that all budgets are satisfied, then $x(\alpha)$ should do that.

The utility for buyer $i$ resulting from pacing vector $\alpha$ is a regular quasilinear utility function, except that their budget constraint must be satisfied:
  \[
    u_i(\alpha) =
    \begin{cases}
      \sum_j (v_{ij} - p_j) x_{ij}(\alpha) & \text{ if } \sum_j p_jx_{ij}(\alpha) \leq B_i \\
      -\infty & \text{ otherwise}
    \end{cases}.
  \]

The utility of the platform is inconsequential to the equilibrium analysis, but of course useful when comparing solutions with each other. In Section~\ref{sec:eqmanalysis}, we will get back to this, considering revenue and welfare. We end the description of the \emph{pacing game} with the sequence of events that defines it:
\begin{enumerate}
  \item Instance created; each buyer learns their own valuations and budget.
  \item Each buyer selects an optimal pacing multiplier anticipating that other buyers do so as well. In order to do this, we assume that each buyer knows the valuations and budgets of other buyers.
  \item The platform gets bids from buyers, runs the mechanism, and produces prices and an allocation. Ties are broken so as to preserve budgets, if possible.
  \item Buyers get allocations, realize spends, and perceive utilities.
\end{enumerate}

Having a game with full information can be justified in two ways.
First, auctions happen dynamically and buyers have the chance to revise their bids over time. As the interactions occur, buyers can adapt their pacing multiplier until they reach the right rate of expenditure. In order for this to occur it is enough to observe prices; details about competitors are not needed. This is a common justification for (static) Nash equilibria; here the notion of a repeated game is natural since the number of auctions is typically large.
In addition, a proxy bidder that acts on behalf of the buyer might have additional context of what other buyers may do. At the very least, even if the proxy bidder does not have access to any information about other buyers, the dynamic interpretation applies since it is reasonable that the proxy bidder continuously monitors the performance of the auctions in which that buyer participates, and adjusts the pacing multiplier accordingly. Below we provide more details about the platform acting as a proxy bidder.

We assumed that ties can be broken to satisfy the budgets of buyers. Let us see why tie-breaking is important via the following example.

\begin{example}[Ties may make buyers overspend budgets under arbitrary allocations]

As depicted on the left side of Figure~\ref{fig:example_games},
assume that $v_{11}=1$, $v_{12}=1/2$, and $B_1=1/2$, while
$v_{21}=1/2$, $v_{22}=1/8$, and $B_2=\infty$.
If independent auctions with independent bids were used, then buyer 1 could guarantee a utility of $1/2$ by bidding $1$ on good $1$ and $0$ on good $2$ and stay within budget.

Assume that buyer $1$ wins (some of) good $1$ with a multiplicatively paced bid. This implies that $\alpha_1 \geq 1/2$, from which it follows that $\alpha_1 v_{12} \geq 1/4 > 1/8$, thus making her win good $2$ as well. 
But without enforcement of budgets constraints, under second price auctions, the buyer would pay $1/2+1/8 > B_1$, resulting in a budget violation.
In that case, to be safe, she should set $\alpha_i < 1/2$ and lose good $1$, resulting in a utility of at most $1/2-1/8=3/8$. 

If the buyer were sure that budgets are always satisfied if possible, she would bid $\alpha_1=1/2$, the tie would be broken to win $3/4$ of good $1$ and all of good $2$, for a combined valuation of $3/4+1/2=5/4$ and a combined payment of $3/8+1/8 = 1/2 = B_1$.
This results in a utility of $3/4$, which is the best possible.
\end{example}

As an implementation remark for our stylized model, a tie-breaking rule that guarantees budgets to be satisfied is natural in real-world internet ad auction markets.
One way to map the model to reality is to consider that goods of an instance are split into many small units. A buyer could control the fractions she wins when tied as winner by slightly modifying the bids on these individual units as required.
This perspective can be thought of as mapping goods to impression types, and those subdivisions to particular impressions.
These small bid modifications that result in budget-satisfying allocation are likely to happen automatically in those real auction markets. The reason is that the pacing multipliers are estimated dynamically and in real time. The fluctuations of multipliers over time make realized bids fluctuate as well and affect allocations.
Alternatively, tie breaking can be handled in practice by perturbing each bid with independent noise before computing the allocation of each good. If only the expected value and expenditure of an allocation are considered, then this can provide a solution concept very similar to ours.

To start discussing equilibrium concepts in this game, which will lead to pacing equilibria, it is important to provide evidence that a single dimensional search space for each buyer is appropriate, since that is a crucial justification of pacing multipliers and the definition of the game.
The following proposition shows that relying on multiplicative pacing is in the best interest of buyers.
In other words, the set of best responses in the more general space of strategically modifying all valuations always intersects with the multiplicatively-paced bid vectors. This fact requires ties to be broken in favor of the buyer in question, which contradicts our definition of a fixed $x(\alpha)$ function. We will eventually define our pacing equilibria in a way that makes this a non-problem.

\begin{proposition}\label{prop:pacing_optimality}
Suppose we allow arbitrary bids in each auction, i.e., the bids $b_{ij}$ are not necessarily multiplicatively paced.
Then, holding the bids of all other buyers in all auctions fixed, each buyer $i$ has a best response that is multiplicatively paced
(assuming that, when she is tied to win a good, she can choose the fraction of the good she wins).
\end{proposition}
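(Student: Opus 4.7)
Fix bidder $i$ and the bids of all other bidders, and let $d_j = \max_{k \neq i} b_{kj}$ denote the highest competing bid on item $j$. Because the auctions are second-price, bidder $i$'s outcome on item $j$ depends on her own bid only through whether $b_{ij} > d_j$, $b_{ij} = d_j$, or $b_{ij} < d_j$: she wins outright in the first case (paying $d_j$), wins nothing in the third, and picks any fraction $x_j \in [0,1]$ by hypothesis in the second (paying $x_j d_j$). So her best-response problem reduces to selecting $x \in [0,1]^m$ to maximize $\sum_j x_j (v_{ij} - d_j)$ subject to $\sum_j x_j d_j \leq B_i$, with the caveat that a non-integer $x_j$ is only attainable on items where she ties. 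The plan is to exhibit an optimum of this program that is implementable as a multiplicatively paced bid vector.

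The second step is to exploit the structure of the program: it is a fractional knapsack on the items with $v_{ij} > d_j$ (no other items contribute positive utility). By the standard greedy characterization, some optimum is obtained by ordering items in decreasing value-to-cost ratio $v_{ij}/d_j$, accepting them integrally until the budget is saturated, and possibly splitting at most one ``marginal'' item at the critical ratio $\rho^\ast$. If the truthful bids $b_{ij} = v_{ij}$ already satisfy the budget, I take $\rho^\ast = 1$ and set $\alpha_i = 1$; otherwise $\rho^\ast \geq 1$ and I set $\alpha_i = 1/\rho^\ast \in (0,1]$.

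It remains to verify that bidding $\alpha_i v_{ij}$ reproduces this greedy allocation. For any $j$, one has $\alpha_i v_{ij} > d_j$ iff $v_{ij}/d_j > \rho^\ast$, $\alpha_i v_{ij} < d_j$ iff $v_{ij}/d_j < \rho^\ast$, and $\alpha_i v_{ij} = d_j$ iff $v_{ij}/d_j = \rho^\ast$; so strictly above-threshold items are won outright, strictly below-threshold items are lost, and items exactly at ratio $\rho^\ast$ are tied. By the assumption on tie-breaking, the bidder can pick the fractions on the tied items to match the remaining greedy allocation and exactly exhaust the budget, giving a multiplicatively paced best response. The one point that requires care is the marginal item: absent the tie-breaking assumption, multiplicative pacing could not in general implement a fractional allocation there and so might be strictly weaker than arbitrary bidding; with the assumption, the LP optimum and what multiplicative pacing can achieve coincide, which closes the proof.
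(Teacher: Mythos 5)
Your proof is correct, and it takes a genuinely different route from the paper's. The paper argues by contradiction: it starts from a best response $b_i$, lets $\alpha_i^{\max}=\max_j b_{ij}/v_{ij}$ (minimized over best responses), and shows that bidding $b'_{ij}=\alpha_i^{\max} v_{ij}$ cannot be worse by an exchange argument --- if raising bids to $b'$ breaks the budget, one can shift spend away from the marginal item at ratio $1/\alpha_i^{\max}$ to the newly-winnable item, which has strictly higher bang-per-buck, contradicting optimality. You instead reformulate the best-response problem directly as a fractional knapsack over $x\in[0,1]^m$ (which is the right feasible set here, since arbitrary bids let the bidder realize any fractions via ties), invoke the standard greedy optimality structure, and read the pacing multiplier $\alpha_i=1/\rho^\ast$ off the critical ratio. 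Your approach is more constructive and arguably cleaner: it identifies exactly what the optimal $\alpha_i$ is, and makes the role of the tie-breaking assumption transparent (it is needed precisely and only for the marginal items at ratio $\rho^\ast$), whereas the paper's version establishes existence without exhibiting the multiplier. The paper's version avoids importing the greedy-knapsack lemma and so is more self-contained. One small point worth spelling out in your version: items with $d_j=0$ and $v_{ij}>0$ (ratio $+\infty$) are handled because $\alpha_i>0$ (as $B_i>0$ guarantees $\rho^\ast<\infty$), so multiplicative pacing still wins them outright; this is implicit but should be stated.
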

\begin{proof}{Proof.}
Consider a best response by buyer $i$ consisting of bids $b_{i1}, \ldots, b_{im}$.  Let $\alpha_i^{\max} = \max_{j} b_{ij} / v_{ij}$, and without loss of generality suppose $\alpha_i^{\max}$ is minimized among best responses for buyer $i$.  We will show that bidding $b'_{ij}=\alpha_i^{\max}v_{ij}$ is also a best response.  Suppose not.  Clearly $\alpha_i^{\max} \leq 1$ since it never helps to bid more than one's valuation.  Hence $b'_{ij} \leq v_{ij}$ for all $j$.  Because we have $b'_{ij} \geq b_{ij}$ for all $j$, then $i$ can only be winning more goods, at prices below her valuations.  Hence the only way in which the $b'_{ij}$ can fail to be a better response than the $b_{ij}$ is by exceeding $i$'s budget. 
Because by assumption $i$ can break ties as she wishes, it follows that with the bid $b'_{ij}$ she exceeds her budget even if she accepts none of the goods for which she is tied.
Because the bid $b_{ij}$ did not cause the allocation to exceed the budget, it follows that there exists a good $j^*$ with price (highest other bid) $p_{j^*}$ such that $b_{ij^*} \leq p_{j^*} < b'_{ij^*}$ of which $i$ was not winning everything when bidding $b_{ij}$.
Now consider gradually increasing $b_{ij^*}$ towards $b'_{ij^*}$ (or increasing the fraction of $j^*$ that $i$ is allocated). 
If under bid $b_{ij}$ the budget was not already exhausted, then the moment
that $i$ starts winning some of $j^*$ (at a price below her valuation), we have found a better response and hence the required contradiction.
If the bid $b_{ij}$ did already cause the buyer to exhaust the budget, then once $i$ starts winning some of $j^*$, 
we can pay for this by reducing the amount spent on some good $j^{**}$ with $p_{j^{**}}=\alpha_i^{\max}v_{ij^{**}}=b_{ij^{**}}$.  (Such a good must exist by the minimality of $\alpha_i^{\max}$: if $x_{ij} = 0$ for all $j$ such that $\alpha_i^{\max}v_{ij^{**}}=b_{ij^{**}}$ then a different best response with all those bids set to zero exists, contradicting the minimality of $\alpha_i^{\max}$).
The utility buyer $i$ receives per dollar spent on $j$ is $(v_{ij}-p_j)/p_j = v_{ij}/p_j - 1$.
But we have $p_{j^{**}}/v_{ij^{**}} = \alpha_i^{\max}$ and $p_{j^*}/v_{ij^*} < \alpha_i^{\max}$.
Hence $v_{ij^{**}}/p_{j^{**}} - 1  = 1/\alpha_i^{\max} - 1 < v_{ij^*}/p_{j^*}$, i.e.,
the bang-per-buck is actually higher on $j^*$.  So shifting spending to $j^*$ is utility-improving, giving us the required contradiction.
\hfill\Halmos
\end{proof}

A \emph{pure-strategy Nash equilibrium} (PNE) is a vector of pacing multipliers $\alpha$ such that every buyer is best responding, i.e., $u_i(\alpha) \geq u_i(\alpha_i', \alpha_{-i})$ for all $\alpha_i' \in [0,1]$. More concretely, in a PNE $\alpha$ each buyer must be within budget, and if they are not spending their entire budget then every $\alpha_i' > \alpha_i$ must be such that they either win the same goods as under $\alpha_i$, or break their budget.

Unfortunately, we cannot follow the standard approach and adopt PNE as our equilibrium concept since some of them exhibit undesirable properties, as the next example illustrates.
\begin{example}
Consider one good and two buyers. Buyer 1 has valuation 10 and buyer 2 has valuation 5. Each buyer has budget 5. A subset of the PNE are all pairs $(\alpha_1,\alpha_2)$ such that $\alpha_1=1$, since buyer 1 wins the whole good and pays $5\alpha_2$, whereas buyer 2 cannot improve their utility no matter what pacing multiplier they choose.
\end{example}
In this example budgets play no actual role: neither player needs to smooth their spending, and a regular second price auction yields a good outcome. In order to avoid these undesirable PNE solutions, we introduce the following principle:
\begin{definition}[No unnecessary pacing]
The no unnecessary pacing principle states that if a buyer does not spend their whole budget, then there should not be pacing (i.e., the pacing multiplier should equal one).
\end{definition}

Intuitively, the no unnecessary pacing principle makes sense from the perspective of an individual buyer: if their budget is not binding, then they should act as a normal buyer in a second price auction, where truthful bidding is a dominant strategy.

We will restrict our attention to PNEs that satisfy the no unnecessary pacing principle, leading to a \emph{refinement} of the set of all PNEs of the pacing game. A priori, it is unclear whether there will always exist an equilibrium under this refinement (it is not evident that even a PNE must exist). However, as we shall see, an equilibrium under this refinement is guaranteed to exist as long as ties are broken such that no unnecessary pacing holds when possible. It turns out that refined equilibria can be characterized purely in terms of budget feasibility and the no unnecessary pacing principle since these two conditions imply that buyers must be best responding to each other (see Proposition~\ref{the:best_responding}).

Motivated by the refinement of PNE, we now define our notion of equilibrium formally, relying on the conditions it must satisfy. Then, we prove that this notion coincides with the refinement of PNE that we discussed earlier.
In a standard game-theoretic setting, we would simply rely on our refinement given by any PNE satisfying no unnecessary pacing. However, the truly game-theoretic pacing game setup requires us to specify the tie-breaking rule for every pacing vector, which is not practical. Secondly, Proposition~\ref{prop:pacing_optimality} would then fail to hold since it requires buyers to be able to choose their tie-breaking allocation.
Instead, we approach this problem as a competitive market and explicitly make the allocations part of the equilibrium definition.

This makes our pacing equilibrium notion similar to that of competitive equilibria, while still retaining best-response properties. (We will expand on the connections to competitive equilibria in Section~\ref{sec:competitive_equilibrium}.) From the perspective of the pacing games we have discussed so far, our pacing equilibrium definition captures the PNEs satisfying no unnecessary pacing for \emph{every} pacing game where the tie-breaking rule agrees with the allocation used in the pacing equilibrium.

\begin{definition}[Pacing equilibrium]\label{def:pacing}
  A second price pacing equilibrium (SPPE) consists of a vector of pacing multipliers $\alpha \in [0,1]^N$, and
  fractions $x_{ij}\in[0,1]$ indicating allocations of good $j$ to buyer $i$.
  These elements need to satisfy budget constraints, no unnecessary pacing, the feasibility of the allocation and that prices emerge from a second price auction, all expressed in the following conditions:
\begin{itemize}
\item For all $j$, $\sum_i x_{ij} \leq 1$ (with equality if there is at least
  one $i$ with $v_{ij}>0$); also, for all $i$ and $j$, $x_{ij}>0$ implies that
  $i$'s bid $\alpha_{i}v_{ij}$ was (possibly tied for) the highest on $j$.
\item If $x_{ij}>0$, then the per-unit price $p_{j}$ is the highest bid $\alpha_{i'}v_{i'j}$ other than $i$'s bid.
\item For all $i$, $\sum_j s_{ij} \leq B_i$,
  where $s_{ij}=p_{j}x_{ij}$ is the total spend of buyer $i$ in good $j$.
  In addition, if the inequality is strict, then $\alpha_i=1$.
\end{itemize}
\end{definition}

Since all our results focus on the second price auction, we will sometimes refer to an SPPE simply as a {\em pacing equilibrium}, with the understanding that it is based on second price auctions. This is in contrast to results by \cite{conitzer2018pacing} that look at pacing equilibria for both first and second price auctions.

Note that this definition of pacing equilibrium does not explicitly require that buyers are best responding. 
Nonetheless, the conditions ensure that each buyer is best responding, thus justifying the solution concept.

\begin{proposition}
  Consider a pacing equilibrium $\{ \alpha_i, x_{ij}\}_{i\in N, j\in M}$.
  For each buyer $i\in N$, the pacing multiplier $\alpha_i$ is a best response to the paced bids of all other buyers (even if, when choosing their best response multiplier, they can choose how ties are broken).
\label{the:best_responding}
\end{proposition}
\begin{proof}{Proof.}
  Consider an arbitrary buyer $i\in N$. We will consider two cases. When
  $\alpha_i=1$, bids equal values for all goods. By the
  properties of the second price auction, this buyer cannot gain additional
  utility by raising or lowering their bid. When $\alpha_i < 1$, the
  buyer is guaranteed to be spending their entire budget by the definition of a
  pacing equilibrium.
  Raising $\alpha_i$ causes overspending if additional goods are won, which
  yields $-\infty$ utility. If no additional goods are won, then it
  has no effect on the utility of buyer $i$. Conversely, if buyer $i$ lowers
  $\alpha_i$, the only thing that can happen is winning fewer goods. Since the
  buyer is already bidding less than their true valuation, this can only cause
  them to lose goods that they gained positive utility from winning.
\hfill\Halmos
\end{proof}

The previous result shows that pacing equilibria yield PNEs with no unnecessary pacing for any pacing game with a tie-breaking rule that yields the allocation in the pacing equilibrium. To see why any PNE with no unnecessary pacing yields a pacing equilibrium, note that the first two definitions of pacing equilibrium are satisfied by the design of a pacing game, while the third condition is simply the ``no unnecessary pacing'' refinement condition.

\subsection*{Discussion on the Model Setup}

In the definition of the game we took the perspective that it is the buyers who choose pacing multipliers, but as mentioned in the introduction, we are primarily motivated by the setting where the platform performs the budget management on behalf of each buyer, via \emph{proxy bidders}.
The proxy-bidding system is built with the intent that the average buyer does not waste effort in designing complex bidding strategies, so they can focus their energy in improving the value they provide to their users. 
In a pure proxy-bidder setting, the platform is assumed to have access to the (true) budgets and (true) valuations $v_{ij}$, and its goal is simply to implement budget-smoothing via multiplicative pacing. Since we assume that the game is full information, the proxy bidders act on behalf of the buyers, compute a pacing equilibrium, and submit the same bid as the buyer would have submitted. Item (2) in the game play explained earlier is subdivided as follows: 

\begin{enumerate}
  \item[2'] Each buyer submits their true valuations and budgets to the proxy bidder operated by the platform, if they so desire.
  \item[2''] The proxy bidder computes an optimal pacing multiplier on behalf of the buyer.
\end{enumerate}

There are a few issues that are important to discuss about proxy bidders.
First, the platform gives buyers the option to pace but it is their choice to do so. The platform anticipates that if it does not do what is best for the buyer, they will not opt in. Notice that although buyers may not have access to auction-by-auction outcomes, they can easily experiment with simultaneous campaigns to learn the optimal bidding strategy. 
This is why the platform seeks an equilibrium by optimizing buyers' utilities, as opposed to attempting to find a centralized solution that maximizes welfare or revenue. 
As the feature is opt-in, some buyers select not to use pacing because of various reasons, and in practice there is a mix of buyers pacing themselves and buyers using proxy bidders.

Second, we assume that the game is full information and do not consider strategic issues of buyers misreporting their budgets or valuations. 
A partial justification for this comes from the best-response properties of pacing equilibrium. Nonetheless, it is still possible that buyers may shift the pacing equilibrium computed by the proxy bidders by misreporting.
In Section~\ref{sec:misreporting}, we investigate the extent to which buyers can gain utility from this type of manipulation. We find that market thickness quickly makes it impossible for buyers to significantly improve their utility by misreporting.

Considering a practical implementation of the above, in the internet ad markets typically buyers are not able to submit their valuation vectors; they do not even know them exactly. Instead, they would submit their \emph{value-per-click} $v_i$, budget $B_i$, and targeting criteria specifying which user segments they are interested in. The valuation for an impression $j$ that fits the targeting criteria would be calculated as $v_{ij} = \gamma_{ij} v_i$, where $\gamma_{ij}$ is the \emph{click-through rate} of impression $j$ (i.e. the probability that the user clicks on the ad). The click-through rate would typically be estimated by the platform, and thus not modifiable by the buyer. To this point, and as mentioned in the previous paragraph, Section~\ref{sec:misreporting} investigates whether buyers have incentive to misreport, both in the setting where they report their value-per-click, as well as the setting where they report their entire valuation vector.

Finally, we highlight that our model is fully static, so values, prices and everything else are all realized at the same time. Our solution concept gives us an ex-post notion of what we would like the market outcome to be, similar to the setting of a competitive equilibrium.
In terms of predictions needed to feed the model, a platform is in a good position to use it to evaluate a market. Campaigns are submitted beforehand, budgets refresh daily, and impression opportunities arise from users logging onto the platform, which can be predicted with accuracy.
In the computational section, we introduce a model with dynamics as a more realistic version that can be used for evaluation. 
We provide evidence that the dynamics approach the (static) pacing equilibria, for various initial conditions.
Moreover, if we feed the dynamic model with starting points coming from (static) pacing equilibria, we verify that the resulting (dynamic) pacing multipliers do not fluctuate much from the starting points.

\section{Equilibrium Analysis}\label{sec:eqmanalysis}

In this section we study the equilibria resulting from the pacing game. We first prove that all instances admit equilibria and later we study properties of these equilibria such as multiplicity and efficiency.

\subsection{Equilibrium Existence}

To characterize a pacing equilibrium, as introduced in Definition~\ref{def:pacing}, we require not
only a profile of strategies (where the $\alpha_i$ would correspond to
strategies) but also one of allocations. Even ignoring that we need allocations,
there are discontinuities involved that might be suspected to get in the way of
equilibrium existence: upon exceeding another bid there is a jump in one's
utility, and again for exceeding one's budget. On top of that, in the definition
of pacing equilibrium, we require buyers to break certain indifferences towards
higher bids: a buyer~$i$ who at $\alpha_i=1$ does not spend the budget is not allowed to use
a lower value of $\alpha_i$ in the definition. 
Despite these difficulties, we can show that a pacing equilibrium always exists via a smoothing argument.
This smoothing argument relies on a smoothed pacing game, where every good is split among all bids that lie within an additive band around the winning bid, with the split applied proportionally depending on where each bid falls in the band. This is reminiscent of how one might handle tie-breaking in practice (either in the indivisible case, or in the divisible case by dividing every divisible good up into many separate units): in the allocation rule, every bid has a small amount of noise added to it, before computing the allocation. (\citet{borgs2007dynamics} studied such a scheme, and show convergence results in the case of first price auctions with budgets.)

\begin{theorem}\label{th:existence}
Any pacing game admits a pacing equilibrium.
\end{theorem}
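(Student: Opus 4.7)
The plan is a fixed-point argument on a smoothed version of the game, followed by a limit passage. The natural best-response correspondence on pairs of pacing multipliers and allocations has three discontinuities that prevent a direct application of Kakutani's theorem: the allocation jumps at bid ties, utility jumps to $-\infty$ when the budget is exceeded, and $\alpha_i$ must snap to $1$ whenever the budget is slack. I smooth each of these with a parameter $\varepsilon > 0$, obtain a fixed point of the smoothed correspondence by Kakutani, and then extract a convergent subsequence as $\varepsilon \to 0$.

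Concretely, set $A = [0,1]^n$ and $X = \prod_{j \in M} \{x_{\cdot j} \in \mathbb{R}_{\geq 0}^n : \sum_i x_{ij} \le 1\}$; this domain is compact and convex. For $\varepsilon > 0$, define a correspondence $F_\varepsilon : A \times X \rightrightarrows A \times X$ whose allocation component for item $j$ is any distribution supported on the $\varepsilon$-approximate maximizer set $\{ i : \alpha_i v_{ij} \ge \max_{i'} \alpha_{i'} v_{i'j} - \varepsilon \}$ (summing to $1$ when some $v_{ij} > 0$), and whose multiplier component picks the largest $\alpha'_i \in [0,1]$ that keeps bidder $i$'s spending at most $B_i$. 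By construction $F_\varepsilon$ has nonempty compact convex values and is upper hemicontinuous, so Kakutani yields a fixed point $(\alpha^\varepsilon, x^\varepsilon)$. By compactness of $A \times X$, I extract a subsequence $(\alpha^\varepsilon, x^\varepsilon) \to (\alpha^*, x^*)$ and set $p^*_j$ to the second-highest paced bid at $\alpha^*$ (and $0$ on items with no positive paced bid). The three clauses of Definition~\ref{def:pacing} then follow from limit arguments: feasibility and the condition that $x^*_{ij} > 0$ implies $\alpha^*_i v_{ij}$ is (possibly tied for) the maximum come from the fact that the $\varepsilon$-approximate maximizer set collapses to the exact maximizer set as $\varepsilon \to 0$, while the budget-versus-$\alpha_i = 1$ dichotomy descends from the multiplier fixed-point condition at each $\varepsilon$.

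The main obstacle will be pushing the budget-spending condition through the limit when ties emerge. The price $p_j(\alpha)$ is discontinuous in $\alpha$ at ties, so even though $\sum_j p^\varepsilon_j x^\varepsilon_{ij} \le B_i$ for every $\varepsilon$, one must argue that the fixed-point dichotomy (namely, $\alpha^\varepsilon_i < 1$ forces spending $= B_i$) is inherited by the limit $(\alpha^*, x^*)$. The key point is that for $\varepsilon > 0$ the allocation is spread over near-maximizers and in the limit concentrates on exact maximizers; with a sufficiently flexible smoothed allocation correspondence---allowing every convex combination of $\varepsilon$-maximizers---compactness of $X$ ensures that any convex combination of limit maximizers is attainable as a limit of allocations. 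This lets me match the spending of bidder $i$ at $(\alpha^*, x^*)$ to its value at $(\alpha^\varepsilon, x^\varepsilon)$ along the subsequence, preserving both the budget inequality and the $\alpha_i = 1$ clause.
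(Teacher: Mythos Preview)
Your high-level plan---smooth the three discontinuities, apply a fixed-point theorem, pass to the limit---is the same as the paper's. But the correspondence $F_\varepsilon$ you describe has a genuine gap in its multiplier component. You say it ``picks the largest $\alpha'_i \in [0,1]$ that keeps bidder $i$'s spending at most $B_i$,'' but on the domain $A \times X$ with the allocation $x$ held fixed, bidder $i$'s spending $\sum_j p_j x_{ij}$ does \emph{not} depend on $\alpha'_i$: the second price $p_j$ is determined by the other bidders' paced bids. So either spending is already $\le B_i$ and the rule returns $\alpha'_i = 1$, or spending exceeds $B_i$ and there is no admissible $\alpha'_i$ at all. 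If instead you let the allocation move with $\alpha'_i$, the map becomes a single-valued function that jumps discontinuously at the budget boundary; the natural fix (make it set-valued at the boundary) yields non-convex values such as $\{0,1\}$, so Kakutani does not apply. This is exactly the discontinuity you set out to smooth, and it has reappeared in your correspondence.

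The paper resolves this differently: it makes the allocation a \emph{deterministic continuous function} of $\alpha$ alone (proportional sharing among bidders within $\epsilon$ of the top bid), replaces the hard budget constraint by a finite linear penalty $H$ for overspending, and adds an artificial good that gives strict incentive to raise $\alpha_i$ whenever the budget is slack. Together these make each bidder's utility continuous and quasiconcave in $\alpha_i$, so the Debreu--Glicksberg--Fan theorem yields a pure Nash equilibrium directly, with no allocation variable in the fixed-point domain. The artificial good is what forces $\alpha_i^* = 1$ for under-budget bidders in the limit---precisely the condition your multiplier rule was trying to impose by fiat. If you want to repair your Kakutani approach, you would need an analogous device to make the multiplier response genuinely continuous (or upper hemicontinuous with convex values) in $(\alpha,x)$.
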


To provide this result we rely on a smoothed version of the pacing game, which
takes care of all the discontinuity issues. In the smoothed version, the
allocation varies continuously and is determined as a function of the $\alpha_i$
only, the penalty for exceeding one's budget varies continuously, and strict
incentive is given to bid higher. We show we can apply a pure Nash equilibrium
existence result to such games. We then show that if we take a sequence of such
games that converges to a (non-smoothed) pacing game, then this sequence of pure
Nash equilibria converges to a pacing equilibrium.

\begin{definition}
  For $\epsilon>0$ and $H>0$, an {\em $(\epsilon,H)$-smoothed pacing game} is
  a game where the set of pure strategies for each buyer $i$ is the set of pacing
  multipliers $\alpha_i\in [0,1]$. For a fixed choice of pacing multipliers,
  the original pacing auction market is modified as follows in order to compute
  allocations and payments:
\begin{itemize}
\item {\bf Reserve bid:} there is an artificial bid of $2\epsilon$ on all goods (treated as one of the buyers in the below).
\item {\bf Allocation and pricing rule:} For every good $j$, consider the highest bid $b_j^*=\max_i \alpha_i v_{ij}$.
Let $S_j = \{i: \alpha_i v_{ij} \geq b_j^* - \epsilon\}$ be the set of buyers close to the maximum bid for $j$.
Then $i \in S_j$ wins the following fraction of good $j$: $x_{ij} = \frac{\alpha_i v_{ij} - (b_j^* - \epsilon)}{\sum_{i' \in S_j} [\alpha_i v_{ij} - (b_j^* - \epsilon)]}$, and pays $s_{ij}=x_{ij}p_{j}$
for this, where $p_{j}$ is the highest bid on $j$ among buyers other than $i$, minus $\epsilon$ (which is necessarily at most $b_j^*-\epsilon$).
 For the other buyers, $x_{ij}=s_{ij}=0$.
\item 
{\bf Additional artificial spend (to encourage higher bids from those who have not spent their budgets):}
Each buyer will additionally receive a quantity $\alpha_i$ of an artificial good (with unlimited supply) worth $2\epsilon$ per unit to her, and pay $\alpha_i \epsilon$ for this.
This results in a profit of $\alpha_i \epsilon$ if the budget is not exceeded by this payment.
\item {\bf Utility:} The utility of buyer $i$
is $(B_i - \alpha_i \epsilon - \sum_j s_{ij}) + 2\alpha_i \epsilon + \sum_j x_{ij}v_{ij}$
if she does not exceed the budget $B_i$, or
$H(B_i-\alpha_i \epsilon - \sum_j s_{ij}) + 2\alpha_i \epsilon + \sum_j x_{ij}v_{ij}$
if she exceeds it.
\end{itemize}
\label{def:smoothed}
\end{definition}

The smoothing of allocations and payments allows us to apply existence theorems
about pure-strategy Nash equilibria.

\begin{theorem}\label{th:smooth_exists}
  Consider a smoothed pacing game in which a strategy for buyer $i$ consists of choosing
  $\alpha_i \in [0,1]$. Also, let $M$ be any upper bound on
  the sum of a buyer's valuations in the game, including those for the artificial good.
   For $H > M/\epsilon$, the game admits a pure-strategy Nash equilibrium.
\end{theorem}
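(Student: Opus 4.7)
The plan is to invoke the Debreu--Fan--Glicksberg pure-strategy Nash existence theorem: for a game with non-empty compact convex strategy sets, jointly continuous payoffs, and payoffs that are quasi-concave in each player's own strategy, a pure-strategy Nash equilibrium exists. Each bidder's strategy set $[0,1]$ is compact and convex, so only (i) joint continuity of each $u_i$ and (ii) quasi-concavity of $u_i$ in $\alpha_i$ need to be verified. For (i), the smoothing pays off: the allocation can be written uniformly as
\[
x_{ij}=\frac{[\alpha_iv_{ij}-(b_j^*-\epsilon)]_+}{\sum_k [\alpha_kv_{kj}-(b_j^*-\epsilon)]_+},
\]
with $k$ ranging over bidders plus the $2\epsilon$ reserve; the bidder attaining $b_j^*$ always contributes $\epsilon$ to the denominator, so the fraction is continuous in $\alpha$. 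The second-price term $p_{ij}=\max(2\epsilon,\max_{k\neq i}\alpha_kv_{kj})-\epsilon$ is also continuous, hence so is $s_{ij}=x_{ij}p_{ij}$; the two branches of $u_i$ agree on the surface $B_i-\alpha_i\epsilon-\sum_j s_{ij}=0$, so $u_i$ is continuous on $[0,1]^n$.

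For quasi-concavity I fix $\alpha_{-i}$ and observe that $p_{ij}$ does not depend on $\alpha_i$. A direct calculation shows $x_{ij}$ is non-decreasing in $\alpha_i$: when $i$ is not top, the formula has the form $y/(y+D)$ with $y$ linear in $\alpha_i$ and $D$ fixed; when $i$ is top, $x_{ij}=\epsilon/\bigl(\epsilon+\sum_{k\in S_j\setminus\{i\}}[\alpha_kv_{kj}-(\alpha_iv_{ij}-\epsilon)]_+\bigr)$, whose denominator is non-increasing in $\alpha_i$; the two pieces glue continuously. Consequently the total spend $\alpha_i\epsilon+\sum_j s_{ij}$ is non-decreasing in $\alpha_i$, so the in-budget region is an initial interval $[0,\alpha_i^*]$. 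The pivotal observation is that whenever $x_{ij}>0$, $v_{ij}\geq p_{ij}$: if $i$ is top then $p_{ij}\leq b_j^*-\epsilon=\alpha_iv_{ij}-\epsilon\leq v_{ij}$; if $i$ is not top, then $x_{ij}>0$ forces $\alpha_iv_{ij}\geq b_j^*-\epsilon=p_{ij}$ and $\alpha_i\leq 1$ yields $v_{ij}\geq p_{ij}$; in the reserve-dominated case $p_{ij}=\epsilon$ while $x_{ij}>0$ forces $\alpha_iv_{ij}\geq\epsilon$. Rewriting the in-budget payoff as $u_i=B_i+\alpha_i\epsilon+\sum_j x_{ij}(v_{ij}-p_{ij})$, either $v_{ij}\geq p_{ij}$ (in which case the summand is non-negative and non-decreasing in $\alpha_i$) or $v_{ij}<p_{ij}$ (in which case $x_{ij}\equiv 0$); together with the strictly increasing $\alpha_i\epsilon$ term, $u_i$ is strictly increasing on $[0,\alpha_i^*]$.

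In the over-budget region I regroup $u_i=HB_i+(2-H)\alpha_i\epsilon+\sum_j x_{ij}(v_{ij}-Hp_{ij})$. Since the artificial good alone gives $M\geq 2\epsilon$, we have $H>M/\epsilon\geq 2$, so $(2-H)\alpha_i\epsilon$ is strictly decreasing; moreover $p_{ij}\geq\epsilon$ and $v_{ij}\leq M<H\epsilon\leq Hp_{ij}$, so each $v_{ij}-Hp_{ij}$ is strictly negative and each summand $x_{ij}(v_{ij}-Hp_{ij})$ is non-increasing in $\alpha_i$; hence $u_i$ is strictly decreasing on $(\alpha_i^*,1]$. Therefore $u_i(\cdot,\alpha_{-i})$ is single-peaked at $\alpha_i^*$ (or at $1$ if the budget is never exhausted), hence quasi-concave, and Debreu--Fan--Glicksberg delivers the pure-strategy Nash equilibrium. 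The main obstacle is the quasi-concavity verification; the critical ingredient is the inequality $v_{ij}\geq p_{ij}$ whenever $x_{ij}>0$, which relies essentially on $\alpha_i\leq 1$, and without which ``bad'' items could break single-peakedness. The threshold $H>M/\epsilon$ is calibrated precisely to make the over-budget branch strictly decreasing.
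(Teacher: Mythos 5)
Your proof is correct and follows the same route as the paper: invoke the Debreu--Fan--Glicksberg theorem on the compact convex strategy sets, verify joint continuity of the smoothed allocation/payment rules, and establish quasi-concavity by showing the payoff is single-peaked at the budget-exhaustion threshold. The one place you depart stylistically is the over-budget branch, where the paper uses a rate-comparison heuristic ($H$ versus $M/\epsilon$) while you regroup the payoff as $HB_i+(2-H)\alpha_i\epsilon+\sum_j x_{ij}(v_{ij}-Hp_{ij})$ and argue term by term; your version is somewhat more explicit but encodes the same inequality, so the two arguments are essentially the same.
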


\begin{proof}{Proof.}
  We will apply a theorem
  by~\citet{debreu1952social}, \citet{glicksberg1952further}, and \citet{fan1952fixed}
  (see also \citealt[p.~20]{ozdaglar-GTcourse5})
that guarantees existence of a pure-strategy Nash equilibrium under the
following conditions (which we immediately show apply to our game):
\begin{itemize}
\item {\bf Compact and convex strategy space.} This holds because $\alpha_i \in
  [0,1]$.
\item {\bf Continuity of utility in all strategies.} This holds for the
  following reasons: $x_{ij}$ and $s_{ij}$ are continuous in all the
  $\alpha_{i'}$ (in particular, note that buyers $i$ who are just barely in
  $S_j$ with $\alpha_i v_{ij}=b_j^*-\epsilon$ receive $x_{ij}=0$). And utility
  is continuous in these quantities (in particular, note that the expressions
  for buyers who exceed and do not exceed the budget coincide at $2\alpha_i
  \epsilon + \sum_j x_{ij}v_{ij}$ when the budget is spent exactly).
\item {\bf Quasiconcavity of utility in the buyer's own strategy.} This means
  we must show that $u_i(\alpha_i,\alpha_{-i})$ is quasiconcave in $\alpha_i$.
  This is the case if there exists a number $t$ such that for $\alpha_i < t$,
  $u_i$ is nondecreasing in $\alpha_i$, and for $\alpha_i > t$, $u_i$ is
  nonincreasing in $\alpha_i$. Buyer $i$'s total spend $\alpha_i \epsilon +
  \sum_j s_{ij}$ is increasing and continuous in $\alpha_i$. Holding
  $\alpha_{-i}$ fixed, let $t$ be the value of $\alpha_i$ such that $\alpha_i
  \epsilon + \sum_j s_{ij} = B_i$ (if no such value exists we may set $t=1$).
  Then, for $\alpha_i < t$, $u_i$ is increasing in $\alpha_i$, because
  increasing $\alpha_i$ results in winning more goods (including more of the
  artificial good) at prices below $i$'s valuation ($\alpha_i$ does not affect
  $p_{j}$, and if $i$ is winning part of $j$ then $v_{ij} \geq \alpha v_{ij}
  \geq b_j^*-\epsilon \geq p_{j}$). For $\alpha_i > t$, $i$'s total spend
  (including on the artificial good) is increasing in $\alpha_i$, and any
  additional spend will exceed $i$'s budget, decreasing the utility term
  $H(B-\alpha_i \epsilon -\sum_j s_{ij})$ at rate $H$. Because each good
  (including the artificial good) costs at least $2\epsilon - \epsilon
  =\epsilon$, the value gained from goods bought increases at a rate of at most
  $M / \epsilon$, which by assumption is smaller. Hence, utility is decreasing
  in $\alpha_i$ when $\alpha_i>t$.
\end{itemize}
\hfill\Halmos
\end{proof}

With this result we are ready to prove Theorem~\ref{th:existence}.
Using the existence of pure-strategy Nash equilibria in smoothed pacing games,
we can show that a limit point of decreasingly smoothed games constitutes a
pacing equilibrium in the original pacing game.

\begin{proof}{Proof.}
For a given pacing game, consider a sequence of smoothed versions of it, defined by $(\epsilon^l,H^l)$, satisfying $H^l >  M/\epsilon^l$, $\lim_{l \rightarrow \infty} \epsilon^l=0$, and $\lim_{l \rightarrow \infty} H^l=\infty$.  Consider an associated sequence of equilibria of these games (guaranteed to exist by Theorem~\ref{th:smooth_exists}) defined by $\{\alpha_i^l, x_{ij}^l, p_{j}^l, s_{ij}^l\}$.  This sequence must have a subsequence with a limit point $\{\alpha_i^*, x_{ij}^*, p_{j}^*, s_{ij}^*\}$ by virtue of the fact
that these numbers lie in a compact space (the values provide an upper bound on the payments); replace the sequence by this subsequence.  We will show that this limit point is an equilibrium of the original pacing game, via the following claims.
\begin{itemize}

\item {\bf The allocation is feasible.} Since for each $l$ and $j$, $\sum_i x_{ij}^l \leq 1$, we must have $\sum_i x_{ij}^* \leq 1$.
Moreover, suppose that there exists $i$ with $v_{ij}>0$. Because $B_i > 0$, there is some positive value of $\alpha_i$ that guarantees $i$ stays below budget; hence $i$ will bid at least $\alpha_i v_{ij}$ for every $l$.  Thus, for sufficiently large $l$, $\epsilon^l$ will be sufficiently small that the reserve buyer wins none of $j$, and $\sum_{i'} x_{i'j}^l = 1$.  Hence $\sum_{i'} x_{i'j}^* = 1$ in this case.
Finally, if $x_{ij}^* > 0$, this implies that there exists $L$ such that for $l>L$, $\alpha_i^l v_{ij} \geq \max_{i'} \alpha_{i'}^l v_{i'j} - \epsilon^l$. Since $\lim_{l \rightarrow \infty} \epsilon^l=0$ this implies $\alpha_i^* v_{ij} \geq \max_{i'} \alpha_{i'}^* v_{i'j}$, so $i$ in fact is at least tied for the highest bid on $j$.

\item {\bf The payments are right.} $p_{j}^* = \lim_{l \rightarrow \infty} p_{j}^l$.  The latter is the highest other bid minus $\epsilon^l$.  The highest other bid converges to the highest other bid at the limit point (note the reserve bid goes to $0$), and $\epsilon^l$ goes to $0$.
Moreover, $s_{ij}^* = \lim_{l \rightarrow \infty} x_{ij}^l p_{j}^l = x_{ij}^* p_{j}^*$.

\item {\bf No buyer exceeds her budget.} We must show that for each buyer $i$,
  $\sum_j s_{ij}^* \leq B_i$.  Suppose not.  Then, there exists $\delta > 0$ such
  that for any $L$, we can find $l > L$ with $\sum_j s_{ij}^l \geq B_i +
  \delta$.  But if we let $L$ be such that for $l > L$, we have $H^l > M /
  \delta$, then the buyer's utility for the equilibrium of the resulting game $l$ is at most $M - \delta H^l < M-M = 0$.  (Spending on the artificial good only makes things worse.)  But the buyer can guarantee herself utility $0$ by setting $\alpha_i^l=0$, contradicting the fact that we have an equilibrium.  Hence no buyer exceeds her budget.

\item {\bf A buyer with $\alpha_i^*<1$ spends her entire budget.}  Suppose not, i.e., there is such a buyer with $\sum_j s_{ij}^* < B_i$.  Then we can find $L$ such that for $l > L$, both $\alpha_i^l \epsilon^l + \sum_j s_{ij}^l < B_i$ (because $\epsilon^l$ goes to $0$) and $\alpha_i^l < 1$.
But as we pointed out earlier, for such a buyer utility is strictly increasing in $\alpha_i^l$ (the strictness is due to the artificial good).  Thus this buyer is not best-responding, contradicting the fact that we have an equilibrium.  Hence a buyer
with $\alpha_i^*<1$ spends her entire budget.
\end{itemize}
\hfill\Halmos
\end{proof}

\subsection{Sensitivity and Multiplicity of Equilibria}\label{sec:multiple}

Knowing that at least one pacing equilibrium exists, we ask the following questions. First, can pacing equilibria be very sensitive to input parameters? Second, can a pacing game admit multiple pacing equilibria, and if so, can they differ significantly from each other?  We provide affirmative answers in each case.
For this, we need to quantify how different one equilibrium is from another. 
Fixing a feasible solution to a pacing game, we rely on the following three {\em objective functions} that capture instance-wide measures of interest.

\begin{definition}$~$\\
  $\bullet$ \textbf{Revenue} is the total spending in the game $(\sum_{ij} s_{ij})$.\\
  $\bullet$ \textbf{Social welfare} is the sum of winning valuations $(\sum_{ij} x_{ij} v_{ij})$.\\
  $\bullet$ \textbf{Paced welfare} is the sum of paced winning valuations $(\sum_{ij} x_{ij} \alpha_i v_{ij})$.
\end{definition}

Revenue and social welfare are natural objectives; we now justify why we consider paced welfare.  
If buyers' budgets are small, then their valuations are relevant only insofar as they indicate the {\em relative} values of the goods.  But they no longer make sense as an absolute dollar figure: if one were to double all the valuations, without touching the budget, nothing would change in the auctions.  The next observation makes this precise.

\begin{observation}
Given a pacing equilibrium where $\alpha_i < 1$ for some $i$, if we modify all of $i$'s valuations to $v'_{ij}=\beta_i v_{ij}$ where $\beta_i \geq \alpha_i$, then we can retain the original pacing equilibrium by setting $\alpha'_i = \alpha_i / \beta_i$. We call this an {\em irrelevant shift in valuations}.
\end{observation}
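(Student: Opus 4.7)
The plan is to show that the rescaling leaves every paced bid, and therefore every quantity appearing in Definition~\ref{def:pacing}, numerically unchanged, so that the triple $(\alpha', x, p)$ satisfies exactly the same conditions that $(\alpha, x, p)$ does. The computational heart of the argument is the one-line identity
\[
\alpha'_i\, v'_{ij} \;=\; \frac{\alpha_i}{\beta_i}\cdot\beta_i v_{ij} \;=\; \alpha_i v_{ij},
\]
which holds for every item $j$; for every bidder $i'\neq i$ nothing is altered at all, so $\alpha_{i'}v_{i'j}$ is trivially preserved.

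First I would verify that the proposed multiplier is admissible, i.e.\ $\alpha'_i\in[0,1]$. Nonnegativity is immediate, and the upper bound follows from the hypothesis $\beta_i\geq\alpha_i$, which yields $\alpha'_i=\alpha_i/\beta_i\leq 1$. Next, using the identity above, the set of paced bids $\{\alpha'_{i'} v'_{i'j}\}_{i',j}$ in the modified game coincides pointwise with $\{\alpha_{i'} v_{i'j}\}_{i',j}$ in the original game. Consequently, for the unchanged allocation $x$, the condition that $x_{ij}>0$ implies $i$'s paced bid is (tied for) highest is inherited directly, and the feasibility constraints $\sum_{i'} x_{i'j}\le 1$ with equality when some $v_{i'j}>0$ are unaffected (noting that $v_{ij}>0$ iff $v'_{ij}>0$).

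Then I would check the price condition: since $p_j$ in Definition~\ref{def:pacing} is the highest competing paced bid, and competing paced bids are unchanged, the same $p_j$ remains valid in the modified game. This in turn makes each spend $s_{ij}=p_j x_{ij}$ identical, so $\sum_j s_{ij}\le B_i$ continues to hold for every bidder. Finally, I would handle the ``strict inequality implies $\alpha=1$'' clause: for $i'\neq i$ it is automatic; for $i$ itself, the original assumption $\alpha_i<1$ means the equilibrium forced $\sum_j s_{ij}=B_i$, and since $s_{ij}$ is unchanged, the budget is still exhausted, so the clause is vacuous for $i$ regardless of whether $\alpha'_i$ ends up equal to $1$ (which occurs exactly when $\beta_i=\alpha_i$) or strictly below $1$.

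The only potential obstacle is the edge case $\beta_i=\alpha_i$, in which $\alpha'_i=1$ even though $i$ still exhausts her budget; I would simply note that Definition~\ref{def:pacing} permits a bidder at $\alpha=1$ to spend exactly the budget, so this causes no issue. No other subtleties arise because all changes are confined to bidder $i$'s valuations, and the rescaling was designed precisely to cancel in the paced bids.
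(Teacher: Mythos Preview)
Your verification is correct. The paper states this as an Observation without proof; your argument makes explicit the intended one-line reason---that $\alpha'_i v'_{ij}=\alpha_i v_{ij}$ so all paced bids, allocations, prices, and spends are unchanged---and correctly handles the admissibility of $\alpha'_i$ and the complementarity clause via the fact that $\alpha_i<1$ forces $\sum_j s_{ij}=B_i$, which persists after rescaling.
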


This leads us to a definition and a corresponding result.

\begin{definition}
A welfare measure is {\em robust to irrelevant shifts in valuations} if it produces the same value after an irrelevant shift in valuations.
A welfare measure {\em coincides with social welfare when budgets are large} if, whenever $\alpha_i=1$ for all buyers $i$, it evaluates to $\sum_{ij} x_{ij}v_{ij}$.
\end{definition}

\begin{proposition} \label{prop:characterization}
Paced welfare is the unique welfare measure that coincides with social welfare when budgets are large and is robust to irrelevant shifts in valuations.
\end{proposition}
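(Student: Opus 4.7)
The plan is to (i) verify that paced welfare satisfies both axioms, and (ii) establish uniqueness by using irrelevant shifts to reduce an arbitrary pacing equilibrium to one in which every multiplier equals $1$, so that property (a) then pins down any admissible $W$. For (i) both checks are immediate: when $\alpha_i=1$ for all $i$, paced welfare reduces to $\sum_{ij}x_{ij}v_{ij}$ by inspection; and under an irrelevant shift $v'_{ij}=\beta_i v_{ij}$, $\alpha'_i=\alpha_i/\beta_i$, each product $\alpha'_i v'_{ij}=\alpha_i v_{ij}$ is unchanged, so the paced welfare is invariant.

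For (ii) I would let $W$ be any welfare measure satisfying both axioms and fix a pacing equilibrium $(\alpha,x,p,s)$ with valuations $v$. For each bidder $i$ with $\alpha_i<1$ I would apply an irrelevant shift with $\beta_i=\alpha_i$, producing $v'_{ij}=\alpha_i v_{ij}$ and $\alpha'_i=1$; bidders already at $\alpha_i=1$ are left untouched. Each such shift is legitimate because $\alpha_i<1$ in a pacing equilibrium forces bidder $i$ to spend the budget exactly (third equilibrium condition), so $\alpha'_i=1$ is consistent, and the paced bids $\alpha'_i v'_{ij}=\alpha_i v_{ij}$ are preserved, so allocations, prices, spending, and budget feasibility are all retained. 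Sequentially applying these shifts yields a pacing equilibrium with all multipliers equal to $1$ and valuations $v'_{ij}=\alpha_i v_{ij}$. Property (a) then forces the welfare measure to evaluate to $\sum_{ij}x_{ij}v'_{ij}=\sum_{ij}x_{ij}\alpha_i v_{ij}$; by repeated use of robustness this equals $W$ at the original equilibrium, which is precisely the paced welfare.

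The one subtle point, and hence the main potential obstacle, is ensuring $\alpha_i>0$ so that the shift with $\beta_i=\alpha_i$ is well-defined. This follows from the equilibrium conditions themselves: if $\alpha_i=0$, then every bid of $i$ is zero, so $s_{ij}=0$ for every $j$, giving $\sum_j s_{ij}=0<B_i$ and forcing $\alpha_i=1$ by the third equilibrium condition, a contradiction. Once $\alpha_i>0$ is in hand, the reduction via simultaneous shifts is routine and the uniqueness conclusion is immediate.
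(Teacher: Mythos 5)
Your proof is correct and follows essentially the same approach as the paper: verify the axioms for paced welfare directly, then for uniqueness apply irrelevant shifts with $\beta_i=\alpha_i$ to each budget-constrained bidder to push all multipliers to $1$, invoke the coincides-with-social-welfare axiom, and use robustness to transfer the conclusion back to the original equilibrium (the paper phrases this as an induction on the number of bidders with $\alpha_i<1$, which is the same argument as your sequential application of shifts). Your explicit check that $\alpha_i>0$ in a pacing equilibrium — needed so that the division $\alpha_i/\beta_i$ with $\beta_i=\alpha_i$ is well-defined — is a worthwhile detail that the paper's proof leaves implicit.
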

\begin{proof}{Proof.}
It is straightforward to check that paced welfare satisfies the conditions.  To show that it does so uniquely, consider any welfare measure satisfying the two conditions and any feasible solution of a pacing game.  We prove that the welfare measure must coincide with paced welfare, by induction on the number of buyers $i$ with $\alpha_i<1$.  If there are $0$ such buyers, then this follows from the fact that the measure coincides with social welfare in this case.  Suppose we have shown it to be true with $k$ such buyers; we will show it with $k+1$.  Choose an arbitrary buyer $i$ with $\alpha_i<1$.  Modify the buyer's valuations to $v'_{ij}=\alpha_i v_{ij}$, and let $\alpha'_i = \alpha_i / \alpha_i = 1$.  This is an irrelevant shift in valuations, so the modification affects neither paced welfare nor the welfare measure under consideration.  But by the induction assumption, the two must coincide after the shift.  So they must have coincided before the shift as well.
\hfill\Halmos 
\end{proof}

Equipped with these objective functions, we look at concrete examples that show that equilibria are sensitive to budgets.
In particular, Examples~\ref{ex:paced welfare budget sensitive} and~\ref{ex:revenue budget sensitive} below show that small budget perturbations can cause large swings in paced welfare and revenue. Note that these examples admit a single equilibrium.
\begin{example} Large changes in objective function from small changes in budgets:
\begin{subexample} (Large paced welfare loss from small changes in budgets)
  \label{ex:paced welfare budget sensitive}
  Buyer $1$ has valuation $v_{11}=100$ and budget $B_1=1.01$. Buyer $2$ has valuation $v_{21}=1$ and budget $B_2=\infty$. Then we have a pacing equilibrium with $\alpha_1=\alpha_2=1$ where $1$ wins all of good $1$ for a \underline{\em paced welfare of $100$}. Moreover this is the unique pacing equilibrium because neither buyer can spend her whole budget.
  Now, reduce $B_1$ to $0.99$. We must still have $\alpha_2=1$. Hence, we must have $\alpha_1 \leq 0.01$, because otherwise $1$ will exceed her budget on good $1$. As a result, both buyers have a paced valuation of less than $1$, and thus, no matter the allocation, \underline{\em paced welfare is at most $1$}.
\end{subexample}
\begin{subexample} (Large revenue loss from small changes in budgets)
  \label{ex:revenue budget sensitive}
  Buyer $1$ has valuations $v_{11}=100$ and $v_{12}=100$, and budget $B_1=1.01$.  Buyer $2$ has valuations $v_{21}=1$ and $v_{22}=101$, and budget $B_2=\infty$.  Then we have a pacing equilibrium with $\alpha_1=\alpha_2=1$ where $1$ wins all of good $1$ at price $1$ and $2$ wins all of good $2$ at price $100$, for a \underline{\em total revenue of $101$}.  Moreover this is the unique pacing equilibrium: buyer $2$ cannot possibly spend his whole budget and hence must have $\alpha_2=1$, and given this, buyer $1$ cannot win any of good $2$ and will spend less than her whole budget on good $1$, so that $\alpha_1=1$ as well.
  Now, reduce $B_1$ to $0.99$.  We still must have $\alpha_2=1$.  Hence, we must have $\alpha_1 \leq 0.01$, because otherwise buyer $1$ will have to win all o good 1 and exceed her budget.  As a result, the second price on each good is less than one, and thus revenue from each good is at most $1$, for a \underline{\em total revenue of at most $2$}.
\end{subexample}
\end{example}

A second particularity of pacing equilibrium that we want to highlight is the possibility of \emph{multiplicity}, i.e., the existence of multiple equilibria, possibly substantially different in character.
Examples~\ref{ex:revenue_equilibria}, \ref{ex:welfare_equilibria_unpaced}, and \ref{ex:welfare_equilibria} given next, one after the other, show instances with multiple pacing equilibria. The examples focus on large difference in revenue, welfare, and paced welfare, respectively.

\begin{example} Two equilibria with large objective function difference:
\begin{subexample} (Two equilibria with large revenue difference) \label{ex:revenue_equilibria}
Let $v_{11}=v_{22}=100$, $v_{12}=v_{21}=1$, $v_{13}=v_{23}=99$,  and
$v_{14}=v_{34}=100$.  Let all other valuations be $0$. Moreover, let
buyers $1$ and $2$ have budget $1$ each, and let buyer $3$ have
budget $100$. One pacing equilibrium is
$\alpha_1=1$, $\alpha_2=0.01$, $\alpha_3=1$, where buyer $1$ wins good $1$
for $0.01$ and good $3$ for $0.99$, buyer $2$ wins good $2$ for $1$,
  and buyer $3$ wins good $4$ for $100$, resulting in a \underline{\em total revenue
  of $102$}. Another pacing equilibrium is
$\alpha_1=0.01$, $\alpha_2=1$, $\alpha_3=1$, where buyer $1$ wins good $1$
for $1$, buyer $2$ wins good $2$ for $0.01$ and good $3$ for $0.99$,
  and buyer $3$ wins good $4$ for $1$, resulting in a \underline{\em total revenue of
  $3$}.
\end{subexample}
\begin{subexample} (Two equilibria with large welfare difference) \label{ex:welfare_equilibria_unpaced}
Let $v_{11}=100$, $v_{22}=200$, $v_{12}=2$, $v_{21}=1$,
$v_{13}=v_{23}=99$, $v_{14}=0.01$, $v_{24}=1$, and $v_{34}=10000$.
Let all other valuations be $0$. Moreover, let $B_1=1$,  $B_2=2$, and
$B_3=0.01$.
One pacing equilibrium is $\alpha_1=1$, $\alpha_2=0.01$, $\alpha_3=1$,
where buyer $1$ wins good $1$ for $0.01$ and good $3$ for $0.99$,
buyer $2$ wins good $2$ for $2$, and buyer $3$ wins good $4$ for
  $0.01$, resulting in a \underline{\em total social welfare of $10399$}. Another
pacing equilibrium is $\alpha_1=0.01$, $\alpha_2=1$,
$\alpha_3=0.0001$, where buyer $1$ wins good $1$ for $1$; buyer $2$
wins good $2$ for $0.02$, good $3$ for $0.99$, and a fraction $0.99$
of good $4$ at $0.99$; and buyer $4$ wins a fraction $0.01$ of good
  $4$ at $0.01$.  This results in a \underline{\em total social welfare of $499.99$}.
\end{subexample}

\begin{subexample} (Two equilibria with large paced welfare difference)\label{ex:welfare_equilibria}
Let $v_{11}=v_{22}=100$, $v_{12}=v_{21}=1$, $v_{13}=v_{23}=99$,
$v_{14}=10000$, and $v_{24}=0$. Moreover, let buyers $1$ and $2$ have
budget $1$ each. One pacing equilibrium is $\alpha_1=1$, $\alpha_2=0.01$,
where buyer $1$ wins good $1$ for $0.01$, good $3$ for $0.99$, and
good $4$ for $0$, and buyer $2$ wins good $2$ for $1$, resulting in a
  \underline{\em total paced welfare of $100+99+10000+1=10200$}. Another pacing
equilibrium is $\alpha_1=0.01,\alpha_2=1$, where buyer $1$ wins good
$1$ for $1$ and good $4$ for $0$, and buyer $2$ wins good $2$ for
  $0.01$ and good $3$ for $0.99$, resulting in a \underline{\em total paced welfare of
  $1+100+100+99=300$}.
\end{subexample}
\end{example}

The last examples would seem to suggest that in practice it may be worthwhile to consider equilibrium selection procedures. 
We highlight that while the multiple equilibria in the examples have very different objective values, multiplicity might not happen often in practice, and the gaps may not be that large when it does.
That conclusion is driven from a computational study, presented in Section~\ref{sec:experiments}, which investigates how often multiplicity happens and how large the gaps are.

In work related to ours, \citet{balseiro2015repeated}, \citet{balseiro2021budget}, and \citet{balseiro2017dynamic} study stochastic versions of a pacing game, where the buyers draw their valuations independently, typically from a well-behaved CDF that is absolutely continuous and potentially bounded. The budget constraint is then typically only required to hold in expectation (\citet{balseiro2017dynamic} is an exception since they consider an adaptive dynamic setting).
The models put forward by those papers do not admit multiple equilibria, which motivates the question of why our model does. One potential conjecture would be that it is the non-smoothness of our discrete setting that enables the multiplicity.
We provide evidence against this conjecture by providing examples where valuations for a good are sampled, and the budget constraint is required to hold in expectation. First, we observe that a simple extension of one of our instances with multiplicity allows us to show that multiple equilibria exist even in the case of valuations drawn from an absolutely continuous CDF with bounded density (a smooth setting with correlated valuations). This also shows that even smooth instances with a continuum of goods exhibit multiplicity. Second, we show an example where two buyers with unequal budgets sample valuations uniformly and independently drawn from a simple discrete distribution (an iid valuation setting with a discrete distribution). Both examples are explained in Appendix~\ref{sec:stoch-valuations} in the e-companion.

Our examples show that neither absolute continuity of the CDF (or equivalently a continuum of goods), nor having iid valuations, is enough to guarantee uniqueness of pacing equilibria by itself. Based on our work, it seems plausible that if we assume both absolutely continuous CDFs with bounded density \emph{and} independence of valuations, then there is potentially a single unique pacing equilibrium. These findings are supported by those of \citet{balseiro2021budget}. They show that there is a unique equilibrium in several 2-buyer settings with independent exponential, uniform, Rayleigh, and Weibull distributions (their results do also use a non-zero reserve price and are thus slightly incomparable). They also show multiplicity in a 2-buyer example where valuations are independent from piecewise-linear CDFs (again these results utilize a reserve price as well). These results were added to the 2018 working version of their paper, which cites an earlier version of the present paper as showing that multiplicative pacing can lead to multiple and unstable equilibria.

Finally, let us go back to the interpretation of pacing equilibria as games between proxy bidders, assuming that the platform has full information because advertisers truthfully report their values and budgets. This information structure assumes that an advertiser may not be better off misreporting to strategically get to a better pacing equilibrium when the proxy bidders play the game. 
The following example shows that it is possible that an advertiser achieves a large gain in utility through a small change in reported values. 
In Section~\ref{sec:experiments}, we will empirically test whether practically advertisers have an incentive to misreport, 
and show that when there is competition our games do not create large incentive issues in practice.

\begin{example}[Large utility gain when slightly misreporting values]
\label{ex:misreporting}
Buyer $1$ has valuations $v_{11}=100$ and $v_{12}=100$, and budget $B_1=0.99$.  Buyer $2$ has valuations $v_{21}=0.98$ and $v_{22}=101$, and budget $B_2=\infty$.  Then we have a pacing equilibrium with $\alpha_1=\alpha_2=1$, where buyer $1$ wins all of good $1$ at price $0.98$ and Buyer $2$ wins all of good $2$ at price $100$. \underline{\em Buyer $2$'s utility for this outcome is $101-100=1$}. Moreover this is the unique pacing equilibrium: buyer $2$ cannot possibly spend his whole budget and hence must have $\alpha_2=1$, and given this, buyer $1$ cannot win any of good $2$ and will spend less than her whole budget on good $1$, so that $\alpha_1=1$ as well.
Now, increase the \emph{reported} $v_{21}$ to $1$.  We still must have $\alpha_2=1$, since buyer $2$ has infinite budget.  Hence, we must have $\alpha_1 \leq 0.01$, because otherwise buyer $1$ will exceed her budget on good $1$.  As a result, buyer $2$ wins all of good $2$, receiving value $101$ at a price no larger than $1$; buyer $2$ also wins some of good $1$, receiving nonnegative value and cost at most $1$. \underline{\em Buyer $2$'s utility for this outcome is at least $99$}.
\end{example}

\section{Relationship to Competitive Equilibrium}
\label{sec:competitive_equilibrium}

We now show that pacing equilibria are a refinement of competitive (Walrasian)
equilibria, a widely studied concept for understanding markets. These results
are in contrast to those for stochastic settings in
\citet{balseiro2021budget} and \citet{balseiro2017dynamic}, which do not have a
such a relationship to competitive equilibria. We define a competitive
equilibrium with budgets as follows.

\begin{definition}
  A {\em competitive equilibrium with budgets} consists of a price $p_j$ on every good $j$, and an
  allocation of goods to buyers such that every buyer buys a bundle that maximizes her utility, subject to her budget constraint.
  (A buyer is allowed to acquire goods partially.)  That is, buyer $i$'s bundle, consisting of fractions $\{x_{ij}\}$ that she obtains of each good $j$,
   must be in $\arg \max_{\{0 \leq x_{ij}\leq 1: \sum_j x_{ij} p_{j} \leq B_i\}} \{\sum_j x_{ij}(v_{ij}-p_{j}) \}$. Additionally, every good with a positive price must be fully allocated.
\end{definition}

Competitive equilibria have several attractive properties. For example, in a competitive equilibrium, buyers have \emph{no envy}, meaning that they prefer their own bundle to that of any other buyer (in a budget-adjusted sense when budgets are not equal). 

We can characterize the optimal actions of buyers as selecting goods in decreasing order of bang-per-buck. This will be helpful in the derivations below.
\begin{proposition}
  A bundle maximizes a buyer's utility under her budget constraint if and only if 
  she buys (parts of) goods in decreasing order of bang-per-buck ($v_{ij} / p_{j}$), starting with the highest, until she either runs out of budget or reaches goods such that $v_{ij}<p_{j}$.
\end{proposition}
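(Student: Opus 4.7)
The plan is to prove both directions by the classical fractional-knapsack exchange argument, the key observation being that spending $x_{ij}p_{j}$ dollars on item $j$ yields utility $x_{ij}(v_{ij}-p_{j}) = (x_{ij}p_{j})\bigl(v_{ij}/p_{j}-1\bigr)$. Thus each dollar spent on item $j$ contributes utility $v_{ij}/p_{j}-1$, which is strictly increasing in the bang-per-buck $v_{ij}/p_{j}$. The entire optimization then reduces to: for each available dollar, buy whichever remaining fractional item gives the highest marginal return, and stop buying as soon as that return turns negative.

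For the ``only if'' direction I would argue in three steps. First, if an allegedly optimal bundle has $x_{ij'}>0$ for some item $j'$ with $v_{ij'}<p_{j'}$, then decreasing $x_{ij'}$ strictly improves utility while loosening the budget constraint, so $x_{ij'}=0$ in any optimum. Second, suppose an optimum has both $x_{ij'}>0$ and $x_{ij}<1$ for two items with $v_{ij}/p_{j} > v_{ij'}/p_{j'} \ge 1$. Shifting spending by decreasing $x_{ij'}$ by $\delta/p_{j'}$ and increasing $x_{ij}$ by $\delta/p_{j}$ (for $\delta>0$ small enough to stay inside $[0,1]$) preserves the budget exactly and strictly increases utility by $\delta\bigl(v_{ij}/p_{j}-v_{ij'}/p_{j'}\bigr)>0$, a contradiction. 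Third, if budget remains unspent while some $j$ with $v_{ij}>p_{j}$ still has $x_{ij}<1$, increasing $x_{ij}$ strictly increases utility while remaining feasible. Together these three facts say precisely that any optimal bundle is built by processing items in (weakly) decreasing order of $v_{ij}/p_{j}$, fully acquiring each before moving to the next, and terminating exactly when either the budget is exhausted or the next candidate has $v_{ij}<p_{j}$.

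For the ``if'' direction, I would simply observe that any bundle produced by the stated greedy procedure satisfies these three necessary conditions, so the exchange argument can never yield a feasible improvement on it, which means it is optimal.

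The only subtlety—rather than a genuine obstacle—is tie-breaking among items sharing the same bang-per-buck ratio. Any order within a tied group, and any split of a partially bought tied group across its items, yields the same total utility and the same total spending, so the ``if and only if'' should be read as identifying the set of optima with the set of bundles consistent with \emph{some} decreasing-bang-per-buck ordering. I would state this explicitly once at the beginning of the proof so that the rest of the argument can proceed without repeatedly qualifying the ordering.
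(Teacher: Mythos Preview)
The paper does not actually supply a proof of this proposition; it is stated as a standard characterization (``This will be helpful in the derivations below'') and left unproved. Your proposal is the classical fractional-knapsack exchange argument and is correct: the three local-improvement steps (drop items with $v_{ij}<p_j$; shift spending from lower to higher bang-per-buck when possible; spend remaining budget on any item with $v_{ij}>p_j$) exactly characterize the optima, and your treatment of ties is appropriate. The only small point worth adding is the degenerate case $p_j=0$ with $v_{ij}>0$, where bang-per-buck is infinite and the item is taken in full at zero cost; this is consistent with your ordering but falls outside the ``utility per dollar spent'' formula, so a one-line remark would make the argument airtight.
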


\begin{proposition}\label{prop:pacing_to_ce}
  For every pacing equilibrium, there is an equivalent competitive equilibrium.
\end{proposition}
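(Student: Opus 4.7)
The plan is to take the pacing equilibrium's own prices $p_j$ and allocation $\{x_{ij}\}$ as the candidate competitive equilibrium and verify both of its requirements. Market clearing is immediate: if $p_j>0$, some bidder bids strictly positively on $j$, so some $v_{i'j}>0$, and the pacing equilibrium condition forces $\sum_i x_{ij}=1$. Budget feasibility $\sum_j s_{ij}\le B_i$ is inherited directly. What remains---and is the heart of the proof---is to show that each bidder's bundle maximizes her utility subject to her budget constraint.

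For this I would invoke the preceding proposition, which reduces the task to checking that bidder $i$'s allocation buys items in decreasing order of bang-per-buck $v_{ij}/p_j$ until her budget is exhausted or $v_{ij}<p_j$. The key observation, read off directly from Definition~\ref{def:pacing}, is a threshold at $1/\alpha_i$: if $x_{ij}>0$ then $\alpha_i v_{ij}\ge p_j$, so $v_{ij}/p_j\ge 1/\alpha_i$; if $x_{ij}=0$ then $\alpha_i v_{ij}\le p_j$, so $v_{ij}/p_j\le 1/\alpha_i$. (Note that $\alpha_i>0$: otherwise bidder $i$ bids zero everywhere and $\sum_j s_{ij}=0<B_i$ would force $\alpha_i=1$ by the last bullet of the definition, a contradiction. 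Items with $p_j=0$ are a simple side-check.) A useful consequence is that any item with $v_{ij}/p_j > 1/\alpha_i$ must be won entirely by $i$, since her bid is then strictly highest and the item is fully allocated.

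Next I would split on $\alpha_i$. If $\alpha_i<1$, the definition forces $\sum_j s_{ij}=B_i$, so the bidder's budget is exhausted entirely on items of bang-per-buck at least $1/\alpha_i>1$---precisely the top of the greedy ordering---while items below the threshold remain unbought, matching the greedy stopping rule. If $\alpha_i=1$, then $v_{ij}\ge p_j$ on items she wins and $v_{ij}\le p_j$ on items she does not, so no beneficial purchase is being skipped regardless of whether the budget binds.

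The part I expect to require the most care is ties, where several bidders share an item at bang-per-buck equal to the threshold $1/\alpha_i$. Because the tie-breaking fractions $x_{ij}$ in the pacing equilibrium are not pinned down by the greedy ordering, I need to argue that any alternative feasible bundle obtained by shifting spending among tied items yields exactly the same utility (each unit of a tied item gives the same marginal utility per dollar, $1/\alpha_i - 1$) and that any bundle reaching items strictly below the threshold does strictly worse. Together these place the given bundle in the $\arg\max$ required by the competitive equilibrium definition, completing the verification.
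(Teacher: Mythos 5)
Your proposal is correct and takes essentially the same approach as the paper: keep the pacing-equilibrium prices and allocation, then verify optimality via the bang-per-buck threshold $1/\alpha_i$, splitting on whether $\alpha_i<1$ (budget exhausted on items at or above the threshold) or $\alpha_i=1$ (every item with positive margin is won). The paper's proof is more terse and does not spell out the $\alpha_i>0$ check or the tie-breaking discussion, but both are routine and your extra care is welcome rather than a deviation.
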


\begin{proof}{Proof.}
Given the pacing equilibrium, set the price of each good equal to the second-highest paced bid on it (possibly equal to the highest bid), and use the same allocation as in the pacing equilibrium.
Note this means buyers also pay the same as in the pacing equilibrium.
Every buyer $i$ that does not run out of budget (and therefore has multiplier $\alpha_i=1$) buys every good $j$ with $v_{ij}>p_{j}$, because the valuation being above the price means that buyer was uniquely the highest bidder on it in the pacing equilibrium; and buys no good $j$ with $v_{ij}<p_{j}$, because the valuation being below the price means that the buyer was not a highest bidder for it.
A buyer that runs out of budget is spending her money on maximum bang-per-buck goods, because $x_{ij}=1$ for every good for which $\alpha_iv_{ij} > p_j \iff \frac{v_{ij}}{p_j} > \frac{1}{\alpha}_i \geq 1$.
For a good $j$ such that $\alpha_iv_{ij} = p_j \iff \frac{v_{ij}}{p_j} = \frac{1}{\alpha}_i \geq 1$ she may buy a fractional amount: such goods provide worse bang-per-buck than any other good that $i$ buys, and thus since they are spending their whole budget they do not wish to buy more of $j$.
No buyer buys anything that is priced above her valuation, because the price being above her valuation means that she did not have the highest (paced) bid on that good in the pacing equilibrium.
\hfill\Halmos
\end{proof}

Proposition~\ref{prop:pacing_to_ce} shows that SPPE inherits all the desirable properties of competitive equilibrium, which are quite desirable to buyers, and thus also to the platform. For example, it follows that SPPE is guaranteed to have no envy among buyers (in a budget-adjusted sense), while being Pareto optimal. Another interesting consequence is that SPPE inherits \emph{strategyproof in the large} properties from competitive equilibrium~\citep{azevedo2019strategy,kroer2019scalable}.

The converse is not true: pacing equilibria strictly refine competitive equilibria.
For example, consider a setting with a single buyer and good, with
value $v_{11}=1$. All pacing equilibria have zero revenue, but a competitive
equilibrium can have $p_{1}=\frac{1}{2}$.
Hence, a competitive equilibrium can result in higher revenue than any pacing equilibrium.
The opposite direction is more interesting: a competitive equilibrium can yield a {\em lower} revenue than any pacing
equilibrium. The intuition is that setting a high price on one good
can drain some buyer's budget, thereby making that buyer effectively ``paced,''
as shown below.
\begin{example}[Competitive equilibrium with more revenue than pacing equilibrium]
  Suppose we have 3~buyers and 3~goods. 
  Buyer~1 values good~1 at $101$, buyer~2 values goods 1, 2 and 3 at $100$, $200$, and $10$, respectively, and buyer~3 values good~3
  at $1$. All other valuations are $0$. 
  Buyer~2 has budget $10.1$, the other two have budget $\infty$. 
  Since buyer~2 faces no competition for good~2, in a pacing equilibrium, buyer~2 gets it for free and will pay at most $1$.
  Hence, no buyer will be paced, resulting in independent second price
  auctions. \underline{\em The revenue for good~1 is $100$}. However, in a
  competitive equilibrium, we can arbitrarily set a price of $10$ for good~2.
  We then price good~3 at $1$ and let buyer~2 buy one tenth of it,
  thereby spending his budget. Finally, we price good~1 at $101$ so buyer~2
  will no longer want to buy it. (For buyer~2, the goods ordered by bang per
  buck are 2, 3 and 1, which satisfies the competitive equilibrium conditions.) 
  \underline{\em Revenue has plummeted to $11+10+1=22$}.
  \label{ex:ce_lower_rev}
\end{example}
Nonetheless, every competitive equilibrium can be reinterpreted as a pacing equilibrium as well.
\begin{proposition}
  For every competitive equilibrium, one can construct an equivalent pacing equilibrium
  after possibly adding a single buyer who acts as a price setter but who does not win anything.
  \label{prop:ce_to_pacing}
\end{proposition}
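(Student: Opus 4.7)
The plan is to turn a given competitive equilibrium $(p, x)$ into a pacing equilibrium by (i) reading off pacing multipliers from the bang-per-buck structure guaranteed by the preceding proposition, (ii) adding one auxiliary bidder whose paced bid on every item is exactly the CE price, and (iii) verifying each clause of Definition~\ref{def:pacing} in turn.

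For step (i), for each original bidder $i$ whose budget is not binding in the CE, set $\alpha_i = 1$. For each bidder $i$ whose budget is binding, let $r_i = \min\{v_{ij}/p_j : x_{ij} > 0\}$; by the preceding proposition $r_i \geq 1$, bidder $i$ fully buys every item with $v_{ij}/p_j > r_i$, possibly partially buys items at $v_{ij}/p_j = r_i$, and buys nothing below. Set $\alpha_i = 1/r_i \in (0,1]$. The consequence I will use repeatedly is that $\alpha_i v_{ij} \geq p_j$ on every item $i$ buys (with equality whenever $i$ is a partial buyer of $j$) and $\alpha_i v_{ij} \leq p_j$ on every item $i$ does not buy. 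For step (ii), introduce bidder $0$ with $v_{0j} = p_j$, $\alpha_0 = 1$, $x_{0j} = 0$, and any $B_0 > 0$; her paced bid is exactly $p_j$ on every item. If the CE leaves any item with $p_j = 0$ unallocated, distribute the remainder arbitrarily among the original bidders at zero cost, which does not disturb their CE-optimal bundles.

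For step (iii), feasibility and full allocation follow from the CE together with the bookkeeping above. The highest-bid clause holds because every winner bids at least $p_j$ while every non-winner among the originals bids at most $p_j$ and bidder $0$ bids exactly $p_j$, so the winners are tied for highest. The price clause holds because the highest bid other than a winner's coincides with $p_j$: when the winner is unique, bidder $0$ supplies this value; when several original bidders share an item, they are all partial buyers at their common threshold and hence tied at $p_j$. The budget clause holds because each original bidder spends $\sum_j x_{ij} p_j$ as in the CE, which is strictly below $B_i$ exactly when $\alpha_i = 1$, while bidder $0$ spends $0 \leq B_0$ with $\alpha_0 = 1$. The main obstacle is the pricing clause when a unique CE winner $i$ has $\alpha_i v_{ij} > p_j$ strictly: I must know that no other original bidder bids above $p_j$, which is exactly the non-buyer bound derived in step (i) and ultimately stems from the CE utility-maximization condition (a bidder with slack budget and $v_{i'j} > p_j$ would have bought $j$; a budget-binding bidder with $v_{i'j}/p_j > r_{i'}$ would have bought $j$ ahead of her threshold items). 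Tie-breaking corner cases (bidder $0$ tied with a unique winner, or items with $p_j = 0$) are absorbed by the freedom Definition~\ref{def:pacing} grants in allocating among tied highest bidders.
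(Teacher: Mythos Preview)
Your proposal is correct and follows essentially the same approach as the paper: add an auxiliary bidder whose valuation on each item equals the CE price, assign $\alpha_i=1$ to slack-budget bidders and $\alpha_i$ equal to the reciprocal of the minimum bang-per-buck among purchased items for budget-binding bidders, then verify that winners have (tied) highest paced bids and the auxiliary bidder sets the second price. Your treatment is slightly more explicit about corner cases (zero-price items, the budget of the auxiliary bidder), but the construction and the key observations---non-winners bid at most $p_j$, partial buyers bid exactly $p_j$---are the same.
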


\begin{proof}{Proof.}
  Given the competitive equilibrium, add a buyer with infinite budget who bids exactly $p_j$ (as in the
  competitive equilibrium) on every good. Use the same allocation as in the
  competitive equilibrium (so the new buyer wins nothing). Buyers who bought every good for which their
  valuations exceeded the price are not paced. Buyers who ran out of budget are
  paced as follows. Since they bought goods in order of maximum bang-per-buck,
  for each such buyer $i$, consider the good $j$ with minimum $\frac{v_{ij}}{p_j}$
  of which she still bought some. Define $\alpha_i = p_j/v_{ij}$ for that good.

  We must show that every good is in fact won by the highest paced buyer for it
  and that the added buyer is always the second highest (allowing for ties).
  First, we show that the added buyer is never the uniquely highest bid, because
  its bids are always (weakly) exceeded by any buyer who wins (some of) the good
  in the competitive equilibrium. If that buyer is an unpaced buyer, we must
  have $v_{ij} \geq p_j$, because otherwise she would not have bought the good in
  the competitive equilibrium. If it is a paced buyer, because she buys some of
  $j$ in the competitive equilibrium, it follows that $\frac{v_{ij}}{p_j} \ge
  \frac{1}{\alpha_i}$ by the definition of $\alpha_i$. Then,
  $\alpha_iv_{ij} \geq p_j$.

  Next, we show that there cannot be two or more buyers with paced bids strictly
  higher than that of the added buyer. For suppose there are; there is at least one
  that will not win the entire good. If that buyer is not paced, then we have
  $v_{ij}>p_j$, but this leads to a contradiction because unpaced buyers must
  have won all such goods completely in the competitive equilibrium. If the buyer is
  paced, we have $\alpha_iv_{ij} > p_j \iff \frac{v_{ij}}{p_j} >
  \frac{1}{\alpha_i}$. By the definition of $\alpha_i$ that means there is some other
  good $j'$ with $\frac{v_{ij'}}{p_j'} = \frac{1}{\alpha_i} <
  \frac{v_{ij}}{p_j}$ of which $i$ bought some in the competitive equilibrium. But
  this leads to a contradiction, because if so, then $i$ should have bought all of
  $j$ in the competitive equilibrium before moving on to $j'$.

  It follows that every buyer winning part of a good has the highest paced bid
  on that good and the added buyer is always (possibly tied for) second. This
  means that the allocation and prices are consistent with the definition of
  pacing equilibrium.
\hfill\Halmos
\end{proof}

While Proposition~\ref{prop:ce_to_pacing} shows that every competitive equilibrium can be implemented by adding an additional price-setting buyer, this proposition also has another interesting interpretation: the price-setting buyer could also be implemented via reserve prices. Thus, any competitive equilibrium can be implemented by constructing an SPPE with reserve prices.

Combining Example~\ref{ex:ce_lower_rev} with
Proposition~\ref{prop:ce_to_pacing}, we obtain a revenue nonmonotonicity result.
\begin{corollary}
  Adding a buyer may decrease the revenue at a pacing equilibrium.
\end{corollary}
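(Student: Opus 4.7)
The plan is to combine the two ingredients that immediately precede the corollary. First I would take the instance from Example~\ref{ex:ce_lower_rev} and record two observations about it: (i) before any modification, every pacing equilibrium of the 3-bidder/3-item instance yields revenue at least $100$ (in fact the example argues that bidder~2 cannot be paced, so the auctions reduce to independent second-price auctions, and item~1 alone contributes $100$ in revenue), and (ii) the same instance admits a competitive equilibrium whose total revenue is $22$.

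Next I would invoke Proposition~\ref{prop:ce_to_pacing} on the competitive equilibrium from (ii). That proposition states that any competitive equilibrium can be turned into a pacing equilibrium on the same allocation and prices, at the cost of introducing at most one additional bidder (with infinite budget) who bids each item's price and wins nothing. Applying this construction to the $22$-revenue competitive equilibrium of Example~\ref{ex:ce_lower_rev} produces a pacing equilibrium on the augmented instance (4 bidders, 3 items) whose revenue is still $22$, since the added bidder wins nothing and the prices and allocation are preserved.

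Putting the two pieces together gives the claim: on the original 3-bidder instance every pacing equilibrium has revenue at least $100$, whereas on the instance obtained by adding the single price-setter bidder there exists a pacing equilibrium with revenue $22<100$. Hence adding a bidder can strictly decrease the revenue at a pacing equilibrium.

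The only potentially subtle point — and the one I would verify explicitly when writing the full proof — is that the added bidder really can be injected without creating any new, higher-revenue pacing equilibrium that would invalidate the comparison; but since the corollary only asserts that revenue \emph{may} decrease (i.e., existence of a lower-revenue equilibrium after augmentation), it suffices to exhibit the single pacing equilibrium produced by Proposition~\ref{prop:ce_to_pacing}, so no such worst-case analysis is actually required.
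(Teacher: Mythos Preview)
Your proposal is correct and matches the paper's approach exactly: the paper itself just states that the corollary follows by ``combining Example~\ref{ex:ce_lower_rev} with Proposition~\ref{prop:ce_to_pacing},'' and your write-up makes precisely that combination explicit. Your closing observation is also right---since the claim only asserts that revenue \emph{may} decrease, exhibiting the single low-revenue pacing equilibrium produced by Proposition~\ref{prop:ce_to_pacing} on the augmented instance suffices, and no analysis of other equilibria of the augmented instance is needed.
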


Finally, the {\em first fundamental theorem of welfare economics} states that
competitive equilibria are Pareto optimal. Although this is a
known result, we include a direct proof for our setting in Appendix~\ref{sec:missingproofs} in the e-companion.
This, together with Proposition~\ref{prop:pacing_to_ce}, implies that pacing equilibria are Pareto optimal as well.
\begin{proposition}\label{prop:ce_pareto}
Any pacing equilibrium is Pareto optimal (when considering the utilities of both the seller and buyers).
\label{prop:pacing_pareto}
\end{proposition}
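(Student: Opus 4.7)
The plan is to exploit the relationship to competitive equilibria established in Proposition~\ref{prop:pacing_to_ce}. That proposition converts any pacing equilibrium into a competitive equilibrium with the same allocation and the same per-unit prices, and hence the same per-bidder payments, per-bidder utilities, and seller revenue. Since Pareto optimality depends only on the resulting utility vector, Pareto optimality of the associated competitive equilibrium transfers immediately back to the pacing equilibrium. The remaining work is the first welfare theorem specialized to our setting: every competitive equilibrium with budget caps is Pareto optimal when considering bidders' utilities together with the seller's revenue.

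For this first welfare theorem I would argue by contradiction. Suppose $(x', t')$ is a feasible alternative (with $\sum_i x'_{ij} \leq 1$ and $t'_i \leq B_i$) that Pareto dominates the CE $(x^*, t^*, p^*)$. Define $E'_i = \sum_j x'_{ij} p^*_j$, the spending of bidder $i$'s alternative bundle evaluated at CE prices. The key lemma is that $t'_i \leq E'_i$ for every bidder. Bidders whose alternative bundle is budget-feasible at CE prices ($E'_i \leq B_i$) give this from CE optimality combined with the Pareto inequality $u'_i \geq u^*_i$: CE optimality yields $\sum_j x'_{ij} v_{ij} - E'_i \leq u^*_i$, while Pareto gives $\sum_j x'_{ij} v_{ij} - t'_i \geq u^*_i$, so subtracting forces $t'_i \leq E'_i$. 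Bidders whose alternative bundle is unaffordable at CE prices ($E'_i > B_i$) give the inequality trivially from $t'_i \leq B_i$.

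Summing this pointwise inequality and invoking market clearing in the CE ($\sum_i x^*_{ij} = 1$ whenever $p^*_j > 0$) together with feasibility $\sum_i x'_{ij} \leq 1$ bounds the alternative revenue by the CE revenue: $R' \leq \sum_i E'_i \leq \sum_j p^*_j = R^*$. Combined with the Pareto inequality $R' \geq R^*$, equality then propagates throughout---first $R' = R^*$, then $t'_i = E'_i$ pointwise, and then $u'_i = u^*_i$ pointwise via the CE-optimality step---while feasibility rules out any bidder with $E'_i > B_i$ (else $t'_i = E'_i > B_i$). All Pareto inequalities must therefore be equalities, contradicting the assumed strict improvement somewhere.

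The main obstacle I anticipate is precisely handling the bidders whose alternative bundles are unaffordable at CE prices, since for them CE optimality cannot be invoked directly. The fix is to observe that the budget-feasibility constraint $t'_i \leq B_i$ supplies the needed pointwise inequality $t'_i \leq E'_i$ for free in those cases, so the two cases combine uniformly and the contradiction emerges from a single market-clearing bound on total revenue.
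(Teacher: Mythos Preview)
Your proposal is correct and follows essentially the same route as the paper: reduce to competitive equilibrium via Proposition~\ref{prop:pacing_to_ce}, then prove the first welfare theorem by establishing $t'_i \le \sum_j x'_{ij}p^*_j$ for every bidder (splitting on whether the alternative bundle is affordable at CE prices) and summing against market clearing. The paper's write-up differs only cosmetically in the endgame, branching on whether some $t'_i < \sum_j x'_{ij}p^*_j$ is strict rather than forcing all equalities from $R' \ge R^*$ as you do.
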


\section{Computing Pacing Equilibria}
\label{sec:hardness}

In this section, we look at algorithmic issues regarding computation of equilibria. We start with some complexity and hardness results, continue exploring whether iterating best responses will be useful to find equilibria, and present a MIP formulation that can helps us find equilibria.

\subsection{Complexity Results}

Motivated by equilibrium existence, and having defined the relevant objectives, 
we investigate the complexity of computing an equilibrium that optimizes an objective.
(We leave open the complexity of identifying an arbitrary equilibrium.)
Using a pacing equilibrium gadget that captures binary variables, we
can reduce {\sc 3SAT} to our problem.
An instance of {\sc 3SAT} consists of a tuple $(V,C)$, where $V$ is a set of Boolean
variables, and $C$ is a set of clauses of the form $\left(l_1 \lor l_2 \lor
l_3\right)$ with $l_i$ representing literals.
We define the decision versions of our problems and show hardness results for them.

\begin{definition}
  We are given goods, buyers, buyers' valuations for goods, buyers' budgets,
  and a number $T$. 
  {\sc MAX-REVENUE-PACING} consists in deciding 
  whether there exists a pacing equilibrium that
  achieves revenue at least $T$.
  {\sc MAX-WELFARE-PACING} and {\sc MAX-PACED-WELFARE-PACING} are similar but for
  social welfare, and 
  paced social welfare, respectively.
\end{definition}

\begin{theorem}
\label{th:max_rev}
  {\sc MAX-REVENUE-PACING}, {\sc MAX-WELFARE-PACING} and {\sc MAX-PACED-WELFARE-PACING} are NP-complete.
\end{theorem}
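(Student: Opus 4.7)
The plan is to show (i) membership in NP and (ii) NP-hardness via a single reduction from \textsc{3SAT} that handles all three objectives simultaneously.

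For NP membership, I would argue that a pacing equilibrium admits a polynomial-size certificate. Specifically, guess (a) the set $S$ of bidders with $\alpha_i<1$, (b) for each item $j$ the set $W_j \subseteq N$ of bidders whose paced bids tie for the highest, and (c) the identity of a ``price setter'' on each item. Given this combinatorial data, all remaining quantities $\{\alpha_i, x_{ij}, p_j, s_{ij}\}$ must satisfy a linear system: budget tightness $\sum_j s_{ij}=B_i$ for $i\in S$, $\alpha_{i}v_{ij}=\alpha_{i'}v_{i'j}$ for $i,i'\in W_j$, $x_{ij}=0$ for $i\notin W_j$, $\sum_{i\in W_j}x_{ij}=1$ when any $v_{ij}>0$, $p_j$ equal to the highest losing bid, and the strict inequalities $\alpha_iv_{ij}>\alpha_{i'}v_{i'j}$ for $i\in W_j,\ i'\notin W_j$. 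This is polynomial-sized LP feasibility with a rational solution of polynomial bit-size, and the objective is a linear function of the solution, so membership is verified in polynomial time.

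For NP-hardness, I would reduce from \textsc{3SAT} using the gadget structure suggested. For each variable $x_k$, introduce a \emph{variable gadget} with two bidders $T_k,F_k$ and a small set of items designed so that in every pacing equilibrium \emph{exactly one} of $\alpha_{T_k},\alpha_{F_k}$ equals $1$ while the other is forced to be small. The key idea is to give $T_k$ and $F_k$ a shared item that they can only afford if essentially one of them is fully paced down, together with two ``private'' items (one valued only by $T_k$, one only by $F_k$) whose second-highest bids are raised by auxiliary infinite-budget reserve bidders. The binary choice encodes a truth assignment. For each clause $C_\ell=(l_{\ell,1}\lor l_{\ell,2}\lor l_{\ell,3})$, introduce a \emph{clause gadget} consisting of a single item $c_\ell$ bid on by the three variable-gadget bidders corresponding to the literals of $C_\ell$ and by one fixed ``hungry'' reserve bidder with infinite budget. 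Valuations are calibrated so that, on $c_\ell$, the second-price (and hence revenue/welfare/paced welfare contribution) takes a large value if at least one literal-bidder has $\alpha=1$, and a small value otherwise. Setting $T$ to the sum of baseline contributions from variable gadgets plus the high contribution from every clause gadget, the existence of an equilibrium meeting $T$ is equivalent to the existence of a satisfying assignment. Because revenue, welfare, and paced welfare on the clause item $c_\ell$ coincide up to the $\alpha$ of the unique winner (which by construction equals $1$), a single reduction yields all three hardness statements; NP-completeness then follows from (i).

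The main obstacle will be the variable gadget: I need the binary-choice behavior to hold in \emph{every} pacing equilibrium, not just in one convenient equilibrium, despite the continuous nature of $\alpha_i$ and the fractional tie-breaking freedom permitted by Definition~\ref{def:pacing}. The delicate point is ruling out intermediate solutions where $\alpha_{T_k}$ and $\alpha_{F_k}$ both lie in $(0,1)$ with the shared item split between them. I expect this to require chaining two carefully calibrated items per variable so that any such ``split'' state forces one of the two bidders to violate either a budget constraint or the definitional requirement that $\alpha_i<1$ implies full budget exhaustion. Once that rigidity is established, the rest of the reduction (clause gadget, objective threshold, and the verification that every candidate equilibrium decodes cleanly to a Boolean assignment) is routine.
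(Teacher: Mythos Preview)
Your high-level plan---reduce from \textsc{3SAT} with a two-bidder variable gadget and per-clause items backed by an infinite-budget reserve bidder---is exactly the paper's approach. Two points, however, deserve correction.

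First, you are making the variable gadget work harder than it needs to. You aim to force, in \emph{every} pacing equilibrium, exactly one of $\alpha_{T_k},\alpha_{F_k}$ to equal $1$, and you flag the intermediate ``split'' states as the main obstacle. The paper sidesteps this entirely: its gadget (Example~\ref{ex:binary_choice}) only guarantees that \emph{at most one} of the two multipliers can be large (Proposition~\ref{prop:binary_choice} is a one-line budget count: if both were $\geq 3\alpha$, the first two items alone would cost more than the combined budget). The forward direction then \emph{exhibits} an equilibrium with one multiplier equal to $1$; the backward direction only needs that the high-multiplier bidders form a consistent assignment. No rigidity against intermediate $\alpha$'s is required.

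Second, the claim that a single reduction handles all three objectives does not go through as stated. On a clause item won by your infinite-budget ``hungry'' bidder, the revenue is the second price (which does depend on whether some literal has $\alpha=1$), but the welfare and paced welfare are simply that bidder's valuation times its multiplier---a constant, independent of clause satisfaction. The paper therefore uses two different clause constructions: for revenue, the reserve bidder has the \emph{higher} valuation and always wins, so the second price encodes satisfaction; for welfare and paced welfare, the reserve bidder has the \emph{lower} valuation, so a literal bidder with sufficiently high $\alpha$ wins and the winner's value encodes satisfaction. You will need the same split (or a more clever unified gadget than the one you describe).

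A smaller issue: your NP-membership certificate is not an LP as written, because the budget constraint $\sum_j p_j x_{ij}=B_i$ is bilinear in $(p_j,x_{ij})$ once both are variables. You need to guess enough additional structure (or appeal to the paper's MIP, whose integer variables serve exactly this role) to linearize.
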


While full proofs are deferred to Appendix~\ref{sec:missingproofs} in the e-companion because they are technical, we provide the intuition for the
proof here. To get the results, we rely on Example~\ref{ex:binary_choice},
given below. This is an auction-market instance that models binary decisions.
We use one instance for each
variable, with both buyers representing literals true and false. Given a 
{\sc 3SAT} instance, we construct an auction market in which
additional buyers and the objectives encode whether all clauses are satisfied.

\begin{examplee}[Gadget for binary decisions]
  Given $K_1>0$, $\alpha>0$, $\delta \geq 0$ (with $\alpha+\delta<1$), and small
  $\epsilon$, let $K_2=\frac{1-\alpha-\delta}{2\alpha}K_1$. Let
  $v_{11}=v_{12}=v_{21}=v_{22}=K_2$, $v_{23}=v_{14}=K_1$, and
  $v_{13}=v_{24}=K_1/\alpha + \epsilon$. Both buyers have budget $K_1$. One
  pacing equilibrium is $\alpha_1=1$, $\alpha_2=\alpha$. This results in buyer $1$
  winning goods $1$, $2$, and $3$, for a total price of $2\alpha K_2 + \alpha
  K_1 = (1-\alpha-\delta)K_1+\alpha K_1 = (1-\delta) K_1$, and buyer $2$
  winning good $4$ for a total price of $K_1$. By symmetry, there is another
  equilibrium with $\alpha_1=\alpha$, $\alpha_2=1$, in which buyer $2$ retains
  $\delta K_1$ of his budget.
\label{ex:binary_choice}
\end{examplee}

For small $\alpha$ and $\delta$, this instance does not admit a 
pacing equilibrium where both buyers have even a moderately high
multiplier. Hence, if we were interested in pacing equilibria
with high multipliers, we can choose to make either $\alpha_1$ or $\alpha_2$ as
high as possible, but we cannot attempt to make both of them somewhat high at the
same time.

These hardness results limit the performance that we may expect from simple dynamics.
Hence, it may be worthwhile to attempt to intelligently guide the dynamics to improve the chances of ending up at a desirable equilibrium.

A natural question to ask is whether it is possible to approximately maximize any of these objectives that lead to NP-complete problems. This would be a very desirable result. However, an approximation algorithm would need to output a valid pacing equilibrium. Whether an arbitrary pacing equilibrium can be computed in polynomial time is itself an interesting open problem. We believe that computing a pacing equilibrium could be a PPAD-complete problem, which would preclude approximation algorithms.

\subsection{Iterated Best Responses}
A standard approach to find the equilibria of a game is to iterate best responses.  In many games, this is a reasonable learning procedure that converges to an equilibrium.
However, for pacing games this is not guaranteed to work if we start from an inadequate solution. 
Indeed, if we sequentially set each buyer's bids via best-responding multiplicatively paced bidding, we may end up with a cycling sequence of pacing vectors, as the following example demonstrates.

\begin{table}
  \caption{Valuations resulting in cycling best responses}
  \label{fig:cycling}
  \centering
  \small
  \begin{tabular}{rrrrrrr}
    $i$ & $v_{i,1}$ & $v_{i,2}$ & $v_{i,3}$ & $v_{i,4}$ & $v_{i,5}$ & $v_{i,6}$\\
    \hline
1& 100.0  & 1300.0 & 123.0  & 0.0    & 11.0   & 0.0    \\
2& 0.0    & 6503.0 & 300.6  & 501.0  & 0.0    & 25.0   \\
3& 50.0 & 0.0  & 0.0  & 500.0& 10.0 & 5.0  
  \end{tabular}
\end{table}

\begin{figure}
  \centering
	\includegraphics[width=0.8\columnwidth]{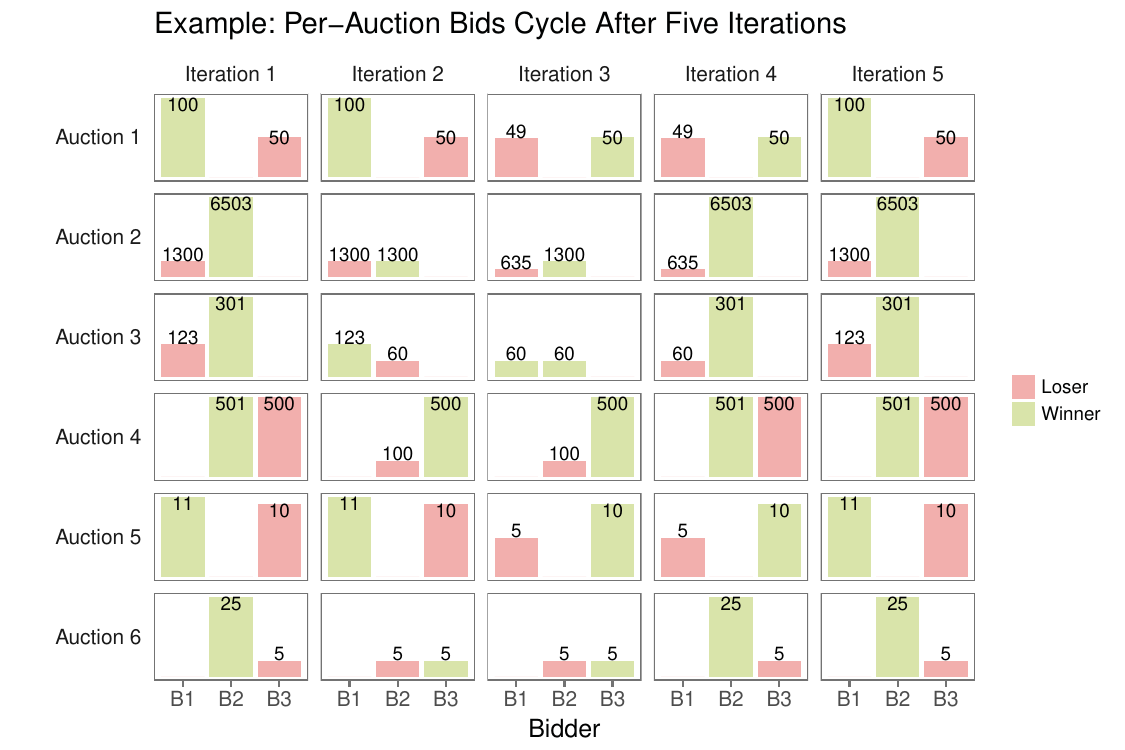}
	\caption{Best-response bids for the cycling example}
	\label{fig:examplebids}
\end{figure}

\begin{examplee}[Best responses may cycle] \label{ex:bids_appendix}
Consider an instance given by the set of valuations shown in Table~\ref{fig:cycling} and budgets $60$, $1300$ and $\infty$, for buyers $1$ to $3$, respectively. 
All buyers start with pacing multipliers equal to 1. 
Iterating best responses, all multipliers return to 1 after 5 iterations. 
See Figure~\ref{fig:examplebids} for an illustration and Appendix~\ref{sec:cycling} in the e-companion for a step-by-step explanation.
\end{examplee}

Although applying iterated best responses to pacing multipliers can cycle, the example above still admits multipliers that constitute an equilibrium with the corresponding fractional allocation. Since we know that an equilibrium exists and iterating best responses will not work as a general method to find one, we need another way to find an equilibrium. The next section develops such a method, while Appendix~\ref{sec:brdynamics} in the e-companion contains experiments showing the practical performance of best-response dynamics.

\subsection{MIP Formulation of Pacing Equilibria}
\label{sec:MIP}

Even though in earlier sections we showed that computing equilibria is hard in the worst case and that iterating best responses may cycle,
this does not mean that it is a hard problem in practice and for specific instances.
Being able to compute equilibria will allow us to study their
properties (e.g., find gaps among multiple equilibria, study incentive compatibility), and to use them as initial solutions when learning pacing multipliers in dynamic settings.
We provide a MIP formulation in which the constraints are equivalent to the equilibrium conditions. This guarantees that
a solution is feasible if and only if it satisfies the conditions given in Definition~\ref{def:pacing}. By optimizing with respect to various objectives,
we can refine the solution procedure and find different equilibria.
To define the problem,
it will be convenient to let $\bar{v}_{j} = \max_{i \in N}v_{i,j}$ be the
maximum value for good $j$ for any buyer.
We will need the following variables:
$~$\\[2mm]
\begin{minipage}{0.432\textwidth}
\begin{itemize}
	\item $\alpha_i \in [0, 1]$ : Buyer $i$'s pacing multiplier. 
	\item $s_{ij} \in \Rplus$ : Buyer $i$'s spend on good $j$. 
	\item $p_j \in \Rplus$ : Price of good $j$. 
	\item $h_{j} \in \Rplus$ : The highest bid for good $j$.
\end{itemize}
\end{minipage}
\hfill
\begin{minipage}{0.582\textwidth}
\begin{itemize}
	\item $d_{ij} \in \{0, 1\}$ : 1 if buyer $i$ may win any part of good $j$. 
	\item $y_{i} \in \{0, 1\}$ : 1 if buyer $i$ spends its full budget. 
	\item $w_{ij} \in \{0, 1\}$ : 1 if buyer $i$ is the winner of good $j$. 
	\item $r_{ij} \in \{0, 1\}$ : 1 if buyer $i$ is the second price for good $j$. 
\end{itemize}
\end{minipage}

\bigskip

Most variables are self-explanatory, as they denote the same as in the
pacing-game definition. Variables $w_{ij}$, and $r_{ij}$ represent a buyer that is considered {\em the
winner} and a buyer that is considered {\em the runner up} because the bid was a second price, respectively, for each good $j$.
The winner does not participate in lower-bounding the price (constraint~\eqref{eqn:price}), and the runner up upper bounds the price (constraint~\eqref{eqn:priceupper}).
In both cases, ties are broken arbitrarily but only one buyer can be chosen.
Although there could be multiple winners and runner-ups,
selecting exactly one of them is useful to encode the rules of a second price auction.

The equilibria of the pacing game are given exactly by the feasible solutions to the following MIP.
From a feasible solution, we get
pacing multipliers $\alpha_i$ for each buyer and spendings $s_{ij}$ for each buyer-good pair. The
fraction of good~$j$ allocated to buyer $i$ can then be computed as
$x_{ij}=s_{ij} / p_j$. (This last computation is not done inside the MIP because it would be nonlinear, but it is an easy computation to do once a solution to the MIP is obtained.)

\begin{centering}
\begin{minipage}{0.42\textwidth}
\begin{alignat}{2}
& \sum_{j \in M} s_{ij}  \leq  B_i &\ & (\forall i \!\in\! N) \label{eqn:budget}  \\
&\sum_{j \in M} s_{ij} \geq y_i B_i &\ & (\forall i \!\in\! N) \label{eqn:ydef} \\
& \alpha_i \geq 1 - y_i &\ & (\forall i \!\in\! N) \label{eqn:alpha} \\
& \sum_{i \in N} s_{ij} = p_j &\ & (\forall j \!\in\! M) \label{eqn:spend} \\
& s_{ij} \leq B_i d_{ij} &\ & (\forall i \!\in\! N, j \!\in\! M) \label{eqn:xdef} \\
& h_j \geq \alpha_i v_{ij} &\ & (\forall i \!\in\! N, j \!\in\! M) \label{eqn:highbid}
\end{alignat}
\end{minipage}
\hfill
\begin{minipage}{0.47\textwidth}
  \begin{alignat}{2}
& h_j \leq \alpha_i v_{ij} + (1 - d_{ij}) \bar{v}_j &\ & (\forall i \!\in\! N, j \!\in\! M) \label{eqn:highbidupper} \\
& w_{ij} \leq d_{ij} &\ & (\forall i \!\in\! N, j \!\in\! M) \label{eqn:winner} \\
& p_j \geq \alpha_i v_{ij} - w_{ij} v_{ij} &\ & (\forall i \!\in\! N, j \!\in\! M) \label{eqn:price} \\
& p_j \leq \alpha_i v_{ij} + (1 - r_{ij}) \bar{v}_j &\ & (\forall i \!\in\! N, j \!\in\! M) \label{eqn:priceupper} \\
& \sum_{i \in N} w_{ij} = 1 &\ & (\forall j \!\in\! M) \label{eqn:onewinner} \\
& \sum_{i \in N} r_{ij} = 1 &\ & (\forall j \!\in\! M) \label{eqn:onerunnerup} \\
& r_{ij} + w_{ij} \leq 1 &\ & (\forall i \!\in\! N, j \!\in\! M) \label{eqn:distinctwr}
\end{alignat}
\end{minipage}
\end{centering}
$~$\\[2mm]

We now describe the constraints.
Constraint~\eqref{eqn:budget} ensures that a buyer can spend no more than its
budget, while
\eqref{eqn:ydef} ensures that a buyer's total spend must be at least as
large as its budget if that buyer is spending its full budget (this enforces
the definition of $y_i$).
Constraint~\eqref{eqn:alpha} ensures that a buyer must have a pacing multiplier of
at least $1$ if it does not spend its full budget, 
\eqref{eqn:spend} ensures that the total spend of a good across buyers
must equal the price of that good, and
\eqref{eqn:xdef} ensures that a buyer's spend on a good is no greater than 0 if it
did not win part of that good.
Constraint~\eqref{eqn:highbid} ensures that the highest bid for a good must be at least as high as every paced bid for that good, and
\eqref{eqn:highbidupper} ensures that the highest bid for a good must be no greater than the paced bid of every buyer that wins part of that good.
Constraint~\eqref{eqn:winner} ensures that the designated winner for a good is designated as allowed to win a partial amount of that good, and
\eqref{eqn:price} ensures that the price for a good is at least as high as all paced bids besides the designated winner's paced bid.
Constraint~\eqref{eqn:priceupper} ensures that the price for a good is no greater than the runner-up's paced bid,
\eqref{eqn:onewinner} ensures that there is exactly one designated winner, 
\eqref{eqn:onerunnerup} ensures that there is exactly one designated runner-up, and
\eqref{eqn:distinctwr} ensures that a buyer cannot be both the designated winner and the designated runner-up of a given auction.

A revenue-maximizing pacing equilibrium can be computed by maximizing 
$\sum_{j \in M} p_j$ in the feasible region defined above,
whereas one can use $\max \sum_{j \in M} h_j$ to maximize the sum of the winning
paced bids.

We show in Appendix~\ref{sec:appendix_proofMIP} in the e-companion that our MIP correctly computes a pacing equilibrium.
\begin{proposition}\label{prop:correct_formulation}
  A solution to the MIP 
  \eqref{eqn:budget}-\eqref{eqn:distinctwr} is feasible if and only if it corresponds to the conditions of a pacing
  equilibrium.
\end{proposition}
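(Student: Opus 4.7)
The plan is to establish both directions of the biconditional. For the \emph{forward direction}, given a pacing equilibrium $(\alpha_i, x_{ij}, p_j)$, I would construct the auxiliary MIP variables by setting $s_{ij}=p_j x_{ij}$, $h_j = \max_{i'} \alpha_{i'} v_{i'j}$, $y_i = 1$ iff $\alpha_i<1$, and $d_{ij}=1$ iff $\alpha_i v_{ij} = h_j$. For each good $j$ I would pick a single bidder achieving $h_j$ to be the designated winner ($w_{ij}=1$), and a single bidder achieving the maximum paced bid among the remaining bidders to be the designated runner-up ($r_{ij}=1$), breaking ties arbitrarily. Verifying constraints \eqref{eqn:budget}--\eqref{eqn:distinctwr} then reduces to a routine check: the budget and $y_i$--$\alpha_i$ constraints follow from Definition~\ref{def:pacing}'s complementary-slackness condition; \eqref{eqn:spend} follows by summing $s_{ij}=p_jx_{ij}$ over $i$ and using $\sum_i x_{ij}=1$ (or $p_j=0$) on each good; and the highest-bid and price constraints fall out of the second-price property together with the designated-winner construction.

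For the \emph{backward direction}, given a feasible MIP solution, I would define $x_{ij}=s_{ij}/p_j$ when $p_j>0$ and $x_{ij}=0$ otherwise (the case $p_j=0$ only arises when all paced bids on $j$ vanish, in which case no meaningful allocation is needed). I would then verify the three bullets of Definition~\ref{def:pacing}. The sum-to-one condition follows from \eqref{eqn:spend}, together with the observation that $\alpha_i>0$ for every bidder in a feasible solution (if $\alpha_i=0$, then \eqref{eqn:highbidupper} together with \eqref{eqn:spend} and \eqref{eqn:priceupper} force $s_{ij}=0$ on every good, so \eqref{eqn:ydef} gives $y_i=0$ and then \eqref{eqn:alpha} yields $\alpha_i\geq 1$, a contradiction). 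The ``tied for highest bid'' clause follows from $s_{ij}>0 \Rightarrow d_{ij}=1$ by \eqref{eqn:xdef}, combined with \eqref{eqn:highbid} and \eqref{eqn:highbidupper}. Finally, complementary slackness for budgets follows by chaining \eqref{eqn:budget}, \eqref{eqn:ydef}, and \eqref{eqn:alpha}.

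The main obstacle is the second-price condition: showing that if $x_{ij}>0$ then $p_j$ equals the highest paced bid other than $i$'s. Using \eqref{eqn:onewinner}, \eqref{eqn:onerunnerup}, and \eqref{eqn:distinctwr}, there is a unique designated winner $w(j)$ and a unique designated runner-up $r(j)\neq w(j)$. Constraint \eqref{eqn:price} applied to every bidder $i'\neq w(j)$ forces $p_j \geq \alpha_{i'}v_{i'j}$, and \eqref{eqn:priceupper} applied to $r(j)$ yields $p_j \leq \alpha_{r(j)}v_{r(j),j}$; together these pin $p_j$ to the maximum paced bid over $\{i':w_{i'j}=0\}$. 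A short case split finishes the argument: if $i=w(j)$, the bidders other than $i$ are exactly those with $w_{i'j}=0$, so we are done; if $i$ is tied for the highest bid but $w_{ij}=0$, then combining $p_j \geq \alpha_i v_{ij} = h_j$ with the fact that $w(j)$, who is distinct from $i$, also bids $h_j$, yields $p_j = h_j$ equal to the highest paced bid other than $i$'s.
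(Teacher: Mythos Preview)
Your approach is essentially the same as the paper's: in one direction you populate the auxiliary MIP variables from the equilibrium data and check the constraints, and in the other you recover $x_{ij}=s_{ij}/p_j$ and verify the three bullets of Definition~\ref{def:pacing}. Your treatment of the second-price condition via the designated winner/runner-up case split is, if anything, a bit more explicit than the paper's version of the same argument.

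There is one small slip in the backward direction. You assert that ``the case $p_j=0$ only arises when all paced bids on $j$ vanish,'' and then use $\alpha_i>0$ to conclude that all $v_{ij}=0$ so the sum-to-one clause need not bind. But constraint~\eqref{eqn:price} only forces $\alpha_i v_{ij}\le p_j$ for bidders with $w_{ij}=0$; the designated winner's paced bid can be strictly positive while $p_j=0$ (for instance, a good valued by exactly one bidder). In that situation your construction sets $x_{ij}=0$ for every $i$, yet the pacing-equilibrium definition demands $\sum_i x_{ij}=1$. The fix is immediate---when $p_j=0$, simply assign the whole good to the designated winner---and the paper's own proof glosses over this same edge case, so this is a cosmetic patch rather than a structural problem with your argument.
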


If we are not concerned with a particular objective, but instead just want to compute
any one pacing equilibrium, we can use the following two approaches: The first
is to simply run the original MIP as a feasibility problem with no objective. 
The second is to relax the complementarity
condition \eqref{eqn:alpha}. We introduce a variable $z_i$ for
each buyer $i$ that represents whether that buyer satisfies
\eqref{eqn:alpha}. We replace \eqref{eqn:alpha} by
  $\alpha_i \geq 1 - y_i - z_i (\forall i \in N) $.
If $z_i=1$, then this constraint is no longer active
since $\alpha_i \geq 0 \geq -y_i$ is implied by the nonnegativity of $\alpha_i$ and
$y_i$. If $z_0=0$ then this constraint is our
standard complementarity condition on $\alpha_i$ and $y_i$. We can then solve
this relaxed MIP with the objective $\sum_{i\in N}z_i$. A solution where
the objective is zero corresponds to a feasible solution to the original MIP.

\newcommand{\algfont}{\sc}
\newcommand{\algname}[1]{{\algfont{#1}}\xspace}
\newcommand{\AdaptiveAlg}{\algname{AdaptivePacing}}
\newcommand{\ScaleUp}{\algname{ScaleInstance}}

\section{Computational Experiments}\label{sec:experiments}

In this section, we revisit our analytical results from an empirical point of view and put those results in perspective through a computational study. Rather than investigating worst-case instances as before, we consider various \emph{distributions} over pacing instances that attempt to capture real-life phenomena.
We investigate the following questions.
\begin{description}
\item[MIP Scalability.] 
  How large are the instances that the MIP can solve?
    We formulated a MIP to compute the value-maximizing pacing equilibria, but the problem of computing the value-maximizing pacing equilibrium is NP-complete.
\item[Equilibrium Analysis.] What are the empirical properties of pacing equilibria? 
  We showed that pacing equilibria are guaranteed to exist, but they are not necessarily unique, and there can be large gaps between the highest- and lowest-valued equilibria with respect to revenue, welfare, and paced welfare.
\item[Incentive Compatibility.] Does the system provide the right incentives?
    The pacing system takes advertisers' reported values as input. While an equilibrium is guaranteed to exist for those values, we showed that a buyer can sometimes increase their utility by misreporting their values. This calls for a deeper study of the welfare properties of pacing equilibria to understand when input values are truthful.
  \item[A Dynamic Setup.] How can our analytical results lead to improvements of practical pacing algorithms? We study a static game but our framework can inform how one paces budgets in a dynamic setting with noisy realizations of impressions.
\end{description}

The remainder of this section is organized as follows. In Section~\ref{sec:problem_instances}, we describe the different classes of problem instances we construct. In Section~\ref{sec:mip_scalability}, we describe how well the MIP scales on these different instances. Section~\ref{sec:empirical_eqa} describes equilibrium properties: empirical gaps between equilibria, and incentives for advertisers to misreport bids and budgets. Section~\ref{sec:improve_heuristics} explores using the MIP to seed a heuristic algorithm for a dynamic setup of the problem.

\subsection{Problem Instances} \label{sec:problem_instances}

We run experiments on two types of problem instances: \emph{stylized} instances, which were generated from a distribution over bipartite graphs; and \emph{realistic} instances, for which a bipartite graph was constructed from real-world auction markets. 
We describe how each type of instance is constructed below. 
Recall that a pacing instance consists of a tuple $(n, m, (v_{ij})_{i \in N, j \in M}, (B_i)_{i \in N})$, where $n$ is the number of buyers, $m$ is the number of goods (we use `auctions' interchangeably), $v_{ij}$ is buyer $i$'s value for winning good $j$, and $B_i$ is buyer $i$'s budget. 
We denote $N=\{1,\ldots,n\}$ and similarly with $M$ and $m$.

\paragraph{Stylized Instances} For stylized instances, we consider three distributions over bipartite graphs: \emph{complete}, \emph{sampled}, and \emph{correlated}. They refer to how the graph is connected and the correlation between edge weights. 

\begin{description}
\item[Complete.]
  In complete graph instances, every buyer is interested in every good. 
For each buyer~$i$ and good $j$, the valuation $v_{ij}$ is drawn uniformly
    iid from $[0,1]$. For each buyer~$i$, its budget $B_i$ is drawn uniformly from $[0, \sum_{j = 1}^m v_{ij}/n]$.

\item[Sampled.]
  Sampled graph instances are generated similarly to complete graphs, except that buyers are interested in a subset of goods. 
For each good, a subset of interested buyers is sampled uniformly at random from the power set of $\{1,\ldots,n\}$. 
If a buyer happens to not be interested in any goods, a single good of interest is uniformly sampled for that buyer. 
    Valuations $v_{ij}$ for the resulting edges and budgets $B_i$ are generated in the same manner as for complete graph instances.

\item[Correlated.]
  Correlated graphs are similar to sampled graphs, except that, for each good, the valuations are correlated across buyers through
the additional parameter~$\sigma$. For each good $j$, an expected valuation $\mu_j$ is drawn uniformly at random from $[0,1]$. For each buyer-good pair, valuation $v_{ij}$
    is then sampled from a Gaussian distribution truncated to $[0,1]$ with mean $\mu_j$ and standard deviation $\sigma$.
\end{description}

We generated 5 stylized instances for all combinations of instance type $\in \{$complete, sampled, correlated$\}$, numbers of buyers $n \in \{2, 4, 6, 8, 10\}$, number of goods $m \in \{4, 6, 8, 10, 11, 12, 14\}$, and in the case of correlated instances, standard deviations $\sigma \in \{0.01, \ldots, 0.09, 0.1, 0.2, 0.3\}$.
This resulted in a total of 175 complete instances, 175 sampled instances, and 2100 correlated instances, totalling 2450 stylized instances.

\paragraph{Realistic Instances}

We construct realistic instances from real-world auction markets in two steps. We first take all bidding data for a country in a one-hour interval and use it to create $n$~buyers and $m$~goods. For that data, we identify the $n$ buyers that participate in the most auctions. (Here, `auctions' refer to the data while `goods' refer to the generated instance.) Each of those buyers maps to a buyer in the generated instance. For now, we define the goods in the instance as the auctions that include at least one of the $n$ buyers. We set the bid in each buyer-auction pair to be the value of the buyer for the corresponding good. The budget for each buyer was set to the buyer's original budget multiplied by a single scalar, calibrated to get the percentage of budget-constrained buyers equal to what was observed in the real-world auction market.

These instances were too large for the MIP to solve. In a second step, we reduce the size through clustering: we apply the $k$-means algorithm to the goods generated so far, using the $n$-dimensional vector of values for that good as features. The goods in the resulting instance were the resulting clusters. Each buyer valuation for a (cluster-level) good is set to the sum of valuations among goods in the cluster. To generate the final buyer budgets, we execute the clustering algorithm for $k=8$ and choose a single factor to scale the budget such that the paced-welfare-maximizing pacing equilibrium had the same fraction of budget-constrained buyers as that of the original auction market. Finally, we run the clustering algorithm again and generate the required $m$ clusters.

For the computational experiments, we preselected $50$ country-hour pairs from $26$ unique countries, and used $n=10$. In each case, we generated scaled-down instances for $8$ to $15$ clusters, for a total of 450 realistic instances.

\subsection{MIP Scalability} \label{sec:mip_scalability}

We start by exploring the size of instances one can solve with the MIP. It is evident that the MIP will not scale to the size of real-world pacing instances, which may involve tens of billions of auctions in a single day. However, less clear is how long it takes to solve instances of different size, and whether some structure in problem instances or MIP objectives were harder to solve than others. The larger the instances that the MIP can solve, the better equipped we are to use the MIP to answer other empirical questions.

The high-level experimental setup is: (1) We generated the 2450 stylized and 450 realistic instances mentioned earlier. (2) We solved each instance using different versions of the MIP. (3) We computed the fraction of instances that were optimally solved, broken down by solution method and instance features. 

We considered several versions of the MIP, which we name within parenthesis: the pure feasibility MIP as defined by \eqref{eqn:budget}-\eqref{eqn:distinctwr} ({\em feasibility}\/), the MIP with the relaxed version of \eqref{eqn:alpha} ({\em relaxed feasibility}), and the feasibility MIP with objectives that minimize or maximize revenue or paced welfare ({\em min revenue, max revenue, min paced welfare, max paced welfare}, respectively). All computations were done with a time limit of 5 minutes using Xpress Optimization Suite~8.0 \cite{xpress} on a server with 24 single-core Intel Haswell CPUs running at 2.5GHz and 60GB of RAM. 
The relatively short timeout allowed us to run the study for our extensive set of instances. (We tested a longer runtime on 98 randomly-selected instances with 9-14 buyers and 9-14 goods. With two hours of runtime we solve 30 of those instances, as opposed to 17 with a five-minute runtime. Thus additional time yields some improvement, but does not greatly extend scalability.) Solutions were programmatically checked to make sure they satisfy the pacing equilibrium conditions.

\paragraph{Results} 

\begin{figure}[tp]
  \centering
	\includegraphics[width=\columnwidth]{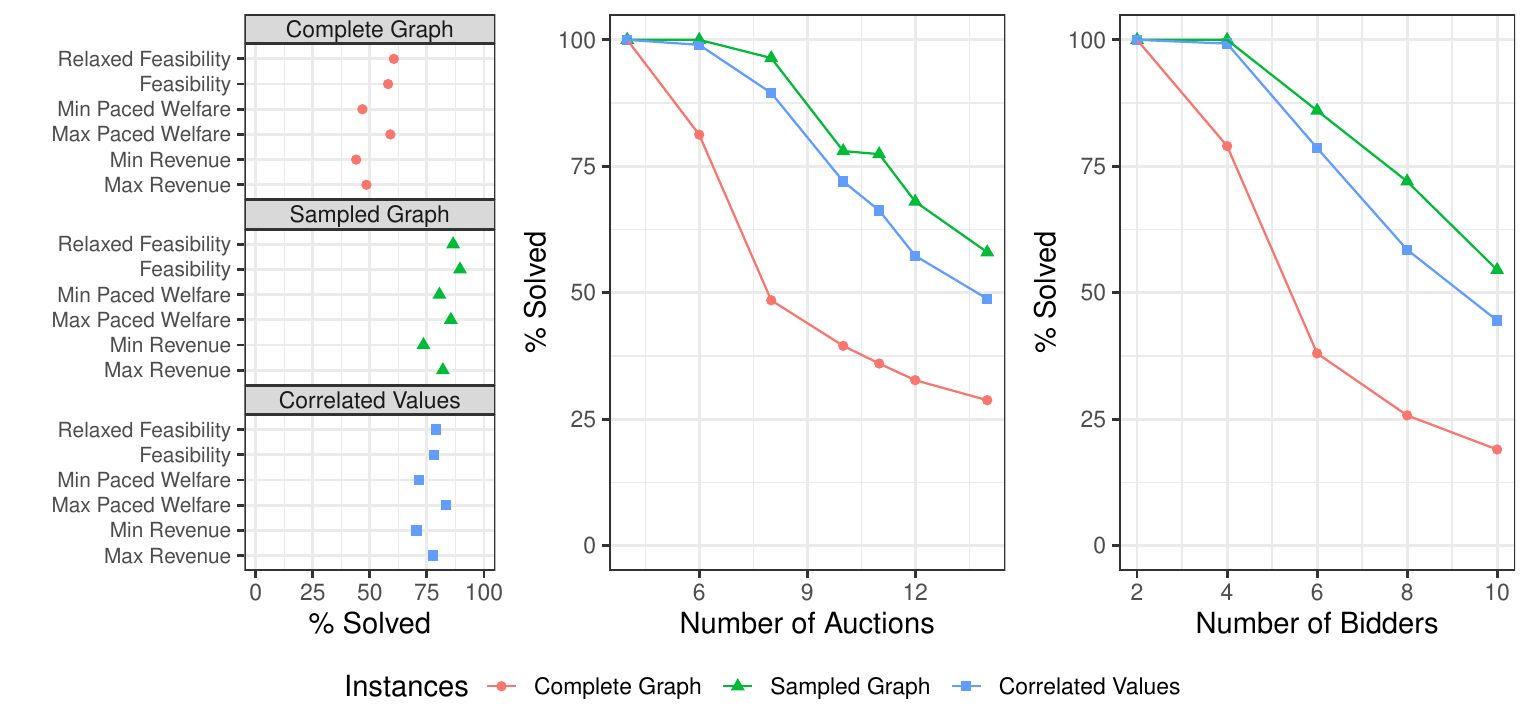}
  \caption{Percentage of stylized instances solved within 5 minutes by objective and graph size}
	\label{fig:ipruntimes}
\end{figure}

We report the percentage of instances for which equilibria were found (as opposed to timing out) and report results by three breakdowns: MIP objective, number of goods $m$, and number of buyers $n$.
Figure~\ref{fig:ipruntimes} shows how runtimes scale on stylized instances for each breakdown.
We observe that complete graph instances are harder to solve than sampled or correlated instances. This is not surprising, since they require more decision variables than the other two types. The instance type played a larger factor in whether an instance was solved than did the objective type. Still, we observe some differences between objective types.
Across all three sets of stylized instance distributions, the value-maximizing MIP objectives were solved more often than value-minimizing objectives.
Paced welfare objectives were solved slightly more often than revenue objectives.
Neither feasibility MIP greatly outperformed the paced-welfare-maximizing MIP.
The feasibility MIP solved about as many instances as the relaxed version.

Grouping the realistic instances by $m=8,\ldots,15$, the MIP respectively found the equilibria for all the sub-cases in 43, 37, 30, 25, 19, 17, 16, and 15 instances out of the 50 combinations of country-hour pairs that we considered. In agreement with the stylized instances, complexity increases as instances comprise more goods.

\subsection{Empirical Analysis of Equilibria}\label{sec:empirical_eqa}

In this section, we use the MIP to improve our understanding of equilibrium properties among instances we can solve. 
We focus on two properties of pacing equilibria that we covered in our analytical work, and explore how they change as a function of instance size.
First, do we frequently observe large \emph{empirical differences in equilibria}---that is, between the value-maximizing and value-minimizing equilibria---or do such gaps only arise in pathological examples? Second, how frequently do we observe large incentives for advertisers to \emph{misreport bids and budgets}? How are such incentives affected by features of the pacing instance?

\subsubsection{Empirical Differences in Equilibria}

Even though we have seen large gaps for instances constructed carefully for that purpose, we see that this is not the case for the practically-inspired instances that we put forward.
For each of those instances, we solve a pair of MIPs to find the difference between the value-maximizing and value-minimizing equilibrium. We then measure such gaps across sets of instances.
We measure equilibrium gaps for revenue, welfare, and paced welfare. 
For revenue and paced welfare, we have a MIP that finds extremal equilibria.
Measuring welfare gaps is less straightforward: Since we could not represent the welfare objective as a linear expression, we do not have a MIP to optimize. Instead, we compute the gap among the MIP solutions optimizing the other objective functions.
Hence, the reported gaps in social welfare are a lower bound on the maximal achievable gap. 
We report the following metrics:
\begin{description}
  \item[\emph{Pairs \%}:] the percentage of instances for which a pair of MIPs (objective-maximizing and -minimizing) both returned prior to the five-minute timeout. (For the welfare objective, a pair was counted if any two MIPs where any objectives were solved.)
\item[\emph{No Gap \%}:] the percentage of paired instances with no gap in objective value.
\item[\emph{Max Gap}:]the largest observed gap across instances, as a percentage of the objective-maximizing value.
\end{description}

\begin{table}[t]
\caption{Gaps between equilibria for different objectives and classes of problem instances.}
\label{tab:gap_results}
\centering\scriptsize
\begin{tabular}{llrrr}\hline
Objective & Instances & Pairs \% & No Gap \% & Max Gap \\\hline
Revenue & Complete Graph & 40.6 & 97.2 & 44.2 \\ 
  Revenue & Sampled Graph & 73.1 & 96.9 & 33.8 \\ 
  Revenue & Correlated Values & 68.4 & 99.7 & 13.3 \\ 
  Paced Welfare & Complete Graph & 46.3 & 91.4 & 39.7 \\ 
  Paced Welfare & Sampled Graph & 80.0 & 93.6 & 16.1 \\ 
  Paced Welfare & Correlated Values & 70.6 & 93.7 & 42.9 \\ 
  Welfare & Complete Graph & 60.0 & 92.4 & 5.3 \\ 
  Welfare & Sampled Graph & 88.0 & 92.2 & 2.6 \\ 
  Welfare & Correlated Values & 81.7 & 94.2 & 2.9 \\\hline
\end{tabular}
\end{table}

\paragraph{Results} 
Computing pacing equilibria for the stylized instances, we find that  
for the majority of instances, there was no gap in the objective value across equilibria. In some cases, however, gaps were as large as 44.2\%. 
The welfare objective had smaller gaps: all were less than 5.3\%, but recall that for this objective those are only lower bounds.
Table~\ref{tab:gap_results} summarizes the empirical study of gaps.

Some instances remained unsolved at the end of the time limit,
and it could be that these instances also have large gaps.
To test this hypothesis, we looked at runtimes of solved
instances with large gaps. The conclusion is that these tend to be instances
that can be solved quickly. For instance, none of the instances with nonzero gaps took more than
ten seconds to solve. 
This analysis suggests that longer runtimes arise from sparseness of equilibria,
rather than from their multiplicity.

The conclusions on equilibrium gaps also extend to realistic instances. We saw essentially no differences for the $43$ realistic instances for which the scaled-down instance had solutions for all objectives for at least one number of clusters $m$. Only one unclustered instance had any revenue difference (and only for a single choice of $m$): a difference of $0.03\%$. Only two unclustered instances had welfare differences: one with a difference of $0.03\%$ in a single clustering, and one with about $2.9\%$ in every clustering. Only one unclustered instance had paced welfare differences: $5.7\%$ in the worst clustering, but $2$-$3\%$ in the remaining clusterings.

Overall, these results are promising:
Although our theoretical results demonstrated that gaps can be large in theory, 
we found empirically that most gaps on instances we considered were small, and
often times, there was no gap at all. More summary statistics on gaps are given in Appendix~\ref{sec:experiment:appendix} in the e-companion.

Even when there is no gap in revenue or welfare across computed equilibria, multiple equilibria with the same objective value could nevertheless exist. 
To test whether this occurs, we looked at 156 instances where at least two MIPs finished within five minutes. Amongst those instances, we identified the subset of instances for which two or more MIP models returned non-identical vectors of pacing multipliers. We found that 144 instances (i.e., all but 12 instances) returned the same vector of pacing multipliers across all MIP models that returned a solution within five minutes. For the 12 instances that had multiple equilibria, either revenue or paced welfare differed as well. While this does not prove that there are no other solutions with the same objective value, it does suggest that this does not happen often. We expect that the smoother the instances are, the less likely it is that the instance will admit multiple equilibria with the same objective values.

\subsubsection{Robustness to Misreporting}
\label{sec:misreporting}

We now explore whether buyers can improve their performance by misreporting bids or budgets 
to affect the pacing equilibrium in the proxy-bidder setting where the platform adjusts pacing multipliers.
Although Proposition~\ref{prop:pacing_optimality} implies that buyers do not benefit from 
misreporting given the current pacing equilibrium, buyers may improve their
utility if they can influence the resulting equilibrium. 
Example~\ref{ex:misreporting} highlighted such a case where a buyer significantly
increased its utility by misreporting; we investigate the extent to which 
misreporting may be a problem in practice with a computational study.
Concretely, we study the relation between the incentive to misreport and market thickness.

\begin{figure}[t]
  \centering
	\includegraphics[width=1.0\columnwidth]{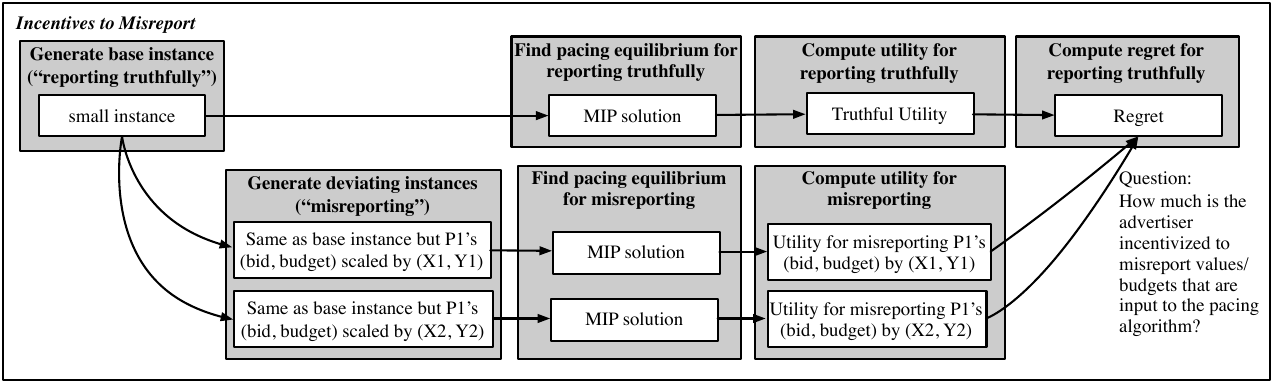}
	\caption{An overview of the steps to measure an advertiser's gain in utility for strategically misreporting values and budgets to the proxy bidder}
	\label{fig:setup:manip}
\end{figure}

We consider two possibilities in how a focal buyer may strategically misreport.
Motivated by pacing multipliers, a simple case is when the buyer misreports values by choosing bids that arise from multiplying all values by a fixed constant, and submitting some alternative budget. We refer to this case as uniform deviations. A second case is when the focal buyer performs a grid search to find optimal bids for each good. For this latter case we omit budget deviations, since the number of MIPs that need to be solved during the grid search is already substantial.
We performed both studies and show the output in Figure~\ref{fig:manipulableGrid}.
Results turn out to be of similar magnitude but marginally higher for grid search.

Since the case of uniform deviations is closely aligned with the setting of our model, we describe it first.
Here, buyers submit a budget $B_i$ and a single valuation $v_i$ for
a generic good. The valuation for specific goods is then set to $v_{ij}= v_i
\gamma_{ij}$, where we assume that $\gamma_{ij}$ is fixed ahead of time. This setting models
paced internet auction markets, where buyers typically submit their
budget, targeting criteria, and their valuations for clicks.
Their valuation for an individual impression is then typically calculated as their valuation for a click times the click-through rate of the impression-ad pair. Because the click-through rate $\gamma_{ij}$ is estimated by the platform, each advertiser can only affect their overall bid for a click.

\begin{figure}
	\centering
  $~$
  \hfill
  \includegraphics[width=0.45\columnwidth]{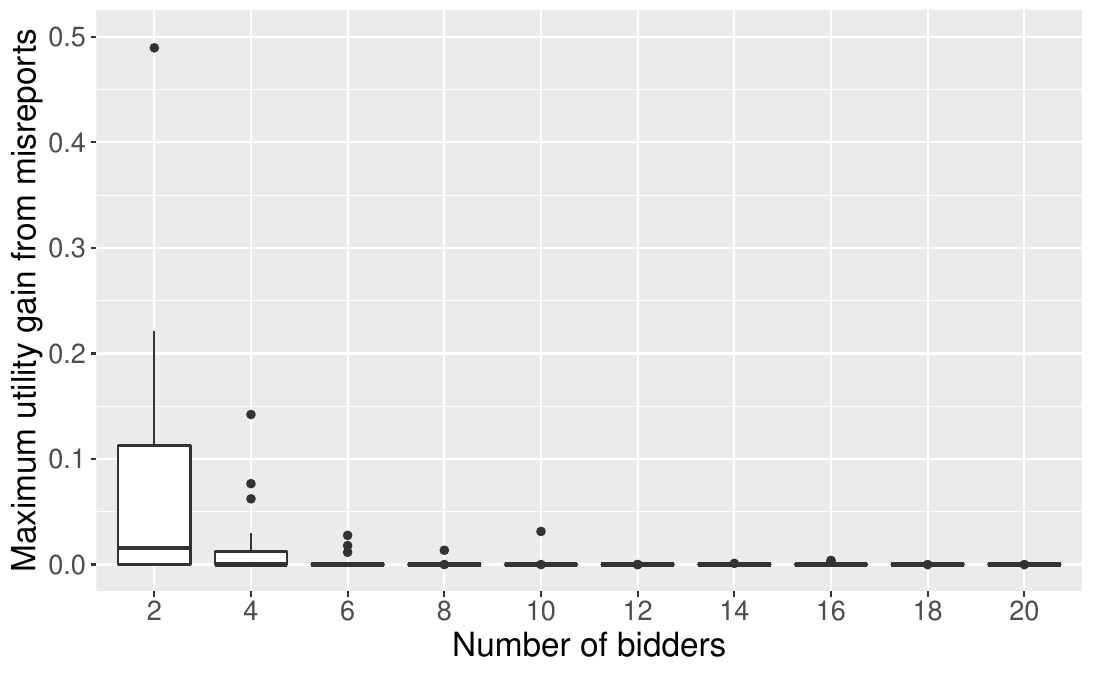}
  \hfill
  \includegraphics[width=0.45\columnwidth]{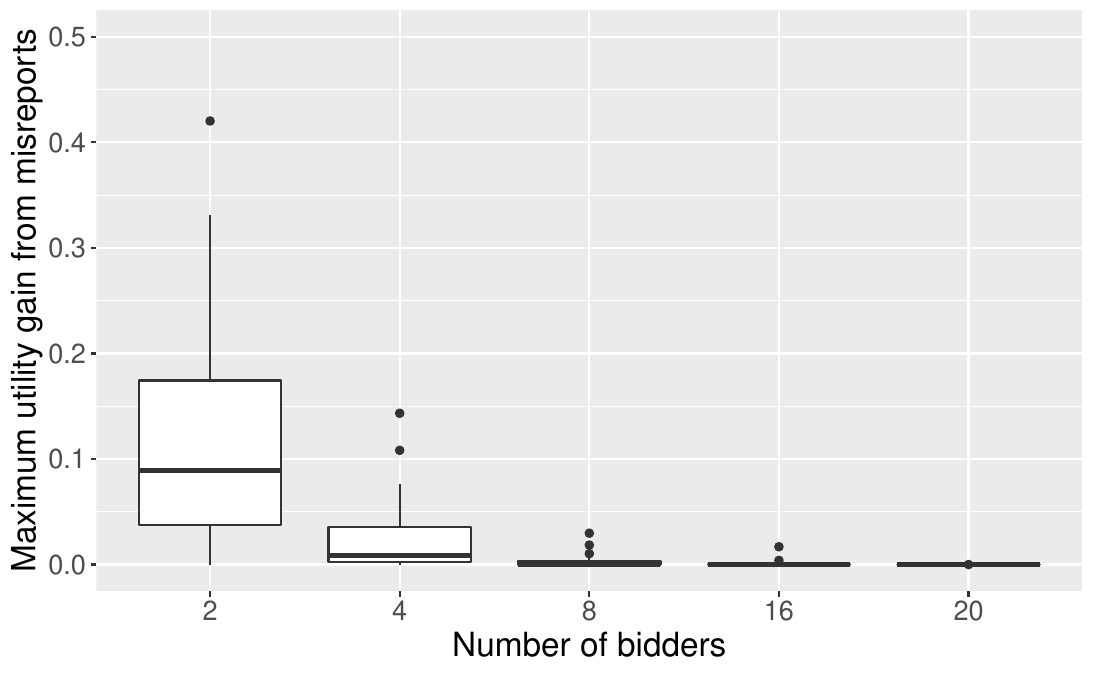}
  \hfill
  $~$
  \caption{
    Utility gains when misreporting. 
    The utility gain is defined as $v^* - v$, where $v^*$ is the optimal value when misreporting and $v$ is the truthful value.
  {\em Left}: Advertiser uniformly manipulates all bids by multiplying them with a constant. 
  {\em Right}: Advertiser explores all bid manipulations using grid search.
   } 
	\label{fig:manipulableGrid}
\end{figure}

We generated $40$ instances, each with $4$ auctions, $2$ through $20$ buyers, and budgets and
valuations generated according to the complete-graph setting described in Section~\ref{sec:problem_instances}.
To model misreporting, we assume that buyers may multiply their budgets and valuations, respectively, by scalars $(\beta_i, \nu_i) \in \{0.6,0.8,\ldots,1.4\} \times \{0.5,0.6,\ldots,1.4\}$.
We consider that only a focal buyer per auction is strategically misreporting while the others remain true to their actual values. 
The focal buyer submits scaled budgets and valuations to the pacing mechanism, instead of the actual ones. With this setup, we solve the MIPs and compare the resulting utility received by the focal buyer across all $(\beta_i, \nu_i)$.
An overview of the setup is shown in Figure~\ref{fig:setup:manip}.

\paragraph{Results} 

We say that the focal buyer in an instance has an incentive to misreport when there is a combination $(\beta_i, \nu_i)$ for which the buyer's utility is higher than that when true values are reported.
The left panel of Figure~\ref{fig:manipulableGrid} shows the utility gain that results from applying the multiplicative scalars.
With more buyers, the incentive to misreport disappears rapidly. 
Since we only tried a discrete and finite set of scalars, one could argue that
perhaps it would have been optimal to use another scalar that was not in the set.
To provide evidence that this is unlikely, we tried a much finer
discretization for 10 instances composed of 6 to 14 buyers. This
time we considered scalars $(\beta_i, \nu_i) \in \{0.6,0.65,\ldots,1.4\} \times
\{0.5,0.55,\ldots,1.4\}$; i.e., a step size of $0.05$ instead of $0.1$. Out of these 50 instances,
we found only one instance for which a deviation increased the utility. It is also unlikely 
that we needed a bigger interval of scalars: none of the
instances we tested resulted in the optimal scalar being at the extremes of an interval.

The right panel of Figure~\ref{fig:manipulableGrid} shows the utility gain that results from performing grid search across each valuation.
Here, we consider the more general case where the buyer can submit an independent valuation $\nu_{ij}v_{ij}$ for appropriately chosen scalars $\nu_{ij}$. We considered a coarser discretization for the values of $\nu$ and for how many buyers to consider than before for computational reasons: we consider every misreport $\left\{ \nu_{ij}  \right\}_{j=1}^4 \in [.5, .9, 1., 1.1, 1.5]^4$. We also only consider 2, 4, 8, 16, and 20 buyers, and generate 20 instances per size.
As can be seen in the figure, although the maximum deviation is marginally higher, it is the same order of magnitude and it rapidly goes to zero as the market gets denser, thus not affecting the conclusions drawn in the first part.

In summary, we found that the buyer was rarely incentivized to manipulate its valuation or budget, and that the incentive decreased as the size of the instance grew. Collectively, the empirical conclusions on equilibrium properties reassure us that some of these potential problems may not actually be so in practice.

\subsection{Seeding Dynamic Instances}
\label{sec:improve_heuristics}

Real-world pacing heuristics rely on tractable adaptive algorithms that update buyers' pacing multipliers over time. Recent results have shown that such algorithms can converge to stable pacing multipliers in the limit, given simplifying distributional assumptions~\citep{balseiro2017dynamic}. While convergence in the limit is a positive result, the \emph{rate} at which the learning algorithm converges is important in practice: The longer the pacing algorithm takes to converge, the worse it is at optimizing the buyer's utility. This motivates us to study how the stability of adaptive algorithms can be improved.

The adaptive algorithm we use for this set of experiments is from \citet{balseiro2017dynamic}, which we refer to as \AdaptiveAlg; see Algorithm~\ref{alg:adaptive}.
\AdaptiveAlg takes as input a pacing instance $\Gamma$, a vector of initial pacing multipliers $(\alpha_i^\texttt{init})_{i \in N}$, a minimum allowable pacing multiplier $\alpha^\texttt{min}$, and a step size $\epsilon$, which affects how much the multiplier changes across auctions.
After each auction $j \in M$, each buyer $i$ updates its multiplier based on the difference between the buyer's spend and its \emph{target per-auction expenditure}, which is the average amount to spend per auction to perfectly exhaust the budget. In this paper, Algorithm~\ref{alg:adaptive} uses the notion of a multiplier $\alpha$, which differs from the notion of a multiplier $\mu$ in~\citet{balseiro2017dynamic}; the relationship is $\alpha = 1 / (1 + \mu)$. Other minor differences from \citet{balseiro2017dynamic} are that (1) we made the initial multipliers an explicit parameter to the algorithm; and (2) we removed per-buyer subscripts for $\alpha^\texttt{min}$ and $\epsilon$, since all buyers use the same value in our experiments.

\begin{algorithm}[]\label{alg:greedyVCG}
\KwIn{Pacing instance $\Gamma = (N, M, (v_{ij})_{i \in N, j \in M}, (B_i)_{i \in N})$; initial multipliers $(\alpha_i^{\texttt{init}})_{i \in N}$; minimum multiplier $\alpha^{\texttt{min}}$; step size $\epsilon$.}
	
	\For{$i \in N$} {
	Set target expenditure $\rho_i = B_i / m$\;
	Initialize remaining budget $B_{i1} = B_i$ and multiplier $\alpha_{i1} = \alpha_i^\texttt{init}$\;
	}	
		\For{$j \in M$} {
	
		Each buyer $i$ places bid $b_{ij} = \min(v_{ij} \alpha_{ij}, B_{ij})$.
		
    The auction outputs an allocation $(x_{ij})_{i \in N}$ and payments $(s_{ij})_{i \in N}$.
		
		Each buyer $i$ updates its multiplier
		$\alpha_{i,j+1} = \max(\alpha^\texttt{min}, 1 / \max(1, 1 / \alpha_{ij} - \epsilon (\rho_i - s_{ij})))$ and remaining budget
		$B_{i,j+1} = B_{ij} - s_{ij}$.
	}
	\Return bids, allocations, and payments\;
\caption{\AdaptiveAlg~\citep{balseiro2017dynamic}}
\label{alg:adaptive}
\end{algorithm}

We now describe the computational study that evaluates using the MIP to improve the adaptive algorithm.
Our study consists of running the adaptive algorithm on \emph{large instances}, seeded from the pacing multipliers of a \emph{small instance} that compactly represents each of them.
We compare outputs under two types of initial multipliers: 
a unique constant for all buyers which is what one would do without additional information (e.g., each buyer starts with multiplier~0.5),
and the pacing multipliers returned by the MIP for the original (static) instance.
For each set of initial pacing multipliers, we determine parameters $\epsilon$ and $\alpha^\texttt{min}$
through grid search by choosing those that minimize the
\emph{average ex-post relative regret} (i.e., the average amount that a buyer could
have improved its utility by playing a single best-response multiplier, given
the other bids are fixed).
This is summarized in Figure~\ref{fig:brperf_exp_short}.
(See Appendix~\ref{sec:brdynamics} in the e-companion for analogous experiments on using the MIP to improve existing algorithms of a one-shot model, where the tractable algorithm in that case is best-response dynamics.)

\begin{figure}[t]
  \centering
	\includegraphics[]{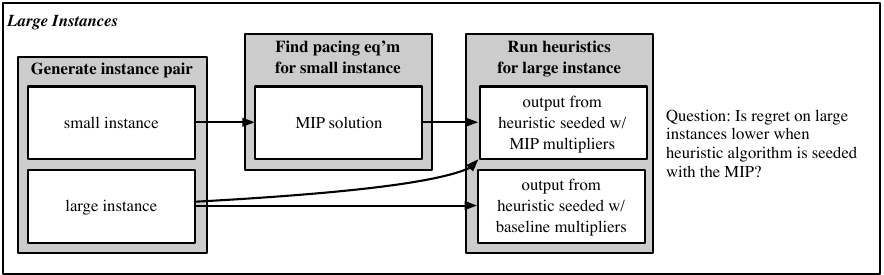}
  \caption{Steps to measure the benefit from warm-starting heuristics with the MIP}
	\label{fig:brperf_exp_short}
\end{figure}

In the case of stylized instances, we create $K$ complete-graph instances, and scale each one up 
to produce a large instance. To scale, we make $C$ copies of each
good, and scale budgets by $C$. For each edge in the scaled-up
graph, we perturb the buyer-good value by adding Gaussian noise with mean $0$
and standard deviation $\sigma$ (this noise parameter $\sigma$ is different from that of the correlated stylized instances).
The parameters we used for these experiments are $K=6$,
$C=500$, $\sigma \in \{0, 0.1, 0.5\}$, $\alpha^\texttt{min} \in \{0.1, 0.05\}$,
and $\epsilon \in \{0.01, 1, 2\}$.
More details on these instances are given in Appendix~\ref{sec:experiment:appendix} in the e-companion.

In the case of realistic instances, as explained in Section~\ref{sec:problem_instances}, we take a large instance constructed from real data, and we cluster it to make a compact representation.
In this case, we solve the MIP corresponding to the clustered instance and then provide the resulting pacing multipliers as input when running \AdaptiveAlg on the original one without clusters.

\paragraph{Results}

As shown in Figure~\ref{fig:real_exp}, running \AdaptiveAlg with MIP-based initial multipliers produces a lower regret than with other choices of initial multipliers. 
Performance of the MIP-based solution degrades as the noise parameter $\sigma$ grows, but even at the highest levels we considered, the MIP-based solution outperformed the baseline solutions. 
When using fixed initial multipliers, the resulting regret is highly sensitive to choices in the step size: low initial multipliers would often not reach the MIP's equilibrium multipliers by the time the algorithm terminated.
For realistic instances, Figure~\ref{fig:real_exp} also shows that the regret experienced by buyers when starting from the MIP-based initial multipliers was lower than in the other cases, for every learning rate $\epsilon$ we considered.
More strongly, the worst learning rate for the MIP was better than the best learning rate for any of the baselines. These findings were robust to different number of clusters~$m$ when producing the realistic instances. Surprisingly, $8$ clusters was enough to find good multipliers and increasing $m$ to $15$ clusters did not reduce the regret.

These experiments for the dynamic setting leave us optimistic about the potential value of computing static pacing multipliers with the MIP. Using the MIP to warm-start an adaptive algorithm on these larger instances resulted in better convergence, and these improvements were robust to noise in the MIP input. 
Such robustness is important for two reasons: First, it suggests that the MIP does not need the exact valuation distribution to be useful (which is unlikely to be known in practice); second, it suggests that the valuation distribution could be compressed to create a smaller (approximate) problem instance that could be tractably solved by the MIP. In follow up work, \citet{kroerApproxMarketEq} address both of these issues.

\begin{figure}
  \centering
	\includegraphics[width=0.47\columnwidth]{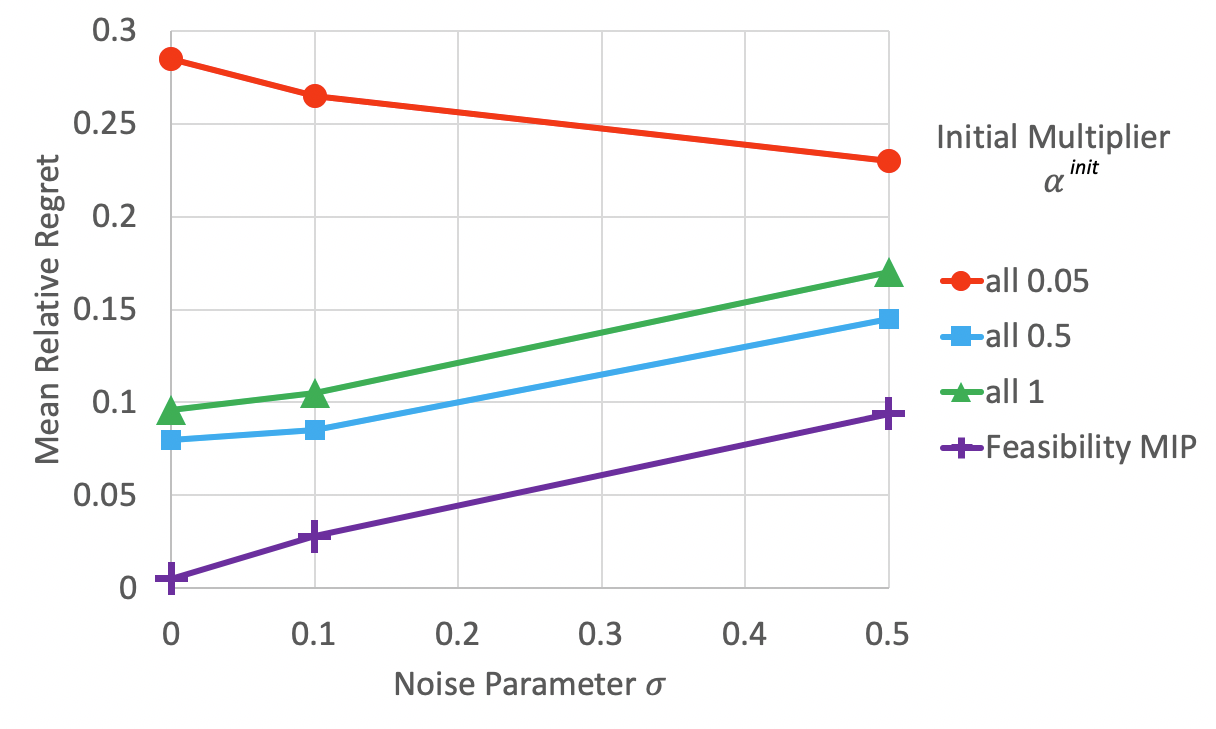}
  $~$
  \includegraphics[width=0.47\columnwidth]{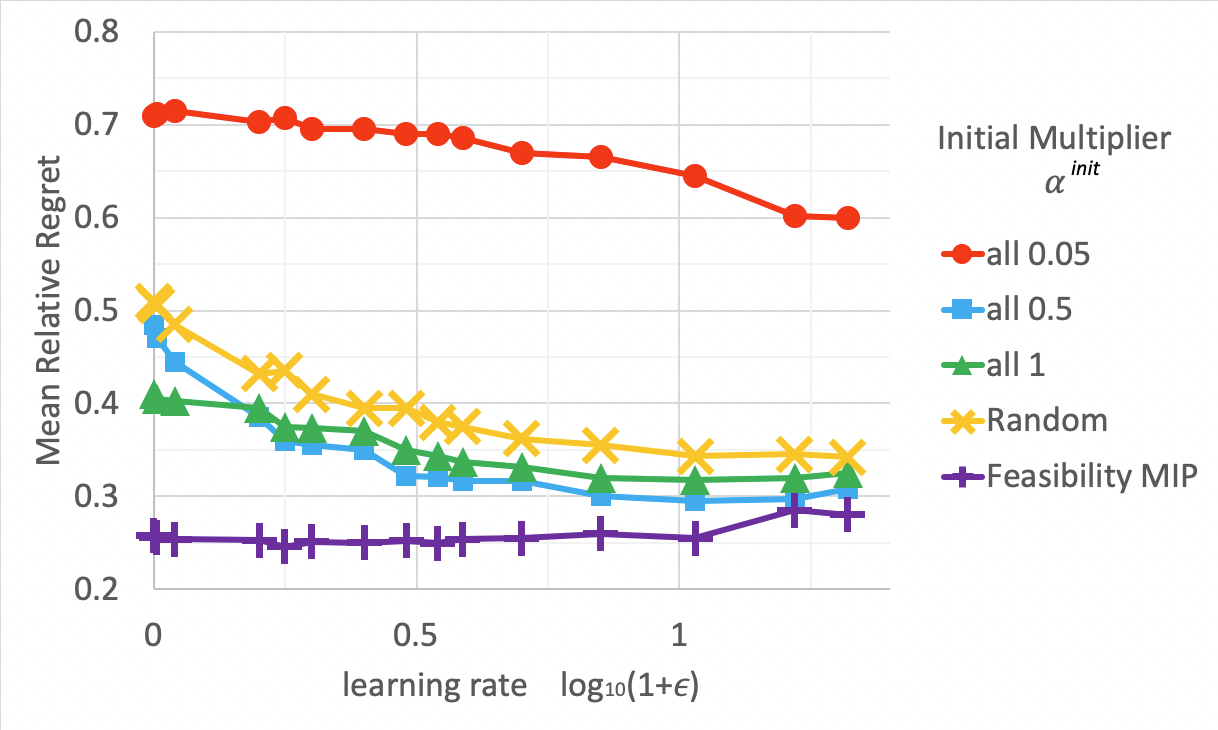}
	\caption{
  Mean relative regret from running \AdaptiveAlg.
  Each curve plots different initial pacing multipliers $\alpha_i^{init}$.
  {\em Left}: Stylized instances with random perturbations. Regret as a function of the noise parameter $\sigma$. 
  {\em Right}: Realistic instances with 8 clusters (no noise). Regret as a function of the learning rate~$\epsilon$ (shown in log scale as $\log10(1+\epsilon)$).
  }
  \label{fig:real_exp}
\end{figure}

\subsubsection*{Interpretation of MIP Solution}

A natural question is whether the MIP output when solving static pacing has any interpretation regarding the dynamic pacing setup we discussed. 
We provide evidence that the output of the adaptive algorithm on the large instances matches the pacing multipliers computed by the MIP on the small instances.
Appendix~\ref{se:dynamics} in the e-companion describes a limit dynamics model and prove that a solution in that model is stable if and only if it constitutes a pacing equilibrium (which the MIP outputs). 

In a similar setup to the study of dynamic pacing with stylized instances, we randomly sampled $K$ complete-graph problem instances
and run \AdaptiveAlg with initial multipliers equal to the feasibility MIP output. For each instance, we compute the absolute difference between the MIP fractional allocation and the \emph{empirical allocation} (that is, the fraction of goods won in the scaled-up instance that corresponded to the same good in the original instance). The parameters we used for this experiment were $K = 20$, $C = 50$, $\alpha^\texttt{min} = 0.05$, and $\epsilon = 10^{-4}$; other parameter settings gave similar results.

\paragraph{Results} 

Figure~\ref{fig:dancing} (Left) shows a summary of the absolute differences between the fractional allocations output by the MIP and \AdaptiveAlg. 
The difference between the fractions allocated never exceeded 0.07, and over 75\% of the time, the difference was less than 0.02. 
To understand why the empirical allocations so closely match the MIP allocations, see 
Figure~\ref{fig:dancing} (Right) for an illustrative example. The figure shows the per-auction bids for a particular pacing instance, good type, and subset of buyers. In the original version of this instance, the feasibility MIP found a solution in which three buyers won a fractional allocation of the good. When we started \AdaptiveAlg from the MIP output multipliers, the induced bids danced around the winning price such that the empirical allocation for these buyers nearly matched the MIP output (with allocation values of (0.51, 0.24, 0.25) versus (0.52, 0.24, 0.22) for each respective buyer).

These results illustrate that the MIP fractional multipliers have a meaningful interpretation for larger instances in which one runs an adaptive algorithm. 

\begin{figure}
	\centering
	$~$
	\hfill
	\includegraphics[scale=0.9]{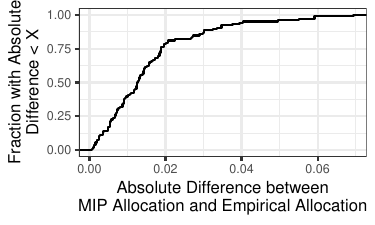}
	\hfill
	\hfill
	\hfill
	\hfill
	\includegraphics[width=0.45\columnwidth]{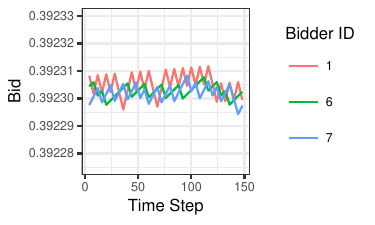}
	\hfill
	$~$
  \caption{{\em Left}: An empirical CDF over absolute differences between the empirical allocation and the MIP allocation. Empirical allocations were approximately equal to the MIP allocation across all instances. {\em Right}: An example of buyers adaptively adjusting their pacing multipliers to effectively win a fraction of the good type; the empirical allocation in this instance was approximately equal to the feasibility MIP fractional allocation.}
	\label{fig:dancing}
\end{figure}

\section{Conclusion}\label{sec:conclusion}

In auction markets, buyers with budgets are not necessarily best off submitting
their true valuations.
We considered {\em multiplicative pacing} and proved its optimality from the
buyer's viewpoint (Proposition~\ref{prop:pacing_optimality}). We introduced a
notion of {\em pacing equilibrium} (Definition~\ref{def:pacing} and
Proposition~\ref{the:best_responding}), proved (a) their existence in
Theorem~\ref{th:existence}, (b) close relations to competitive equilibria in
Section~\ref{sec:competitive_equilibrium}, and that (c) finding equilibria
maximizing welfare and revenue is NP-hard in Theorem~\ref{th:max_rev}. We gave a
MIP formulation for finding optimal pacing equilibria and evaluated it
experimentally. We found that although multiple equilibria may exist, their
paced welfare and revenue are frequently similar.
For adaptive pacing, we found that regret-based dynamics arrived at allocations
near our MIP-based solutions, and that these allocations were improved by
warm-starting with solutions from our MIP, even when the MIP input was noisy.
Our experimental findings were robust to several different random models of
markets, as well as markets generated from real-world auction data.

While the MIP can only be run on small-enough instances, its solution has an interpretation for larger instances, both when doing clustering or when valuations were drawn jointly across buyers in proportion to valuations from the original instance. 
Using the MIP to warm-start an adaptive algorithm on these larger instances indeed results in better convergence, and these improvements were robust to noise or to different ways to generate the smaller instance from the larger instances.

A few open questions remain: What is the computational complexity of
finding an {\em arbitrary} pacing equilibrium? Can
we generalize to multiple-slot auctions or to a
dynamic setting with uncertainty about future auctions?
Can we make further realistic assumptions on the primitives to get tractability or stronger results?
In dynamic settings, 
how do we improve the convergence to optimal equilibria? 
One direction is to explore how to best compress extremely large problem instances---those with many buyers and many more auctions---so that the MIP provides valuable output for warm-starting large-scale dynamic pacing problems.


\ACKNOWLEDGMENT{%

We would like to thank 
Santiago Balseiro, 
Ozan Candogan, 
Roberto Cominetti,
Nikhil Devanur,
Yoni Gur, 
Renato Paes Leme,
Debmalya Panigrahi,
Marco Scarsini,
Okke Schrijvers,
Marc Schroder,
and 
Chris Wilkens
for very useful discussions on the technical content of this paper.
In addition, we extend our sincere thanks to 
the review team of Operations Research,
conference PC members and reviewers at WINE, 
seminar participants at Duke CS-Econ, Melbourne Business School, Michigan Ross, MIT ORC and Stanford GSB, 
and attendees of 
the AAMAS-IJCAI workshop on Agents and Incentives in Artificial Intelligence (AI$\hat\ $3),
the AI and Marketing Science workshop at AAAI,
the Algorithmic Game Theory and Data Science workshop at EC,
the Barbados AGT workshop,
the International Conference on Game Theory,
and 
the Schloss Dagstuhl seminar on Traffic Models.
Their remarks and questions allowed us to significantly improve the manuscript and its presentation.
Finally, we also thank the review team in Operations Research for their insightful comments and questions that also helped us improve this article.

}





\bibliographystyle{plainnat}
\bibliography{refs}

\ECSwitch


\ECHead{Multiplicative Pacing Equilibria in Auction Markets: E-Companion with Proofs and Additional Material}

\section{Missing Proofs}\label{sec:missingproofs}

\subsection{Pareto Optimality of Competitive Equilibria}

We prove formally that competitive equilibria are Pareto optimal, which is a known result, but we do it specialized to our setting of buyers with budgets for concreteness.

\bigskip
\noindent{\bf Proposition~\ref{prop:ce_pareto}}
  (The first fundamental theorem of welfare economics).
Any competitive equilibrium is Pareto optimal (when considering the utilities of all buyers and the seller).
\begin{proof}{Proof.}
  For the sake of contradiction, suppose there is a Pareto dominating allocation
  of goods and money. Let $p_j$ denote the price of good $j$ in the competitive
  equilibrium, $S_i$ the bundle received by buyer $i$ in the competitive
  equilibrium (and $S'_i$ in the dominating allocation), and $t_i = p(S_i) \leq
  B_i$ the amount of money buyer $i$ spends in the competitive equilibrium (and
  $t'_i \leq B_i$ in the dominating allocation). Here, we let $p(S)=\sum_{j \in
    S} p_j$ be the price of any bundle of goods $S$ under the competitive
  equilibrium prices. Now, for every buyer $i$, it must be the case that $t'_i
  \leq p(S'_i)$. This is because either $p(S'_i)>B_i$ or $v_i(S_i)-p(S_i) \geq
  v_i(S'_i)-p(S'_i)$ by the property of competitive equilibria, yet $v_i(S_i) -
  p(S_i) \leq v_i(S'_i) - t'_i$ by Pareto dominance. If $t'_i < p(S'_i)$ for
  some $i$, then it follows that the total payment in the dominating allocation
  is less than the sum of the good prices, contradicting that the seller is at
  least as well off. On the other hand, if $t'_i = p(S'_i)$ for all $i$, then no
  buyer is better off, and also the seller is just as well off, again
  contradicting Pareto dominance.
\hfill\Halmos
\end{proof}

\subsection{NP-Hardness of Computing Pacing Equilibria}

We first note the following proposition about our Example~\ref{ex:binary_choice}
for modeling binary choices.

\medskip

\begin{proposition}
  In Example~\ref{ex:binary_choice}, when $\alpha+\delta < 1/3$, no
  equilibrium satisfies $\min(\alpha_1,\alpha_2) \geq 3 \alpha$.
\label{prop:binary_choice}
\end{proposition}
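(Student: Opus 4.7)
The plan is to assume for contradiction that there is a pacing equilibrium with $\alpha_1, \alpha_2 \geq 3\alpha$, and derive that some bidder's spending exceeds her budget $K_1$. I would split into two cases according to whether the multipliers are strictly ordered or tied, since tied multipliers allow fractional splitting of items~1 and~2 while strictly ordered multipliers do not.

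First, consider the case $\alpha_1 > \alpha_2$ (the case $\alpha_2 > \alpha_1$ being symmetric by the symmetry of the instance). I would argue that bidder~1 uniquely wins items~1,~2, and~3 while bidder~2 uniquely wins item~4. For items~1,~2 this is immediate from $\alpha_1 K_2 > \alpha_2 K_2$. For item~3, bidder~1 bids $\alpha_1(K_1/\alpha+\epsilon)$ and bidder~2 bids $\alpha_2 K_1$; since $\alpha_1 \geq 3\alpha > \alpha \geq \alpha\alpha_2$, bidder~1's bid strictly exceeds bidder~2's. The symmetric inequality $\alpha_2 \geq 3\alpha > \alpha\alpha_1$ shows bidder~2 uniquely wins item~4. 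Then bidder~1's total spend equals the sum of the three second prices: $2\alpha_2 K_2 + \alpha_2 K_1 = \alpha_2(2K_2+K_1)$. Plugging in $K_2 = \tfrac{1-\alpha-\delta}{2\alpha}K_1$ gives $2K_2+K_1 = \tfrac{1-\delta}{\alpha}K_1$, so the budget constraint forces $\alpha_2 \leq \tfrac{\alpha}{1-\delta}$. Under $\alpha+\delta < 1/3$ we have $\delta < 1/3$, hence $1-\delta > 2/3$ and $\tfrac{\alpha}{1-\delta} < \tfrac{3\alpha}{2} < 3\alpha$, contradicting $\alpha_2 \geq 3\alpha$.

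Next, consider $\alpha_1 = \alpha_2 = c \geq 3\alpha$. Now items~1 and~2 may be split arbitrarily between the tied bidders, but since $K_1/\alpha+\epsilon > K_1$, each of items~3 and~4 still has a unique high bidder (item~3 for bidder~1, item~4 for bidder~2). Let $f$ denote bidder~1's total fractional allocation on items~1 and~2. Bidder~1's spending is $c K_2 f + c K_1$ and bidder~2's is $c K_2(2-f) + c K_1$; each must be at most $K_1$. Summing these two inequalities eliminates $f$ and yields $c(K_1+K_2) \leq K_1$, i.e.\ $c \leq \tfrac{K_1}{K_1+K_2} = \tfrac{2\alpha}{1+\alpha-\delta}$. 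The assumption $\alpha+\delta < 1/3$ implies $\delta-\alpha \leq \delta < 1/3$, whence $1+\alpha-\delta > 2/3$ and so $c < 3\alpha$, the required contradiction.

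The main obstacle is handling the tied case cleanly: in that case neither bidder individually is forced to exceed budget (the split can be chosen adversarially), so the budget inequality has to be summed over the two bidders to kill the free parameter $f$. The other subtle point is verifying that the hypothesis $\alpha+\delta<1/3$ is exactly what is needed in both cases: the strictly ordered case needs $1-\delta > 2/3$ (i.e.\ $\delta<1/3$), while the tied case needs $1+\alpha-\delta > 2/3$ (i.e.\ $\delta-\alpha<1/3$), and both are implied by $\alpha+\delta<1/3$ together with $\alpha,\delta\geq 0$.
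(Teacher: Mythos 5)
Your proof is correct, but it is considerably more involved than the paper's, and the extra machinery is not needed. The paper's proof observes a single inequality: since both bidders value items~1 and~2 at $K_2$, if $\min(\alpha_1,\alpha_2)\geq 3\alpha$ then the second price on each of those two items is at least $3\alpha K_2$, so the combined spending on items~1 and~2 alone is at least $6\alpha K_2 = 3(1-\alpha-\delta)K_1 > 2K_1$ when $\alpha+\delta<1/3$. That already exceeds the combined budget $B_1+B_2 = 2K_1$ of the only two bidders who can be allocated those items, and we are done --- no need to track who wins items~3 and~4 or whether the multipliers are tied. Your case split (strict order vs.\ tie) and the trick of summing the two budget constraints to eliminate the split fraction $f$ are both handled automatically by the paper's argument, precisely because it works with the \emph{combined} budget and restricts attention to items~1 and~2. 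What your approach buys is a slightly finer bound in the strictly-ordered case (you conclude $\alpha_2 \leq \alpha/(1-\delta) < 3\alpha/2$, which is stronger than $<3\alpha$), but this extra precision is not used, and it comes at the cost of a two-case argument that the paper collapses into one line.
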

\begin{proof}{Proof.}
  The reason is that if such an equilibrium existed,
  the total price of the first two goods would be at least $6\alpha K_2 = 3
  (1-\alpha-\delta)K_1 > 2 K_1$ (which follows from the statement).
  This is the combined budget of the two buyers,
  resulting in a contradiction.
\hfill\Halmos
\end{proof}

With this proposition we are ready to prove our complexity result.

\bigskip
  \noindent{\bf Theorem~\ref{th:max_rev}.}
  {\sc MAX-REVENUE-PACING}, {\sc MAX-WELFARE-PACING} and {\sc MAX-PACED-WELFARE-PACING} are NP-complete.
\begin{proof}{Proof.}
  We reduce an arbitrary {\sc 3SAT} instance to the following {\sc
    MAX-REVENUE} instance. We set $T$ equal to the number of clauses, plus $4$
  times the number of variables, in the {\sc 3SAT} instance. For every
  variable $x_j$, we create a copy of Example~\ref{ex:binary_choice}, consisting
  of buyers $1^{x_j}, 2^{x_j}$ and goods $1^{x_j}, 2^{x_j},3^{x_j},4^{x_j}$,
  with bids as specified in the example, using $K_1=4$, $\alpha=1/4$,
  $\delta=0$, and (hence) $K_2=6$. Each of these goods will only be bid on by
  the buyers corresponding to its own variable (the other buyers have
  valuation $0$ for them). However, the buyers will bid on other goods as well,
  namely goods corresponding to the clauses. Specifically, we associate buyer
  $1^{x_j}$ with the literal $+x_j$, and buyer $2^{x_j}$ with the literal
  $-x_j$. A buyer values a clause good at $1$ if its literal occurs in that
  clause, and at $0$ otherwise. Finally, we add a single buyer with unlimited
  budget that values every clause good at $2$. Hence, this buyer will
  necessarily win all the clause goods, at price at most $1$ each.

  Suppose a satisfying assignment exists. If $x_j$ is set to {\em true}, set
  $\alpha_{1^{x_j}}=1$ and $\alpha_{2^{x_j}}=\alpha$; otherwise, set
  $\alpha_{1^{x_j}}=\alpha$ and $\alpha_{2^{x_j}}=1$. This depletes the budgets
  of the buyers corresponding to variables, resulting in a revenue of $4$ times
  the number of variables. Moreover, for every clause good, the unlimited-budget
  buyer faces one of the variable buyers with a multiplier of $1$, since we
  had a satisfying assignment. Hence this buyer pays an amount equal to the
  number of clauses. Hence the {\sc MAX-REVENUE-PACING} instance has a solution.

  Conversely, suppose the {\sc MAX-REVENUE-PACING} instance has a solution.
  Then, the unlimited-budget buyer must pay at least an amount equal to the
  number of clauses. Because she pays at most $1$ on each clause good, it
  follows that she must pay exactly $1$ on each clause good. Hence, at least one
  of the buyers corresponding to positive literals in each clause must have a
  multiplier $1$. But since, by Proposition~\ref{prop:binary_choice}, at most
  one of the two buyers corresponding to a variable can have a multiplier of
  $1$, it follows that these buyers correspond to a satisfying assignment.

  Now we switch to the welfare objective.
  We reduce an arbitrary {\sc 3SAT} instance to the following {\sc
    MAX-WELFARE} instance. We set up buyers corresponding to variables as in
  the {\sc MAX-REVENUE} proof. We set $\alpha=\delta=\frac{1}{8}$, $K_1=1$, and
  thus $K_2=3$. We let $V,C$ be the sets of variables and clauses in the {\sc
    3SAT} instance, respectively. We set $T$ equal to
\[
\delta K_1 + |V|\left(2K_2+\left(\frac{K_1}{\alpha}+\epsilon\right)\right)=\frac{1}{8} + |V|(14 + \epsilon),
\]
For clauses, a buyer values a clause at value $\frac{\delta K_1}{|C|}$ if its
literal occurs in that clause, and at $0$ otherwise. Finally, we add a single
buyer with unlimited budget that values every clause good at $\frac{\delta
  K_1}{2|C|}$.

Suppose a satisfying assignment exists. Perform the assignment as in the {\sc
  MAX-REVENUE} setting. That gives a social welfare of
$|V|(2K_2+(\frac{K_1}{\alpha}+\epsilon))$ from the variable goods. Furthermore,
for each clause, at least one satisfied-literal buyer has its pacing multiplier
set to $1$, thus winning the clause good, yielding utility $\frac{\delta
  K_1}{|C|}$. Summing over the clauses gives the desired social welfare. Each
buyer can at most win all the clauses, and thus their spend is bounded by
$(1-\delta)K_1+ \delta K_1$, satisfying their budget constraint.

Conversely, suppose the {\sc MAX-WELFARE-PACING} instance has a solution. Then
each clause good must be allocated to a satisfied-literal buyer. But, in order
to beat the unlimited-budget buyer, the satisfied-literal buyer must have a
pacing multiplier of at least $\frac{1}{2}$. By
Proposition~\ref{prop:binary_choice}, this means that the buyer corresponding
to the opposite literal must have a multiplier less than or equal to
$\frac{3}{8}$. Therefore, the buyers with pacing multipliers of at least
$\frac{1}{2}$ correspond to a satisfying assignment.

We can perform almost the same reduction for {\sc MAX-PACED-WELFARE-PACING}. We construct the same set of buyers and valuations. We set $T$ equal to
\[
\delta K_1 + |V|\left(2K_2+ \alpha\left(\frac{K_1}{\alpha}+\epsilon\right)\right)=\frac{1}{8} + |V|\left(7 + \frac{ \epsilon }{8}\right),
\]
If a satisfying assignment exists, we can set the same pacing assignment as
before. The only difference from the previous construction is that the paced
welfare from the variable goods is now $|V|(2K_2+(K_1+\frac{ \epsilon }{8}))$.
Combined with the clause good assignment, this gives exactly the desired paced
welfare.

The converse case becomes simpler. For any {\sc MAX-PACED-WELFARE-PACING}
solution, it must be the case that each variable has at least one buyer with a
pacing multiplier of $1$. To obtain the remaining paced welfare of $\delta K_1$,
these buyers with pacing multiplier $1$ must correspond to a satisfying
assignment.
\hfill\Halmos
\end{proof}

\subsection{Correctness of MIP Formulation}\label{sec:appendix_proofMIP}

\noindent{\bf Proposition~\ref{prop:correct_formulation}.}
    A solution to the MIP 
  \eqref{eqn:budget}-\eqref{eqn:distinctwr} is feasible if and only if it corresponds to the conditions of a pacing
  equilibrium.

\bigskip
\begin{proof}{Proof.}
 Assume that all goods $j$ have some buyer $i$ such that $v_{ij}>0$.
 Otherwise, we preprocess the problem by removing all goods that no buyers are
 interested in.

 First, let $\alpha_i,x_{ij} \in [0,1], s_{ij}\in \Rplus$ be
 a pacing equilibrium for a pacing game. Let all MIP variables be set according
 to their definition as it pertains to the pacing equilibrium. Set $x_{ij}=1$
 if $x_{ij}>0$. If there are multiple buyers with $x_{ij}>0$ for good $j$, set
 $w_{ij}=1,r_{i'j}=1$ for two (and only those two) arbitrary buyers $i\ne i'$
 among the winners. We now show that all equations are satisfied.
 Constraint~\eqref{eqn:budget} is implied by the third condition of pacing
 equilibria. Constraint~\eqref{eqn:ydef} holds since we set $y_i=1$ exactly when
 buyer $i$ spends the whole budget. Constraint~\eqref{eqn:alpha} is implied by
 our choice of $y_i$ combined with the third condition of pacing equilibria.
 Constraint~\eqref{eqn:spend} is implied by the first condition of pacing
 equilibria. Constraint~\eqref{eqn:xdef} is implied by the third condition of
 pacing equilibria combined with the fact that buyers spend nothing on a good
 unless they are allocated a non-zero amount. Constraint~\eqref{eqn:highbid}
 and~\eqref{eqn:highbidupper} are implied by our choice of $h_j$ being the
 highest bid on good $j$ and the fact that $\bar{v}_j$ upper-bounds $v_{ij}$.
 Constraint~\eqref{eqn:winner} is implied by our choice for $w_{ij},x_{ij}$.
 Constraint~\eqref{eqn:price} is satisfied because we set $p_j$ equal to the second
 price, and the constraint is disabled for the highest bid due to $w_{ij}=1$
 and $v_{ij}$ being an upper bound on $\alpha_iv_{ij}$.
 Constraint~\eqref{eqn:priceupper} is implied by our choice of setting $r_{ij}=1$
 only if buyer $i$ constitutes the second price, and the fact that the
 constraint is disabled for all other buyers. Constraints~\eqref{eqn:onewinner},
 \eqref{eqn:onerunnerup}, and~\eqref{eqn:distinctwr} are implied by our choices for
 $w_{ij},r_{ij}$, respectively.

 Now assume that we have some satisfying assignment to the MIP. To construct a
 pacing equilibrium, assign pacing multipliers and spendings according to the
 values from the MIP, and set $x_{ij}=s_{ij}/p_j$. We now show that each of the
 three conditions for a pacing equilibrium are satisfied.

 Constraint~\eqref{eqn:spend} implies $\sum_{i\in N}x_{ij}=
 \sum_{i\in N}s_{ij}/p_j=1$. If $x_{ij}>0$ then $s_{ij}>0$ and by
 \eqref{eqn:xdef} $x_{ij}=1$, therefore \eqref{eqn:highbid}
 and~\eqref{eqn:highbidupper} imply $\alpha_iv_{ij}=h_j$. For all buyers $i'$
 with $x_{i'j} = 0$, we have $\alpha_{i'}v_{i'j}\leq \alpha_iv_{ij}$, otherwise
 we would violate \eqref{eqn:highbid} and thereby contradict our
 assumption of having a satisfying assignment. This shows that the first
 condition of a pacing equilibrium is satisfied.

 We first show that, in a
 feasible assignment $p_j$ must be equal to the second price. $p_j$ is both
 upper and lower-bounded by $\alpha_iv_{ij}$ for the buyer $i$ such that
 $r_{ij}=1$. Furthermore, \eqref{eqn:price} guarantees that $p_j$ is at
 least as high as the second-highest bid. Finally note that if $\alpha_iv_{ij}$
 is the highest bid $h_j$ and $r_{ij}=1$, then there must exist at least one
 other buyer such that $\alpha_{i'}v_{i'j}=h_j$ because
 \eqref{eqn:distinctwr} ensures that $w_{i'j}=1$ for some $i'$, and
 \eqref{eqn:highbidupper}-\eqref{eqn:winner} then imply that buyer $i'$
 must satisfy $\alpha_{i'}v_{i'j}=h_j$. This shows that $p_j$ is the second
 price. Now it remains to note that all buyers $i$ with $x_{ij}>0$ pay $p_j$,
 which is exactly the highest bid other than their own for $r_{ij}=0$. When
 $r_{ij}=1$, we established that $w_{i'j}=1$ for some other buyer, and thus
 $i$ and $i'$ must be tied for first price, and buyer $i$ is thus still paying
 the highest bid other than their own. This shows that the second condition of
 a pacing equilibrium is satisfied.

 Constraint~\eqref{eqn:budget} ensures that all budgets are satisfied.
 Constraints~\eqref{eqn:ydef} and~\eqref{eqn:alpha} ensure that if budgets are not
 fully spent then $y_i=0$, and $\alpha_i$ is then forced to be $1$. This shows
 that the third condition of a pacing equilibrium is satisfied.
\hfill\Halmos
\end{proof}

\section{Multiple Equilibria with Stochastic Valuations}\label{sec:stoch-valuations}

We provide two examples of pacing games with multiple equilibria where valuations are stochastic.

\subsection{Stochastic Pacing Games Equivalence}

In this section we show that our deterministic model has a natural stochastic analogue where valuations are not known ahead of time. In this model, we assume that there are $n$ buyers as before. We also assume that there are $m$ categories of goods, and for each category $j$, buyer $i$ has a weight $w_{ij} \geq 0$ (the weight vector $w_i$ is analogous to the valuation vector $v_i$ of a buyer in a deterministic instance). The buyers participate in $\ell$ auctions. For each auction, a category $j$ is sampled uniformly at random from the $m$ categories, and a \emph{quality score} $q \in [0, \omega_j]$ is  sampled from a distribution with mean $1$. The value that buyer $i$ has for the good is then $q\cdot w_{ij}$, where $j$ is the sampled category. Each buyer has some budget $B_i$. We want to find a vector of pacing multipliers $\alpha$ and a tie-breaking rule $x: [m] \rightarrow [0,1]^n$ specifying how each category is split in case of a tie, such that each buyer's budget constraint holds in expectation. Formally: 
\[
  \ell \cdot \mathbb{E}[x_i(j) p_j] \leq  B_i
\]

The above stochastic model turns out to be equivalent to a deterministic pacing game where each buyer has a valuation vector $w_i$, and a budget of $\frac{m}{\ell}B_i$. In particular, pacing equilibria in the deterministic pacing game and in the stochastic model are in a one-to-one correspondence:
\begin{proposition}
  Any vector of pacing multipliers $\alpha$ and corresponding tie-breaking function $x(j)$ is a pacing equilibrium of a stochastic pacing game if and only if it is a pacing equilibrium in the corresponding deterministic game.
\end{proposition}
\begin{proof}{Proof.}
  To show the proposition, it is sufficient to prove that 
  \[
    \mathbb{E}\left[ x_i(j) p_j \right]
    = \frac{\ell}{m} \sum_{j} x_{ij} p_j,
  \]
  since this implies that the pacing complementarity condition holds in the stochastic model if and only if it holds in the deterministic model.
To that end, let $k(j)$ be the index of the buyer $k$ with the second-highest bid $\alpha_kqw_{kj}$ for category $j$. We have
\begin{align*}
    \ell\cdot \mathbb{E}_{j}\left[ x_i(j) p_j \right]
  = \ell \cdot \frac{1}{m}\sum_j x_{ij} \alpha_{k(j)} w_{k(j)j} \mathbb{E}[q]
  =  \frac{\ell}{m} \sum_j x_{ij} \alpha_{k(j)} w_{k(j)j} 
  = \frac{\ell}{m} \sum_j  x_{ij} p_j.
\end{align*}
\hfill\Halmos
\end{proof}

Next, let us see an example of how this reduction can be applied to one of our examples from the paper. We consider a subset of the valuations used in Example~\ref{ex:revenue_equilibria}
shown to the left of Table~\ref{tab:correlated_cont_cdf}. Remember that in this example we have two very different pacing equilibria: $\alpha=(1,0.01)$ or $\alpha'=(0.01, 1)$.

\begin{table}[h]
  \caption{Valuations in the infinite goods example. Left: Deterministic valuations. Right: Stochastic valuations.}
  \label{tab:correlated_cont_cdf}
  \centering
  \begin{tabular}{c|c|c|c}
    & $g_1$ & $g_2$ & $g_3$ \\
    \hline
    $v_1$ & 100 & 1 & 99\\
    \hline
    $v_2$ & 1 & 100 & 99\\
    \hline
  \end{tabular}
  $\qquad$
  \begin{tabular}{c|c|c|c}
    & $g_1=1/3$ & $g_2=1/3$ & $g_3=1/3$ \\
    \hline
    $v_1$ & Unif$(0,200)$& Unif$(0,2)$ & Unif$(0,198)$\\
    \hline
    $v_2$ & $\frac{v_1}{100}$ & $100\cdot v_1$ & $v_1$\\
    \hline
  \end{tabular}
\end{table}

We can convert this discrete instance into a stochastic model by sampling valuations for $v_1$ and $v_2$ in correlated fashion. In particular, applying our reduction from above and sampling the quality score for each category from Unif$(0,2)$, we get a stochastic setting where we sample columns uniformly as indicated to the right of Table~\ref{tab:correlated_cont_cdf}.
Here, $v_1$ is chosen by uniformly sampling one of three uniform distributions,
and then sampling from the chosen distribution. Each uniform distribution
corresponds to one of the goods in the deterministic values in
Table~\ref{tab:correlated_cont_cdf}, and the expected value of each uniform
distribution is equal to the corresponding deterministic value. 
The valuation $v_2$ is then a deterministic function of $v_1$, chosen to preserve the relative relationship
between values from Table~\ref{tab:correlated_cont_cdf}. The marginal CDF of
$v_2$ is the same as that of $v_1$. We sample $n$ goods and set budgets equal to
$\frac{n}{3}$. In the stochastic setting we wish to satisfy the budget
constraint in expectation. This is achieved by both of the solutions from
before, $\alpha=(1,0.01)$ or $\alpha'=(0.01, 1)$, by linearity of expectation.

\subsection{Independently Drawn Valuations}

In the second example, we consider the following instance with independently drawn valuations. There are two buyers with budgets $B_1=1.2$ and $B_2=1$, respectively, and a single good.  
Each buyer samples their valuation uniformly iid from:
\[
  v_i \sim
  \begin{cases}
    40 & \frac{1}{2} \\
    4 & \frac{1}{2}
  \end{cases}.
\]
We show two very different pacing solutions that are equilibria. In the first one, both buyers have the same multipliers.
  Setting $\alpha_1=\alpha_2=\alpha$, the total spend is
$
  \alpha \frac{1}{4}(40 + 4 + 2\cdot 4)
  =
  13 \alpha
$.
It follows that if we set $\alpha = \frac{2.2}{13}$ then exactly $B_1+B_2=2.2$ is spent. A buyer wins when their value is 40 and the other one has value 4, and they may split the good when tied at $(40,40)$ and $(4,4)$. In order for budgets to be spent exactly, we solve for the fraction $x$ of the tied goods that buyer 2 should receive as follows:
\[
  \frac{2.2}{13} (1 + 10x + x) = 1,
\] 
which yields $x = \frac{54}{121}$.

In the second equilibrium, buyers have different multipliers.
We set $\alpha=(1, 0.1)$, in which case buyer 1 wins for every valuation vector except $(4,40)$; we allocate all of the good to buyer 2 in the case of the valuation vector $(4, 40)$. The spend of buyer $1$ is then
$
  \frac{0.1}{4} (40 + 4 + 4)
  =
  1.2
$,
while the spend of buyer $2$ is $\frac{1}{4}4=1$. 

Buyer 2 achieves expected utility of $\frac{54}{121}(10+1) + 10 - 1 \approx 14.9 - 1 \approx 13.9$ in the first equilibrium, but an expected utility of $9$ in the second.

The above example was for the case of a single good. However, we can extend this example to multiple goods as follows: say that there are $m$ goods in total, and each buyer draws their valuation $v_{ij}$ iid according to the same distribution as before; i.e.,
\[
  v_{ij} \sim
  \begin{cases}
    40 & \frac{1}{2} \\
    4 & \frac{1}{2}
  \end{cases}.
\]

Now, we set the budgets equal to $B_1 = 1.2m$ and $B_2=m$, respectively. By linearity of expectation, the equilibria from the single-good case still cause each buyer to exactly spend their budget in expectation, and thus are equilibria for the $m$-good case as well.

\section{Iterating Best Responses Cycles}\label{sec:cycling}

Here, we provide step-by-step details of Example~\ref{ex:bids_appendix} which shows that iterating best responses may cycle.
\begin{itemize}
	\item Initially, buyer 1 wins auctions 1 and 5 and pays 60; buyer 2 wins auctions 2, 3, 4, and 6 and pays 1928. buyer 2 exceeds its budget of 1300 at these multipliers---it exhausts its budget from auction 2 alone, in which it pays 1300, and it also wins three other auctions. Buyer 2's best response is to lower its multiplier so that it wins only auction 2. To do so, buyer 2 sets its multiplier somewhere on the interval $(1300/6503, 5/25) \approx (0.1999, 0.2)$: any lower, and its bid for auction 2 drops below buyer 1's bid of 1300, in which case buyer 2 wins nothing; any higher, and its bid for auction 6 exceeds buyer 3's bid of 5, in which case buyer 2 exceeds its budget.
	\item After buyer 2 lowers its multiplier, buyer 1 wins more auctions: In addition to what it was winning previously, buyer 1 also wins auction 3 at a price equal to buyer 2's paced bid of at least $300.6(1300/6503) \approx 60.09$. Buyer 1 exhausts its budget of 60 from auction 3 alone. Buyer 1 must set its multiplier low enough to not win auction 3, but such a multiplier is so low that it results in buyer 1 losing all other auctions. Buyer 1's best response is to tie on auction 3, where buyer 2's paced bid is at most $300.6(5/25)$. To do so, buyer 1 sets its multiplier to at most $300.6(5/25) / 123 \approx 0.488$.
	\item After buyer 1 lowers its multiplier, buyer 2 goes from losing to tying on auction 3, causing buyer 2 to pay more than it was previously for that auction, but it also pays much less for auction 2: Instead of paying 1300 for auction 2 as it was previously, it pays around $1300(0.488) = 634.4$. Because buyer 2 is paying so much less for auction 2, it can raise its multiplier to 1, causing it to win auctions 3, 4, and 6 and to pay less than its budget.
	\item After buyer 2 raises its multiplier to 1, buyer 1 no longer wins auction 3. It can raise its multiplier to 1 and still not exhaust its budget. This brings us back to the first iteration, where all multipliers were set to 1.
\end{itemize}

\section{Experiments on Best-Response Dynamics (One-Shot Setting)}\label{sec:brdynamics}

We considered \emph{best-response dynamics} (BR dynamics) to search for a pacing equilibrium
in the standard, one-shot setting. We briefly describe these experiments here.
BR dynamics can be thought of as a repeated
auction market where each buyer has some budget to spend every day and wishes
to set its pacing multiplier appropriately. At the end of each day, buyers
observe the outcome for the day and best respond to the strategy of the other
players. Our goal in these experiments was to see whether warm-starting BR
dynamics with the MIP output can improve convergence of BR dynamics and lead it
to outcomes with higher welfare than it would otherwise achieve.

We consider two BR algorithms that differ in how the best response is
computed. If there is more than one BR pacing multiplier, we break ties towards
the highest pacing multiplier (\emph{BR high}\/), or towards the lowest
(\emph{BR low}\/). Both algorithms always start from the same random
initialization of pacing multipliers. In addition, we consider BR high starting
from the MIP solutions and refer to it as \emph{Init MIP}. When needed, we
replace MIP in the name by a specific MIP objective. For the BR setting, we
consider random tiebreaking rather than having fractional allocations be part of
the bids. Thus, a pacing equilibrium might not be stable if it includes
fractional allocations.
We evaluated the BR algorithms on a subset of 50 synthetic instances taken randomly from those in the computational study.

\begin{figure}[t!]
  \centering
	\includegraphics[width=0.4\columnwidth]{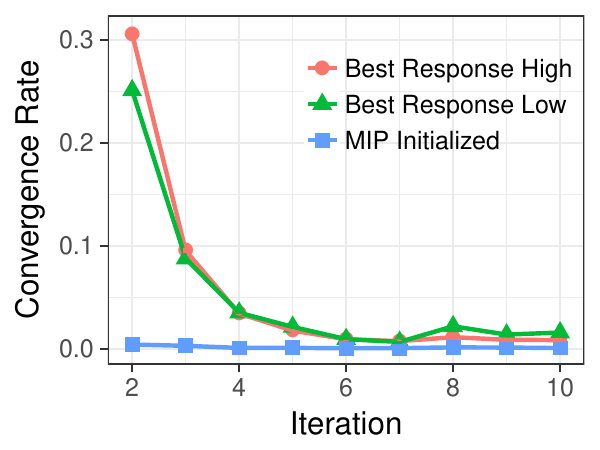}
	\caption{BR dynamics convergence rate. For each iteration, we show
    the absolute difference in a buyer's multipliers
    from the previous iteration, averaged across buyers and instances.
    For MIP initialization we average across solutions from
    all objectives.}
	\label{fig:brstep}
\end{figure}

\begin{figure}[t!]
  \centering
	\includegraphics[width=0.5\columnwidth]{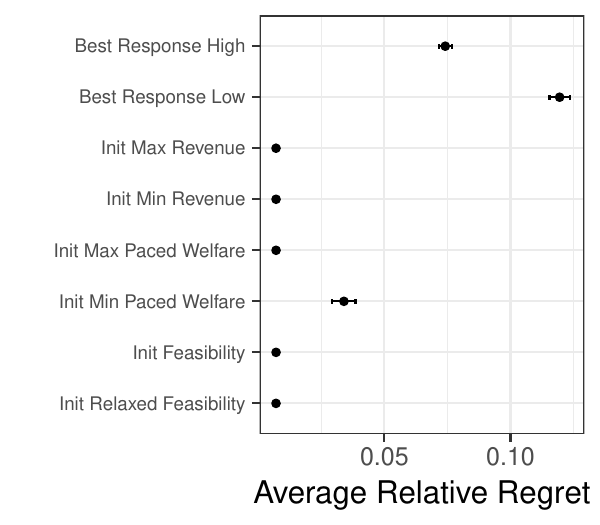}
  \caption{Maximum relative regret over buyers in an instance for various BR
    algorithms averaged across instances.}
	\label{fig:avgmaxrelregret}
\end{figure}

We start by looking at BR dynamics convergence and regret.
Figure~\ref{fig:brstep} shows that the BR algorithms converge quickly in our
computational study. They required less than $10$ iterations to reach small
oscillations in pacing multipliers. Figure~\ref{fig:avgmaxrelregret} shows the
maximum relative regret across all buyers, averaged across instances. The
relative regret for a buyer is computed as the ratio of the utility-improvement
they could get by best responding, divided by the utility of the best response
(i.e., the fraction of utility they are missing out on).
For the purposes of computing regret, when a buyer exceeds its budget, we do not set utility to negative infinity; instead we penalize utility by the amount over budget
multiplied by the spend-to-budget ratio times paced-welfare-to-budget ratio. We
see that both BR high and BR low have somewhat high relative regret, missing
out on $7.5\%$-$12\%$ utility. Contrary to this, Init MIP
solutions perform well and are able to stay near equilibrium for most
instances.

Figure~\ref{fig:maxrelregret} shows the relative regret broken down by each algorithm. This plot shows that the poor performance of initializing with the objective that minimized paced welfare was actually caused by a single outlier.

\begin{figure}[t!]
	\includegraphics[width=\columnwidth]{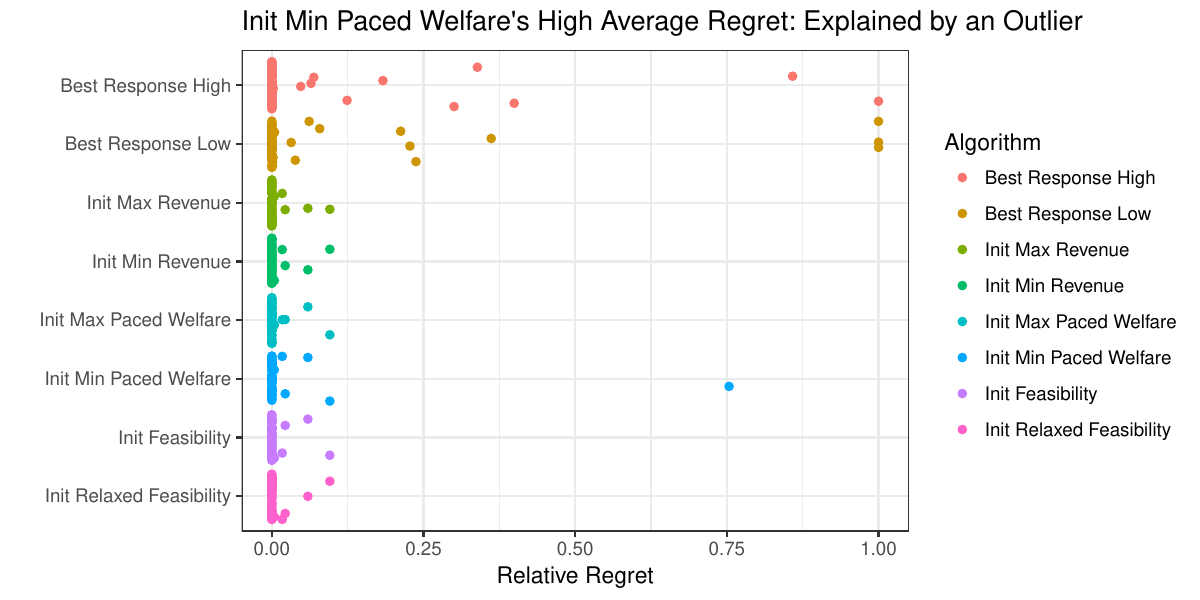}
	\caption{Relative regret broken down by each algorithm. Each point represents a BR algorithm running on a particular problem instance.}
	\label{fig:maxrelregret}
\end{figure}

Finally, we look at the improvement in market outcomes from seeding BR dynamics with the MIP output. Figure~\ref{fig:brperf_appexp} shows the revenue, welfare, and paced welfare achieved by the different BR dynamics algorithms relative to the MIP. Each point in the plot shows the average performance of a given algorithm relative to the solution maximizing each objective. BR low
performs significantly worse than BR high across all three dimensions. For
revenue and welfare, they both perform significantly worse than the MIP
solutions as well, in spite of the fact that the BR solutions may not even
respect budgets. The BR dynamics perform significantly better with MIP initializations than without.

\begin{figure}[t!]
  \centering
	\includegraphics[width=0.7\columnwidth]{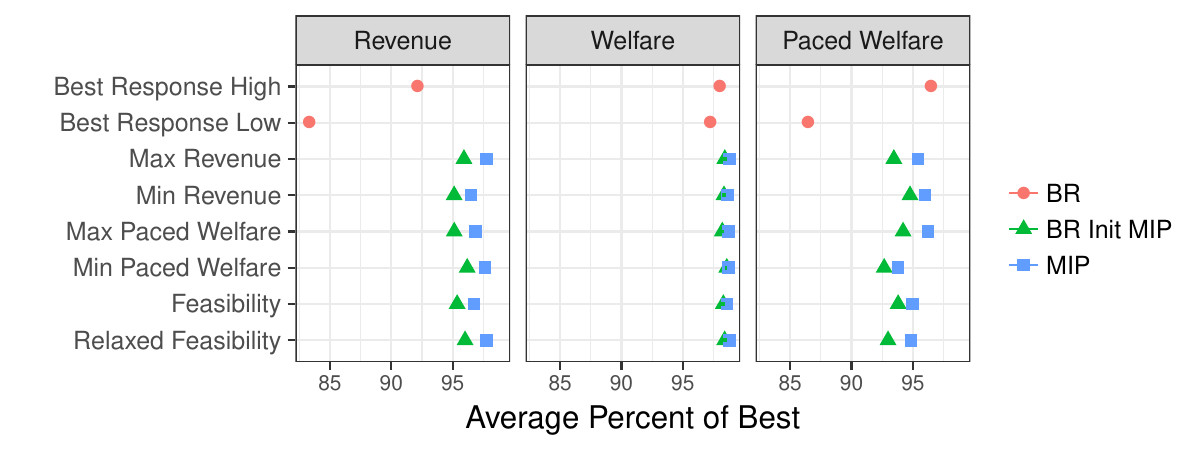}
	\caption{Performance of BR and MIP algorithms across 50 instances. The x-axis
    shows the percentage of the best value for each objective, averaged across
    buyers and instances.}
	\label{fig:brperf_appexp}
\end{figure}

\section{Additional Experimental Results and Details}\label{sec:experiment:appendix}

This section describes our experimental setup in more detail than space permitted in the main body of the paper and provides additional results.

\subsection{Equilibrium Gaps}

The experiments section reported on maximum gaps for different objectives and instance distributions. Here, we show additional summary statistics. Figure~\ref{fig:irperfgroup} shows the relative gap compared to optimal solutions for equilibria maximizing or minimizing the different objectives, measured with respect to each objective, grouped by instance type.

\begin{figure}
	\includegraphics[width=\columnwidth]{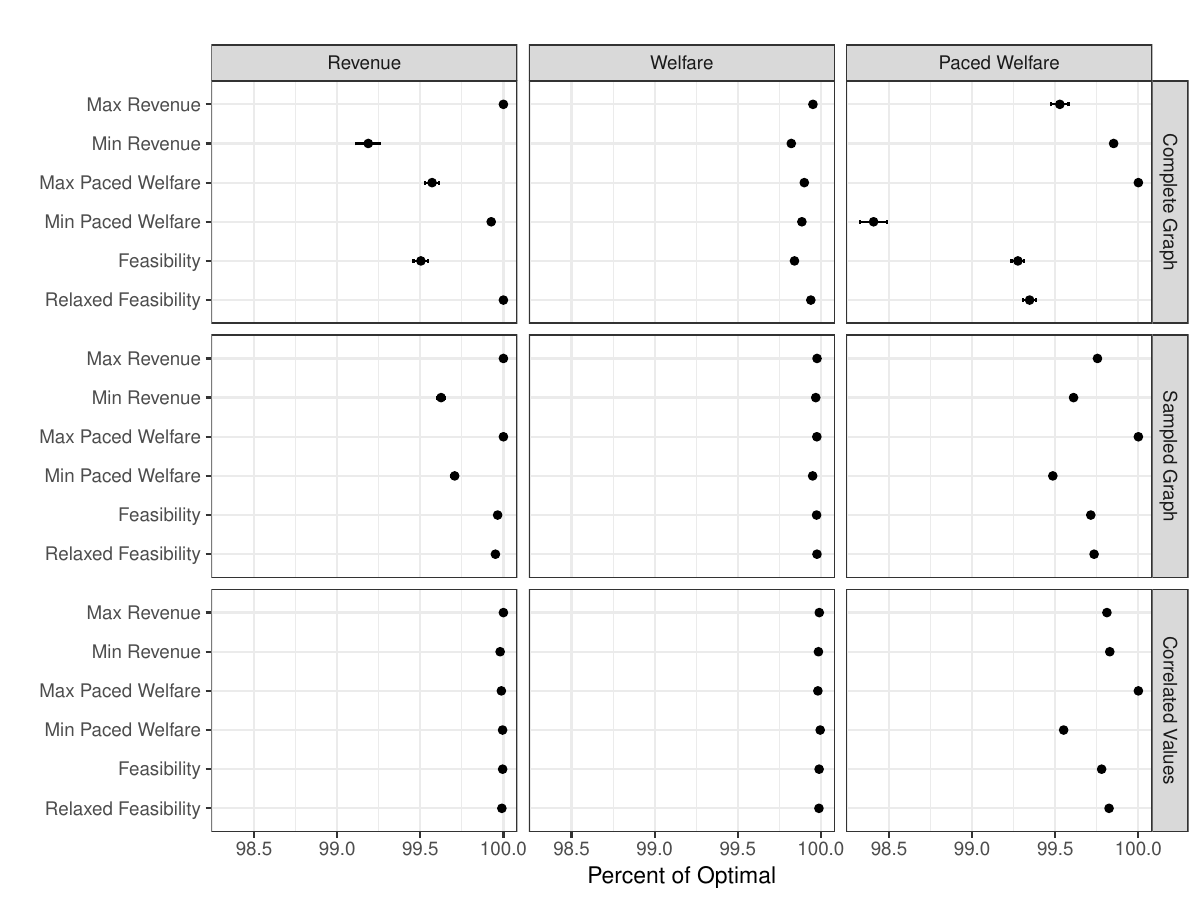}
	\caption{Percentage optimality for equilibria maximizing or minimizing the different objectives, measured with respect to each objective, grouped by instance type.}
	\label{fig:irperfgroup}
\end{figure}

\subsection{Other Terms and Notation}

We informally defined some terms in the experiments section, which we now define more precisely. For a given scaled-up instance $\tilde{\Gamma}$, let $k_j \in M$ be the \emph{good type} of auction $j$; this good type associates the auction in the scaled-up instance with a good in the original instance.
Let $\tilde{M}_j \subseteq \tilde{M}$ be the set of auctions in the scaled-up instance that have good type $j$ (i.e., $\tilde{M}_j = \{j' \in \tilde{M} : k_{j'} = j\}$ for $j \in M$). For a given run of \AdaptiveAlg on a scaled-up instance, let $x_{ij}'$ be the \emph{empirical allocation} over good types: the fraction of auctions that buyer $i$ won for good type $j \in M$. That is, let $x_{ij}' = \left( \sum_{j' \in \tilde{M}_j} \tilde{x}_{ij'} \right) / \left( \sum_{i' \in \tilde{N},j' \in \tilde{M}_j} \tilde{x}_{ij'} \right)$, where $\tilde{x}_{ij'}$ is \AdaptiveAlg's output allocation ($\forall i \in \tilde{N}, j' \in \tilde{M}$).
For a given run of \AdaptiveAlg on a scaled-up instance, let a buyer's \emph{regret} be the difference between the buyer's maximum possible utility in hindsight (given fixed bids of other buyers) and the buyer's realized utility; let the \emph{max regret} be the maximum such regret across all buyers.

\section{Pacing Dynamics}
\label{se:dynamics}
While there are no dynamics in the definition of our game, we consider dynamics to evaluate the quality of the solutions
provided by the equilibrium concept.
It is instructive to consider the definition of pacing equilibrium in the
context of dynamics.  Specifically, suppose that the goods are sold
continuously over the period $[0,1]$.  I.e., at time $t \in [0,1]$ a
fraction $t$ of every good will have been sold.  Within each
infinitesimal slice of time a second price auction is used for each
infinitesimal fraction of a good; if there is a tie for a good then
it may be split into arbitrary fractions $x_{ijt}$ among the buyers,
summing to $1$ if there are positive bids.  In an ad auction, this
would correspond to the limit case where there are large numbers of
all types of impressions, and the distribution of such types does not
vary over time.  Then, we can consider $\alpha_i$ to change
dynamically over time (so we get $\alpha_{it}$).  Specifically, if a
buyer $i$ is currently spending at a rate that will overspend her
remaining budget over the remaining period $[t,1]$, we decrease
$\alpha_{it}$; if it will underspend {\em and} $\alpha_{it}<1$, then
we increase $\alpha_{it}$.  Call this the {\em limit dynamics model}.
\begin{definition}
Multipliers $\alpha_i \in [0,1]$ and fractions $x_{ij} \in [0,1]$
constitute a {\em stable solution} in the limit dynamics model if
setting $\alpha_{it}=\alpha_i$ and $x_{ijt}=x_{ij}$ (for all $i,j,t$)
satisfies the feasibility conditions for the $x_{ijt}$ and is
consistent with the dynamics (i.e., no $\alpha_{it}$ ever needs to be
adjusted up or down).
\end{definition}
\begin{proposition}
Multipliers $\alpha_i$ and fractions $x_{ij}$ constitute a stable
solution in the limit dynamics model if and only if they constitute a
pacing equilibrium.
\end{proposition}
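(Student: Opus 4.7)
The plan is to show both directions by unpacking the definitions, observing that the constant-in-time setting $\alpha_{it}=\alpha_i$, $x_{ijt}=x_{ij}$ makes per-instant spending for bidder $i$ constant at $\sum_j s_{ij}$ (where $s_{ij}=p_j x_{ij}$), so the total spend over $[0,1]$ equals exactly $\sum_j s_{ij}$. Thus the feasibility/allocation conditions for the $x_{ijt}$ correspond to the auction and allocation conditions of a pacing equilibrium, and the stability condition on $\alpha_{it}$ corresponds to the budget/multiplier complementarity condition.

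First I would handle the direction pacing equilibrium $\Rightarrow$ stable. Given $(\alpha_i, x_{ij}, p_j, s_{ij})$ satisfying Definition~\ref{def:pacing}, set $\alpha_{it}=\alpha_i$ and $x_{ijt}=x_{ij}$ for all $t$. Condition~1 of the pacing equilibrium gives that $\sum_i x_{ijt}\le 1$ with equality when positive bids exist, and that any $i$ receiving positive allocation is at a (possibly tied) highest bid; this is exactly the feasibility required inside each infinitesimal auction. Condition~2 pins down $p_j$ as the highest other bid, matching the second-price payment. For the dynamics, since spending is at constant rate $\sum_j s_{ij}$, after the interval $[t,1]$ the projected total spend is precisely $\sum_j s_{ij}$. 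Condition~3 then says this is at most $B_i$, so no downward adjustment of $\alpha_{it}$ is triggered; and if this spend is strictly less than $B_i$, then $\alpha_i=1$, so the upward-adjustment clause is disabled. Hence the solution is stable.

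For the converse, suppose $(\alpha_i, x_{ij})$ is stable. The feasibility requirements on $x_{ijt}$ at each instant $t$ are exactly the requirements on $x_{ij}$ in Condition~1 of Definition~\ref{def:pacing}: allocations sum to one whenever there is a positive bid, and positive allocations go only to bidders at (possibly tied) highest paced bids. The second-price rule at each instant yields $p_j$ equal to the highest other bid, giving Condition~2 and defining $s_{ij}=p_j x_{ij}$. Now the projected total spend of bidder $i$ over $[0,1]$ is $\sum_j s_{ij}$; stability forbids this quantity from strictly exceeding $B_i$ (else $\alpha_{it}$ must be decreased), so $\sum_j s_{ij}\le B_i$, and if $\sum_j s_{ij}<B_i$ then the upward-adjustment clause forces $\alpha_i=1$. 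This is precisely Condition~3.

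The main obstacle I anticipate is purely bookkeeping: making sure the ``projected overspend/underspend'' from the dynamics is formalized cleanly at $t=0$ and verifying the corner cases (items with no positive valuations, bidders who win nothing, and the borderline case of exact budget exhaustion, where neither adjustment triggers). Once one fixes the convention that at $t=0$ the projected total spend under constant rates equals $\sum_j s_{ij}$, the equivalence of the two sets of conditions is essentially term-by-term.
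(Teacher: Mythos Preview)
Your proposal is correct and follows essentially the same approach as the paper: both directions are handled by setting the time-dependent quantities to their constant values, noting that total spend over $[0,1]$ equals $\sum_j s_{ij}$, and then matching the feasibility/allocation conditions of the limit dynamics to Conditions~1--2 of Definition~\ref{def:pacing} and the no-adjustment condition to the budget/complementarity Condition~3. The paper's proof is slightly terser and does not explicitly discuss the corner cases you flag, but the underlying argument is the same term-by-term verification.
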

\begin{proof}{Proof.}
Suppose they constitute a pacing equilibrium.  Then, $x_{ijt}$ is
nonzero only if $\alpha_{ijt}v_{ij}=\alpha_{ij}v_{ij}$ is one of the
highest bids, and for any $j,t$, we have $\sum_{i} x_{ijt} = \sum_{i}
x_{ij} \leq 1$ with equality if there is at least one positive bid.
For a buyer with $\sum_j s_{ij}=B_i$ in the pacing equilibrium, we
also have $\int\limits_{t=0}^1 \sum_j s_{ijt} = 1 \cdot \sum_j s_{ij}
= B_i$, so the buyer is always exactly on track to spend her budget
and the multiplier need not be adjusted.  For a buyer with $\sum_j
s_{ij}<B_i$ in the pacing equilibrium we must have $\alpha_i=1$; we
have $\int\limits_{t=0}^1 \sum_j s_{ijt} = 1 \cdot \sum_j s_{ij} <
B_i$, so the buyer is always on track to underspend (which is fine
because $\alpha_{it}=\alpha_i=1$).  Hence they constitute a stable
solution.
Conversely, suppose they constitute a stable solution.  Then $\sum_{i}
x_{ij} = \sum_{i} x_{ij0} \leq 1$ with equality if there is at least
one positive bid. We also have $p_{j}=s_{ij}/x_{ij}
=s_{ij0}/x_{ij0}=p_{j0}$ which is the second-highest bid $\alpha_{i0}
v_{ij} = \alpha_{i} v_{ij}$.  For any buyer $i$, $\sum_j s_{ij} =
\int\limits_{t=0}^1 \sum_j s_{ijt} \leq B_i$.  Finally, if
$\alpha_i<1$ then $\sum_j s_{ij} = \int\limits_{t=0}^1 \sum_j s_{ijt}
= B_i$ (otherwise the multiplier would be adjusted and we would not
have $\alpha_{ijt}=\alpha_{ij}$ for all $t$).
\hfill\Halmos
\end{proof}

%
%
%

\end{document}